\newcommand{\set}[2]{\{\, #1 \, ; \, #2 \,\}}
\newcommand{\sset}[1]{\{ #1\}}
\newcommand{\abs}[1]{ | #1 | }
\newcommand{\floor}[1]{\lfloor #1 \rfloor}
\newcommand{\xalg}{x^{\mathit{alg}}}
\newcommand{\xstarstar}{x^{\mathit{optaux}}}
\newcommand{\xstarstarstar}{x^{\mathit{optrel}}}
\newcommand{\xss}{\xstarstar}
\newcommand{\xsss}{\xstarstarstar}
\newcommand{\Csss}{A_c^{\mathit{optrel}}}
\newcommand{\Usss}{A_u^{\mathit{optrel}}}
\newcommand{\NSW}{\mathrm{NSW}}
\newcommand{\eps}{\varepsilon}
\newcommand{\ot}{\leftarrow}
\newcommand{\barv}{\bar{v}}
\newcommand{\baru}{\bar{u}}
\newcommand{\pEF}{$p$-EF1\xspace}
\newcommand{\assign}{\mathrel{:=}}
\newcommand{\argmax}{\mathop{\mathrm{argmax}}}
\newcommand{\remove}[1]{}
\title{On Fair Division for Indivisible Items}\titlenote{A preliminary version of this article appeared in FSTTCS 2018.}
\author{Bhaskar Ray Chaudhury}
\affiliation{\institution{MPI for Informatics, Saarland Informatics Campus} \country{Germany}}
\author{Yun Kuen Cheung}
\affiliation{\institution{Singapore University of Technology and Design} \country{Singapure}}
\author{Jugal Garg}
\affiliation{\department{Dept. of Industrial and Enterprise Systems Engineering} \institution{Univ.~of Illinois at Urbana-Champaign} \country{USA}}
\author{Naveen Garg}
\affiliation{\department{Department of Computer Science} \institution{IIT Delhi} \country{India}}
\author{Martin Hoefer}
\affiliation{\department{Institut f\"ur Informatik} \institution{Goethe-Universit\"at Frankfurt am Main} \country{Germany}}
\author{Kurt Mehlhorn}
\affiliation{\institution{MPI for Informatics, Saarland Informatics Campus} \country{Germany}}
\begin{document}

\begin{abstract}  We consider the task of assigning indivisible goods to a set of agents in a fair manner. Our notion of fairness is Nash social welfare, i.e., the goal is to maximize the geometric mean of the utilities of the agents. Each good comes in multiple items or copies, and the utility of an agent diminishes as it receives more items of the same good. The utility of a bundle of items for an agent is the sum of the utilities of the items in the bundle. Each agent has a utility cap beyond which he does not value additional items. We give a polynomial time approximation algorithm that maximizes Nash social welfare up to a factor of $e^{1/{e}} \approx 1.445$. The computed allocation approximates envy-freeness up to one item up to a factor of $2 + \eps$. For instances without caps, it is approximately  Pareto-optimal. We also show that the upper bounds on the optimal Nash social welfare introduced in~\cite{Cole-Gkatzelis} and~\cite{DBLP:journals/corr/BarmanMV17} have the same value. \end{abstract}

\maketitle


\section{Introduction}
We consider the task of dividing indivisible goods among a set of $n$ agents in a fair manner. More precisely, we consider the following scenario. We have $m$ distinct goods. Goods are available in several copies or items; there are $k_j$ items of good $j$. The agents have decreasing utilities for the different items of a good, i.e., for all $i$ and $j$
\[   u_{i,j,1} \ge u_{i,j,2} \ge \ldots \ge u_{i,j,k_j} .\]
An allocation assigns the items to the agents. 
For an allocation $x$, $x_i$ denotes the multi-set of items assigned to agent $i$, and $m(j,x_i)$ denotes the multiplicity of good $j$ in $x_i$. Of course, $\sum_i m(j,x_i) = k_j$ for all $j$. The total utility of bundle $x_i$ for agent $i$  is given by
\[             u_i(x_i) =      \sum_j \sum_{1 \le \ell \le m(j,x_i)} u_{i,j,\ell}.\]
Each agent has a utility cap $c_i$. The capped utility of bundle $x_i$ for agent $i$ is defined as
\[        \baru_i(x_i) = \min(c_i,u_i(x_i)).    \]
Our notion of fairness is \emph{Nash social welfare} ($\NSW$)~\cite{Nash}, i.e., the goal is to maximize the geometric mean 
\[                   \NSW(x) = \left(\prod_{1 \le i \le n}  \baru_i(x_i) \right)^{{1}/{n}} \]
of the capped utilities. All utilities and caps are assumed to be integers.  We give a polynomial-time approximation algorithm with approximation guarantee $e^{{1}/{e}} + \eps \approx 1.445 + \eps$ for any positive $\eps$. 

The problem has a long history. For divisible goods, maximizing Nash Social Welfare (NSW) for any set of valuation functions can be expressed via an Eisenberg-Gale program~\cite{EG59}. Notably, for \emph{additive valuations} ($c_i = \infty$ for each agent $i$ and $k_j = 1$ for each good $j$) this is equivalent to a Fisher market with identical budgets. In this way, maximizing NSW is achieved via the well-known fairness notion of competitive equilibrium with equal incomes (CEEI)~\cite{Moulin03}. 

For indivisible goods, the problem is NP-complete~\cite{DBLP:journals/aamas/NguyenNRR14} and APX-hard~\cite{DBLP:journals/ipl/Lee17}. Several constant-factor approximation algorithms are known for the case of additive valuations. They use different approaches. 

The first one was pioneered by Cole and Gkatzelis~\cite{Cole-Gkatzelis} and uses spending-restricted Fisher markets. Each agent comes with one unit of money to the market. Spending is restricted in the sense that no seller wants to earn more than one unit of money. If the price $p$ of a good is higher than one in equilibrium, only a fraction ${1}/{p}$ of the good is sold. Cole and Gkatzelis showed how to compute a spending restricted equilibrium in polynomial time and how to round its allocation to an integral allocation with good NSW. In the original paper they obtained an approximation ratio of $2 e^{{1}/{e}} \approx 2.889$. Subsequent work~\cite{CDGJMVY17} improved the ratio to $2$. The approximation ratio in~\cite{Cole-Gkatzelis} is shown against an upper bound on the optimal Nash social welfare which we refer to as CG-bound. In~\cite{CDGJMVY17}, an alternative bound is introduced and shown to have the same value as the CG-bound.

The second approach is via stable polynomials. Anari et al.~\cite{AnariGSS17}~obtained an approximation factor of $e$. 

The third approach, introduced by Barman et al.~\cite{DBLP:journals/corr/BarmanMV17}, is via  integral allocations that are envy-free up to one good. An allocation is envy-free up to one good if for any two agents $i$ and $k$ there is a good $j$ such that $u_i(x_k - j) \le u_i(x_i)$, i.e., after removal of one good from $k$'s bundle its utility for $i$ is no larger than the utility of $i$'s bundle for $i$. Caragiannis et al.~\cite{CaragiannisKMP016} have shown that an allocation maximizing NSW is Pareto-optimal and envy-free up to one good. For a price vector $p$ for the goods, the price $P(x_i)$ of a bundle is the sum of the prices of the goods in the bundle. An allocation is almost price-envy-free up to one good ($\eps$-$p$-EF1) if $P(x_k - j) \le (1 + \eps) P(x_i)$ for all agents $i$ and $k$ and some good $j$, where $\eps$ is an approximation parameter. An allocation is MBB (maximum bang per buck) if $j \in x_i$ implies $u_{ij}/p_j = \max_\ell u_{i\ell}/p_\ell$ for all $j$ and $i$. Barman et al.~\cite{DBLP:journals/corr/BarmanMV17} studied allocations that are MBB and almost price-envy-free up to one good. They showed that such allocations are almost envy-free up to one good\footnote{\label{footnoteBarman}Consider two bundles $x_k$ and $x_i$ and assume $P(x_k - j) \le (1 + \eps)P(x_i)$ for some $j \in x_k$. Let $\alpha_i = \max_\ell u_{i\ell}/p_\ell$. Then $u_i(x_k - j) = \sum_{\ell \in x_k - j} u_{i\ell} \le \alpha_i \sum_{\ell \in x_k - j} p_{\ell} \le (1 + \eps) \alpha_i \sum_{\ell \in x_i} p_\ell = (1 + \eps) \sum_{\ell \in x_i} u_{i\ell}$.} and approximate NSW up to a factor $e^{{1}/{e}} + \eps \approx 1.445 + \eps$. They also showed how to compute such an allocation in polynomial time. The approximation ratio in~\cite{DBLP:journals/corr/BarmanMV17} is shown against an upper bound on the optimal Nash social welfare which we refer to as BMV-bound. We show that it has the same value as the  CG-bound.

There are also constant-factor approximation algorithms beyond additive utilities. 

Garg et al.~\cite{GargHoeferMehlhornNashWelfare} studied budget-additive utilities ($k_j = 1$ for all goods $j$ and arbitrary $c_i$). They showed how to generalize the Fisher market approach and obtained an $2e^{{1}/{2e}} \approx 2.404$-approximation. 

Anari et al.~\cite{Anari:SODA2018} investigated multi-item concave utilities ($c_i = \infty$ for all $i$ and $k_j$ arbitrary). They generalized the Fisher market and the stable polynomial approach and obtained approximation factors of $2$ and $e^2$, respectively.

We show that the price-envy-free allocation approach can handle both generalizations simultaneously. We obtain an approximation ratio of $e^{{1}/{e}} + \eps \approx 1.445 + \eps$. The allocation computed by our algorithm guarantees $u_i(x_k - j) \le (2 + \eps)u_i(x_i)$ for any two agents $i$ and $k$, i.e., it approximates envy-freeness up to one item up to a factor of essentially two. For instances without utility caps, it is approximately Pareto-optimal\footnote{The algorithm rounds each non-zero utility to the next larger power of $r = 1 + \eps$. For instances without utility caps, it computes a Pareto-optimal allocation for the rounded utilities. It also computes a scaling factor $\alpha_i$ for each agent $i$ such that for each integral allocation $y$ of the goods $\sum_i u_i(y_i)/\alpha_i \le r \cdot \sum_i u_i(\xalg_i)/\alpha_i$. For $r = 1$, this would imply Pareto-optimality of $\xalg$. For instances without caps, it can efficiently compute a Pareto-optimal solution if all numbers are polynomially bounded. In contrast, even for identical agents with additive utilities and caps, computing a Pareto-optimal solution is strongly NP-hard via a standard reduction from 3-PARTITION.} The approach via price-envy-freeness does not only yield better approximation ratios and guarantees for individual agents, it is, in our opinion, also simpler to state and simpler to analyze. 

The paper is structured as follows. In Section~\ref{capped} we give the algorithm and analyze its approximation ratio (Section~\ref{approximation ratio}), guarantee to individual agents (Section~\ref{individual guarantees}), and running time (Section~\ref{running time}). In Section~\ref{analysis is tight} we  show that the analysis is essentially tight by establishing a lower bound of 1.44 on the approximation ratio of the algorithm. In Section~\ref{Certification} we discuss  certification of the approximation ratio
and in Section~\ref{EnvyFreeness} we show that for the multi-copy case and the capped case optimal allocations  are not necessarily envy-free up to one good. In Section~\ref{Large Markets} we obtain close-to-one approximation ratios for large markets, where for any agent the value of any good is only an $\epsilon/n$-fraction of the value of all goods. In Section~\ref{Equivalence} we show the equality of the CG- and BMV-bounds.

\section{Algorithm and Analysis}\label{capped}

Let us recall the setting. Items are indivisible. There are $n$ agents and $m$ goods. There are $k_j$ items or copies of good $j$. Let $M = \sum_j k_j$ be the total number of items. The agents have decreasing utilities for the different items of a good, i.e., for all $i$ and $j$
\[   u_{i,j,1} \ge u_{i,j,2} \ge \ldots \ge u_{i,j,k_j} .\]
For an allocation $x$, $x_i$ denotes the multi-set of items assigned to agent $i$, and $m(j,x_i)$ denotes the multiplicity of good $j$ in $x_i$. The total utility of bundle $x_i$ for agent $i$ is given by
\[             u_i(x_i) =           \sum_j \sum_{1 \le \ell \le m(j,x_i)} u_{i,j,\ell}.\]
Each agent has a utility cap $c_i$. The capped utility of bundle $x_i$ for agent $i$ is defined as
\[        \baru_i(x_i) = \min(c_i,u_i(x_i)).    \]
Following~\cite{GargHoeferMehlhornNashWelfare}, we assume w.l.o.g.~$u_{i,j,\ell} \le c_i$ for all $i$, $j$, and $\ell$. In the algorithm, we ensure this assumption by capping every $u_{i,*,*}$ at $c_i$. All utilities and caps are assumed to be integers. 

\subsection{A Reduction to Rounded Utilities and Caps}
Let $r \in (1,{3}/{2}]$. For every non-zero utility $u_{i,j,\ell}$ let $v_{i,j,\ell}$ be the next larger power of $r$. For zero utilities $v$ and $u$ agree. Similarly, for $c_i$ let $d_i$ be the next larger power of $r$. It is well-known that it suffices to solve the rounded problem with a good approximation guarantee.

\begin{lemma} Let $x$ approximate the NSW for the rounded problem up to a factor of $\gamma$. Then $x$ approximates the NSW for the original problem up to a factor $\gamma r$. \end{lemma}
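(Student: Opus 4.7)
The plan is to push the rounding factor $r$ through the definitions of capped utility and NSW in a coordinate-wise manner and then compare the two optimization problems.

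First I would establish the pointwise relation between the rounded and unrounded capped utilities. Because $v_{i,j,\ell}$ is obtained from $u_{i,j,\ell}$ by rounding up to the nearest power of $r$, we have $u_{i,j,\ell} \le v_{i,j,\ell} \le r \cdot u_{i,j,\ell}$ for every triple $(i,j,\ell)$, and likewise $c_i \le d_i \le r \cdot c_i$. Summing the first inequality over any bundle $x_i$ yields $u_i(x_i) \le v_i(x_i) \le r \cdot u_i(x_i)$, where $v_i$ denotes the uncapped utility under the rounded values. I would then argue that the $\min$ in the cap preserves these inequalities: defining $\barv_i(x_i) = \min(d_i, v_i(x_i))$ analogously to $\baru_i$, one gets
\[
\baru_i(x_i) \;\le\; \barv_i(x_i) \;\le\; r \cdot \baru_i(x_i).
\]
The right inequality is the only mildly subtle point: $r \cdot \baru_i(x_i) = \min(r c_i, r u_i(x_i)) \ge \min(d_i, v_i(x_i)) = \barv_i(x_i)$, using $r c_i \ge d_i$ and $r u_i(x_i) \ge v_i(x_i)$ together with the monotonicity of $\min$.

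Next I would lift these per-agent inequalities to NSW. Writing $\NSW^u$ and $\NSW^v$ for the geometric means of $\baru_i(x_i)$ and $\barv_i(x_i)$ respectively (and treating the zero case trivially), the above yields, for any allocation $y$,
\[
\NSW^u(y) \;\le\; \NSW^v(y) \;\le\; r \cdot \NSW^u(y).
\]
Let $\xs$ be optimal for the original problem and $\xss$ optimal for the rounded problem. Then
\[
\NSW^v(\xss) \;\ge\; \NSW^v(\xs) \;\ge\; \NSW^u(\xs),
\]
so the optimum of the rounded problem dominates the optimum of the original problem.

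Finally, given that $x$ is a $\gamma$-approximation for the rounded problem, I would chain these inequalities:
\[
\NSW^u(x) \;\ge\; \frac{1}{r}\,\NSW^v(x) \;\ge\; \frac{1}{\gamma r}\,\NSW^v(\xss) \;\ge\; \frac{1}{\gamma r}\,\NSW^u(\xs),
\]
which is the claimed $\gamma r$ approximation. I do not anticipate any real obstacle; the only care needed is the handling of the $\min$ in the capped utilities and the observation that the geometric mean is homogeneous, so a uniform per-agent factor $r$ passes through as a single factor $r$ in NSW.
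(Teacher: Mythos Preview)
Your proof is correct and follows essentially the same approach as the paper: both use the componentwise sandwich $u \le v \le r u$, $c \le d \le r c$ to bound $\NSW$ under rounded data between $\NSW$ and $r\cdot\NSW$ under original data, then chain through the $\gamma$-approximation guarantee. Your treatment is slightly more explicit (you spell out the monotonicity of $\min$ and introduce the rounded optimum as an intermediate), but the argument is the same.
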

\begin{proof} Let $x^*$ be an optimal allocation for the original problem.  We write $\NSW(x^*,u,c)$ for the Nash social welfare of the allocation $x^*$ with respect to utilities $u$ and caps $c$. Define $\NSW(x,u,c)$,
$\NSW(x^*,v,d)$, and  $\NSW(x,v,d)$ analogously. We need to bound ${\NSW(x^*,u,c)}/{\NSW(x,u,c)}$. Since $u \le v$ and $c \le d$ componentwise, $\NSW(x^*,u,c) \le \NSW(x^*,v,d)$. Since $x$ approximates the NSW for the rounded problem up to a factor $\gamma$, $\NSW(x^*,v,d) \le \gamma  \NSW(x,v,d)$. Since $v \le r u$ and $d \le r c$ componentwise,  $\NSW(x,v,d) \le r \NSW(x,u,c)$. Thus 
\[ \frac {\NSW(x^*,u,c)}{\NSW(x,u,c)} \le \frac{\gamma \NSW(x,v,d)}{{\NSW(x,v,d)}/{r}} = \gamma r.\]
\end{proof}

\subsection{The Algorithm} 

Barman et al.~\cite{DBLP:journals/corr/BarmanMV17} gave a highly elegant approximation algorithm for the case of a single copy per good and no utility caps. We generalize their approach. The algorithm uses an approximation parameter $\eps \in (0,{1}/{4}]$. Let $r = 1 + \eps$. The nonzero utilities are assumed to be powers of $r$. 

The algorithm maintains an integral assignment $x$, a price $p_j$ for each good, and an MBB-ratio\footnote{In the case of one copy per good, $\alpha_i = {u_{i,j}}/{p_j}$ whenever (the single copy of) good $j$ is assigned to $i$ and $\alpha_i \ge u_{i,\ell}/p_\ell$ for all goods $\ell$. Thus $\alpha_i$ is the maximum utility per unit of money (maximum bang per buck (MBB)) that agent $i$ can get.} $\alpha_i$ for each agent. Of course, $\sum_i m(j,x_i) = k_j$ for each good $j$. The prices, MBB-ratios, and multiplicity of goods in bundles are related through the following inequalities:
\begin{equation}\label{interval for alpha}     \frac{u_{i,j,m(j,x_i) +1}}{p_j} \le \alpha_i \le  \frac{u_{i,j,m(j,x_i)}}{p_j}, \end{equation}
i.e., if ${u_{i,j,\ell}}/{p_j} > \alpha_i$, then at least $\ell$ copies of $j$ are allocated to agent $i$ and if ${u_{i,j,\ell}}/{p_j} < \alpha_i$, then less than $\ell$ copies of $j$ are allocated to agent $i$. If no copy of good $j$ is assigned to $i$, the upper bound for $\alpha_i$ is infinity. If all copies of good $j$ are assigned to $i$, the lower bound for $\alpha_i$ is zero. Note that if $\alpha_i$ is equal to its upper bound in (\ref{interval for alpha}), we may take one copy of $j$ away from $i$ without violating the inequality as the upper bound becomes the new lower bound. Similarly, if $\alpha_i$ is equal to its lower bound in (\ref{interval for alpha}), we may assign an additional copy of $j$ to $i$ without violating the inequality as the lower bound becomes the new upper bound. 
Since (\ref{interval for alpha}) must hold for every good $j$, $\alpha_i$ must lie in the intersection of the intervals for the different goods $j$, i.e.,
\[   \max_j \frac{u_{i,j,m(j,x_i) +1}}{p_j} \le \alpha_i \le  \min_j \frac{u_{i,j,m(j,x_i)}}{p_j}. \]
The value of bundle $x_i$ for $i$ is given by\footnote{In the case of one copy per good, $P_i(x_i) = {u_i(x_i)}/{\alpha_i}= \sum_{j \in x_i} p_j$ is the total price of the goods in the bundle. We reuse the letter $P$ for the value of a bundle, although $P_i(x_i) = 1/\alpha_i \cdot \sum_j \sum_{1 \le \ell \le m(j,x_i)} u_{i,j,\ell}$ is no longer the total price of the goods in the bundle.}
\begin{equation}\label{price of a bundle} 
P_i(x_i) = \frac{u_i(x_i)}{\alpha_i} = \frac{1}{\alpha_i} \sum_j \sum_{1 \le \ell \le m(j,x_i)} u_{i,j,\ell}.
\end{equation}
Definitions (\ref{interval for alpha}) and (\ref{price of a bundle}) are inspired by Anari et al~\cite{Anari:SODA2018}. We say that $\alpha_i$ is equal to the upper bound for the pair $(i,j)$ if $\alpha_i$ is equal to its upper bound in (\ref{interval for alpha}) and that $\alpha_i$ is equal to the lower bound for the pair $(i,j)$ if $\alpha_i$ is equal to its lower bound in  (\ref{interval for alpha}).

An agent $i$ is \emph{capped} if $u_i(x_i) \ge c_i$ and is \emph{uncapped} otherwise.

\begin{algorithm}[!t]
\small
\caption{\label{alg:FPTAS} Approximate Nash Social Welfare for Multi Item Concave Utilities with Caps}
\DontPrintSemicolon
\SetKwInOut{Input}{Input}\SetKwInOut{Output}{Output}
\Input{Fair Division Problem given by utilities $u_{ij\ell}$, $i \le n$, $j \le m$, $\ell \le k_j$, utility caps $c_i$, and approximation parameter $\eps \in (0,1/4]$. Let $r = 1 + \eps$. Nonzero $u_{ij}$'s and $c_i$'s are powers of $r$.}
\Output{Price vector $p$ and $4\eps$-\pEF integral allocation $x$ }
\For{$i,j,\ell$}{$u_{i,j,\ell} \ot \min(c_i,u_{i,j,\ell})$\;}
\For{$j \in G$}{
\For{$\ell \in [k_j]$ in increasing order}{
assign the $\ell$-th copy of $j$ to $i_0 = \argmax_i u_{i,j,m(j,x_i) + 1}$;
}
Set $p_j \ot u_{i_0,j,m(j,x_{i_0}) }$, where $i_0$ is the agent to which the $k_j$-th copy of $j$ was assigned\;}
\For{$i \in A$}{$\alpha_i = 1$\;}
\While{true}{
\If{allocation $x$ is $\eps$-\pEF}{{\bf break} from the loop and terminate\;}
Let $i$ be a  least spending uncapped agent\;
Perform a BFS in the tight graph starting at $i$\;
\If{the BFS-search discovers an improving path starting in $i$, let $P =  (i=a_0,g_1,a_1,\ldots,g_h,a_{h})$ be a shortest such
path}{
Set $\ell \ot h$\; \label{begin of then}
\While{$\ell > 0$ and $P_{a_\ell}(x_{a_\ell} - g_\ell) > (1 + \eps) P_i(x_i)$}{
remove $g_\ell$ from $x_{a_\ell}$ and assign it to $a_{\ell - 1}$; $\ell \ot \ell - 1$\; \label{end of then}}
}
\Else{
Let $S$ be the set of goods and agents that can be reached from $i$ in the tight graph\;
$\beta_1  \ot \min_{k \in S;\ j \not\in S} {\alpha_k}/{({u_{k,j,m(j,x_k)+1}}/{p_j})}$\hfill (add a good to $S$)\;
$\beta_2 \ot \min_{k \not\in S;\ j \in S} {(u_{k,j,m(j,x_k)}/p_j)}/{\alpha_k}$   \hfill  (add an agent to $S$)\;
$\beta_3 \ot  \frac{1}{r^2 P_i(x_i)} \max_{k \not\in S} \min_{j \in x_k} P_k(x_k - j)$  \hfill ($i$ is happy)\;
$\beta_4 \ot r^s$, where $s$ is the smallest integer such that $r^{s-1} \le {P_h(x_h)}/{P_i(x_i)}< r^s$ and $h$ is the least spending uncapped agent outside $S$  \hfill (new least spender)\;
$\beta \ot \min(\beta_1,\beta_2,\max(1,\beta_3),\beta_4)$\;
multiply all prices of goods in $S$ by $\beta$ and divide all MBB-values of agents in $S$ by $\beta$\;
\If{$\beta_3 \le \min(\beta_1,\beta_2,\beta_4)$}{break from the while-loop\;}
}
}
\end{algorithm}

The algorithm starts with a greedy assignment. For each good $j$, it assigns each copy to the agent that values it most. The price of each good is set to the utility of the assignment of its last copy and all MBB-values are set to one. 
Note that this setting guarantees (\ref{interval for alpha}) for every pair $(i,j)$. Also, all initial prices and MBB-values are powers of $r$. It is an invariant of the algorithm that prices are powers of $r$. Only the final price increase in the main-loop may destroy this invariant. 

After initialization, the algorithm enters a loop. We need some more definitions. An agent $i$ is a \emph{least spending} uncapped agent if it is uncapped and $P_i(x_i) \le P_k(x_k)$ for every other uncapped agent $k$. 
An agent $i$ $\eps$-$p$-envies agent $k$ up to one item if $P_k(x_k - j) > (1 + \eps)\cdot P_i(x_i)$ for every good $j \in x_k$. Recall that $x_k$ is a multi-set. In the multi-set $x_k - j$, the number of copies of good $j$ is reduced by one, i.e., $m(j, x_k - j) = m(j,x_k) - 1$. Therefore $P_k(x_k - j) = P_k(x_k) - {u_{k,j,m(j,x_k)}}/{\alpha_k}$.
An allocation is \emph{$\eps$-$p$-envy free up to one item ($\eps$-$p$-EF1)} if for every uncapped agent $i$ and every other agent $k$ there is a good $j$ such that $P_k(x_k - j) \le (1 + \eps)P_i(x_i)$. 

We also need the notion of the \emph{tight graph}. It is a directed bipartite graph with the agents on one side and the goods on the other side. We have a directed edge $(i,j)$ from agent $i$ to good $j$ if 
$\alpha_{i} = {u_{ijm(j,x_i) + 1}}/{p_j}$, i.e., $\alpha_{i}$ is at its lower bound for the pair $(i,j)$. We have a directed edge $(j,i)$ from good $j$ to agent $i$ if $\alpha_{i} = {u_{ijm(j,x_i) }}/{p_j}$, i.e., $\alpha_{i}$ is at its upper bound for the pair $(i,j)$. Note that necessarily $m(j,x_i) \ge 1$ in the latter case, since otherwise good $j$ does not impose an upper bound for $\alpha_i$. 

An \emph{improving path} starting at an agent $i$ is a simple path $P = (i=a_0,g_1,a_1,\ldots,g_h,a_{h})$ in the tight graph starting at $i$ and ending at another agent $a_h$ such that $P_{a_h}(x_{a_h} - g_h) > (1 + \eps) P_i(x_i)$ and 
$P_{a_\ell}(x_{a_\ell} - g_\ell) \le (1 + \eps) P_i(x_i)$ for $1 \le \ell < h$. 

Let $i$ be the least spending uncapped agent. We perform a breadth-first search in the tight graph starting at $i$.
If the BFS discovers an improving path starting at $i$, we use the shortest such path to improve the allocation. Note that if $i$ $\eps$-$p$-envies some node that is reachable from $i$ in the tight graph then the BFS will discover an improving path. 

\begin{figure}
\begin{center}
\includegraphics[width=0.7\textwidth]{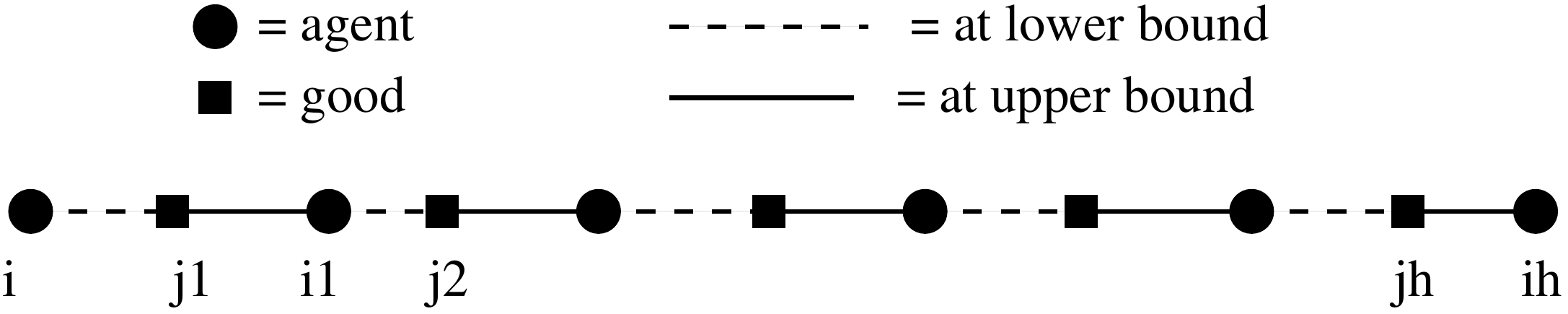}
\end{center}
\caption{\label{alternating path} An improving path. Agents and goods alternate on the path and the path starts and ends with an agent. For the solid edges $(j,i)$, $\alpha_i$ is at its upper bound for the pair $(i,j)$ and for the dashed edges $(i,j)$, $\alpha_i$ is at its lower bound for the pair $(i,j)$.}
\end{figure}

In the main loop, we distinguish cases according to whether BFS discovers an improving path starting at $i$ or not. 

Assume first that BFS discovers the improving path $P = (i=a_0,g_1,a_1,\ldots,g_h,a_{h})$. We take $g_h$ away from $a_h$ and assign it to $a_{h-1}$. If we now have
$P_{a_{h-1}}(x_{a_{h-1}} + g_h - g_{h-1}) \le  (1 + \eps) P_i(x_i)$ we stop. Otherwise, we take $g_{h-1}$ away from $a_{h-1}$ and assign it to $a_{h-2}$. If we now have $P_{a_{h-2}}(x_{a_{h-2}} + g_{h-1} - g_{h-2}) \le  (1 + \eps) P_i(x_i)$ we stop. Otherwise, \ldots. We continue in this way until we stop or assign $g_1$ to $a_0$. In other words, let $h' < h$ be maximum such that 
$P_{a_{h'}}(x_{a_{h'}} + g_{h'+1} - g_{h'}) \le (1 + \eps)P_i(x_i)$. If $h'$ exists, then we take
a copy of $g_\ell$ away from $a_\ell$ and assign it to $a_{\ell - 1}$ for $h' < \ell \le h$. If $h'$ does not exist, we do so for $1 \le \ell \le h$. Let us call the above a sequence of swaps. 

\begin{lemma}\label{effect of a sequence of swaps} Consider an execution of lines (\ref{begin of then}) to (\ref{end of then}) and let $h'$ be the final value of $\ell$ (this agrees with the definition of $h'$ in the preceding paragraph). Let $x'$ be the resulting allocation. Then $x'_\ell = x_\ell$ for $0 \le \ell < h'$, $x'_{h'} = x_{h'} + g_{h'+1}$, $x'_\ell = x_\ell + g_{\ell + 1} - g_\ell$ for $h' < \ell < h$, and $x'_h = x_h - g_h$. Also, 
\begin{itemize}
\item $P_{a_h}(x_{a_h}) \ge P_{a_h}(x'_{a_h}) > (1 + \eps) P_i(x_i)$, 
\item $P_{a_{h'}}(x'_{a_{h'}} - g_{h'}) = P_{a_{h'}}(x_{a_{h'}} + g_{h'+1} - g_{h'}) \le (1 + \eps)P_i(x_i)$ if $h' \ge 1$ 
\item $P_{a_0}(x'_{a_0}  - g_1) = P_{a_0}(x_{a_0}) \le (1 + \eps)P_i(x_i)$ if $h' = 0$. 
\item 
$P_{a_\ell}(x'_{a_\ell}) = P_{a_\ell}(x_{a_{\ell}} + g_{\ell+1} - g_{\ell}) > (1 + \eps)P_i(x_i)$ and $P_{a_\ell}(x'_{a_\ell} - g_{\ell+1}) = P_{a_\ell}(x_{a_\ell} - g_\ell) \le (1 + \eps)P_i(x_i)$ for $h' < \ell < h$. 
\item $P_{a_\ell}(x'_{a_\ell} - g_\ell) = P_{a_\ell}(x_{a_\ell} - g_\ell) \le (1 + \eps)P_i(x_i)$ for $0 \le \ell < h'$. 
\end{itemize} 
 \end{lemma}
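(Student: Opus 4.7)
The lemma is a bookkeeping result: it records, in closed form, the net effect on the allocation of one execution of the inner while loop in lines (\ref{begin of then})--(\ref{end of then}). My plan is to trace through that loop once, describe the change to each bundle $x_{a_\ell}$, and then verify each bulleted relation by direct inspection.

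The trace is short. The loop starts at $\ell = h$, and its first iteration is guaranteed to execute because the improving-path hypothesis gives $P_{a_h}(x_{a_h} - g_h) > (1+\eps) P_i(x_i)$. A generic iteration at index $\ell$ transfers one copy of $g_\ell$ from $a_\ell$ to $a_{\ell-1}$ and then decrements $\ell$. The subsequent test is evaluated on the \emph{current} (already-modified) bundle; in particular, when the loop reaches an index $\ell < h$, the quantity compared to $(1+\eps) P_i(x_i)$ is $P_{a_\ell}(x_{a_\ell} + g_{\ell+1} - g_\ell)$, since $g_{\ell+1}$ has just been moved in. The loop halts at $h'$, which is either $0$ or the largest index $< h$ for which this test fails. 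It follows that each agent $a_\ell$ with $h' < \ell < h$ is touched twice (losing $g_\ell$ and gaining $g_{\ell+1}$), $a_{h'}$ only gains $g_{h'+1}$, $a_h$ only loses $g_h$, and every $a_\ell$ with $\ell < h'$ is left untouched; the four displayed equalities for $x'_{a_\ell}$ read off from these four cases.

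It then remains to verify the five inequalities. The chain $P_{a_h}(x_{a_h}) \ge P_{a_h}(x'_{a_h}) > (1+\eps) P_i(x_i)$ uses monotonicity of $P_{a_h}$ (for the $\ge$, since $x'_{a_h} = x_{a_h} - g_h$) and the improving-path hypothesis at $\ell = h$ (for the $>$). The second inequality is literally the exit condition that stopped the loop at $\ell = h' \ge 1$. The third, in the case $h' = 0$, reduces to $P_i(x_i) \le (1+\eps) P_i(x_i)$ since $a_0 = i$. For $h' < \ell < h$ the strict lower bound is the swap condition that forced the iteration at $\ell$, while the upper bound is the improving-path hypothesis (available since $1 \le \ell < h$). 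For $\ell < h'$ the bundle is untouched, so the equality is trivial and the bound is once more the improving-path hypothesis.

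I do not foresee any genuine obstacle. The only care that must be taken is to distinguish the current bundle at the moment the while loop evaluates its test from the original bundle $x_{a_\ell}$ appearing in the improving-path definition; they differ precisely by the copy of $g_{\ell+1}$ that was transferred in during the preceding iteration.
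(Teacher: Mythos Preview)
Your proposal is correct and matches the paper's approach: the paper's proof is literally ``Immediate from the above,'' and you have simply spelled out that immediacy by tracing the loop and reading off each equality and inequality from the improving-path definition and the loop's continuation/exit conditions. The one point worth flagging is the last bullet at $\ell = 0$, where $g_0$ is undefined; this is a wrinkle in the lemma's stated range rather than in your argument.
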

\begin{proof} Immediate from the above.\end{proof}

If $i$ is still the least spending uncapped agent after an execution of lines (\ref{begin of then}) to (\ref{end of then}), we search for another improving path starting from $i$. We will show below that $i$ can stay the least spending agent for at most $n^2M$ iterations. Intuitively this holds because for any agent (factor $n$) and any fixed length shortest improving path (factor $n$), we can have at most $M$ iterations for which the shortest improving path ends in this particular agent. 

We come to the else-case, i.e., BFS does not discover an improving path starting at $i$. This implies that $i$ does not $\eps$-$p$-envy any agent that it can reach in the tight graph. We then increase some prices and decrease some MBB-values. Let $S$ be the set of agents and goods that can be reached from $i$ in the tight graph. 

\begin{lemma}\label{closure properties of $S$} If a good $j$ belongs to $S$ and $\alpha_k$ is at its upper bound for the pair $(k,j)$, then $k$ belongs to $S$. If an agent $k$ belongs to $S$ and $\alpha_k$ is at its lower bound for the pair $(k,j)$, then $j$ belongs to $S$. \end{lemma}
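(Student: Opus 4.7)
The plan is to show that the two closure properties follow immediately from unpacking the definition of the tight graph and using that $S$ is, by construction, the set of vertices reachable from $i$ under directed BFS, hence closed under taking successors: whenever $v \in S$ and there is a directed edge $(v,w)$ in the tight graph, we have $w \in S$.

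For the first statement, suppose $j \in S$ is a good and $\alpha_k$ is at its upper bound for the pair $(k,j)$, i.e., $\alpha_k = u_{k,j,m(j,x_k)}/p_j$. By the definition of the tight graph, this equality is exactly the condition that places a directed edge $(j,k)$ from the good $j$ to the agent $k$. Since $j$ is reachable from $i$, concatenating a path from $i$ to $j$ with the edge $(j,k)$ shows $k$ is reachable from $i$, so $k \in S$.

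For the second statement, suppose $k \in S$ is an agent and $\alpha_k$ is at its lower bound for the pair $(k,j)$, i.e., $\alpha_k = u_{k,j,m(j,x_k)+1}/p_j$. By the definition of the tight graph, this yields a directed edge $(k,j)$ from the agent $k$ to the good $j$. Since $k$ is reachable from $i$ in the tight graph, so is $j$, and therefore $j \in S$.

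There is no real obstacle here: the lemma is essentially a restatement of the fact that the set of vertices reached by BFS in a directed graph is forward-closed, combined with the edge-relations spelled out in the definition of the tight graph. The only thing to be slightly careful about is that the ``upper bound'' case of the tight graph requires $m(j,x_k) \ge 1$ (otherwise good $j$ imposes no upper bound on $\alpha_k$), but this is already built into the definition of the edge $(j,k)$ and so causes no difficulty.
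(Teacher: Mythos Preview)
Your proof is correct and follows essentially the same approach as the paper: both arguments observe that the hypotheses give a directed edge in the tight graph (from $j$ to $k$ in the first case, from $k$ to $j$ in the second) and then use that $S$, being the set of vertices reachable from $i$, is closed under out-neighbors. The paper phrases this via a small case split on whether the target vertex already lies on the path from $i$, but this is the same content as your forward-closure argument.
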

\begin{proof} Consider any good $j \in S$. Since $j$ belongs to $S$, there is an alternating path starting in $i$ and ending in $j$. If the path contains $k$, $k$ belongs to $S$. If the path does not contain $k$, we can extend the path by $k$. In either case, $k$ belongs to $S$. 

Consider any agent $k \in S$. Since $k$ belongs to $S$, there is an alternating path starting in $i$ and ending in $k$. If the path contains $j$, $j$ belongs to $S$. If the path does not contain $j$, we can extend the path by $j$. In either case, $j$ belongs to $S$. 
\end{proof}

We multiply all prices of goods in $S$ and divide all MBB-values of agents in $S$ by a common factor $t \ge 1$. What is the effect?
\begin{itemize}
\item Let ${u_{k,j,(j,x_k) +1}}/p_j \le \alpha_k \le {u_{k,j,m(j,x_k)}}/{p_j}$ be the inequality (\ref{interval for alpha}) for the pair $(k,j)$. The endpoints do not move if $j \not\in S$ and are divided by $t$ for $j \in S$. Similarly, $\alpha_k$ does not move if $k \not\in S$ and are divided by $t$ if $k \in S$. So in order to preserve the inequality, we must have: If $\alpha_k$ is equal to the upper endpoint and $p_j$ moves, i.e., $j \in S$, then $\alpha_k$ must also move. If $\alpha_k$ is equal to the lower endpoint and $\alpha_k$ moves then $p_j$ must also move. Both conditions are guaranteed by Lemma~\ref{closure properties of $S$}.
\item If $k$ and $j$ are both in $S$, then $\alpha_k$ and the endpoints of the interval for $(k,j)$ move in sync. So agents and goods reachable from $i$ in the tight graph, stay reachable. 
\item If $k \not\in S$, there might be a $j \in S$ such that $\alpha_k$ becomes equal to the right endpoint of the interval for $(k,j)$. Then $k$ is added to $S$. 
\item If $k \in S$, there might be a $j \not \in S$ such that $\alpha_k$ becomes equal to the left endpoint of the interval for $(k,j)$. Then $j$ is added to $S$. 
\item For agents in $S$, $P_k(x_k)$ is multiplied by $t$. For agents outside $S$, $P_k(x_k)$ stays unchanged. 
\end{itemize}

How is the common factor $t$ chosen? There are four limiting events. Either $S$ grows and this may happen by the addition of a good (factor $\beta_1$) or an agent (factor $\beta_2$); or $P_i(x_i)$ comes close to the largest value of $\min_{j \in x_k} P_k(x_k - j)$ for any other agent (factor $\beta_3$), or $P_i(x_i)$ becomes larger than $P_h(x_h)$ for some uncapped agent $h$ outside $S$ (factor $\beta_4$). Since we want prices to stay powers of $r$, $\beta_4$ is chosen as a power of $r$. The factor $\beta_3$ might be smaller than one. Since we never want to decrease prices, we take the maximum of $1$ and $\beta_3$.

\begin{lemma} Prices and MBB-values are powers of $r$, except maybe at termination. \end{lemma}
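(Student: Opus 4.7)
The plan is a straightforward induction on the number of iterations of the main while-loop.

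\emph{Base case.} After initialization, each price $p_j$ is set to some $u_{i_0,j,m(j,x_{i_0})}$, which by the input rounding is either zero or a power of $r$, and each $\alpha_i$ is set to $1 = r^0$. So the invariant holds at the start of the first iteration.

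\emph{Inductive step.} Assume the invariant holds at the beginning of an iteration. If BFS discovers an improving path, the swap-loop in lines \ref{begin of then}--\ref{end of then} only moves items between agents and never touches any $p_j$ or $\alpha_k$, so the invariant is preserved unchanged. Otherwise we enter the else-branch, and the plan is to analyze the four factors in turn. Each individual term of $\beta_1$ is $\alpha_k \, p_j/u_{k,j,m(j,x_k)+1}$, which by the induction hypothesis and the assumption on the $u$'s is either $+\infty$ or a quotient of powers of $r$, hence a power of $r$; the same reasoning applies to each term of $\beta_2$; and $\beta_4 = r^s$ is explicit. Thus $\beta_1,\beta_2,\beta_4$ are powers of $r$. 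I would also record the lower bounds $\beta_1,\beta_2 \ge 1$, which follow from inequality~(\ref{interval for alpha}), and $\beta_4 \ge r > 1$, which follows because $i$ is a least-spending uncapped agent, so $P_h(x_h)/P_i(x_i) \ge 1$ and hence $s \ge 1$.

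The key case split is on the binding factor. If $\beta_3 > \min(\beta_1,\beta_2,\beta_4)$, then since the right-hand side is at least $1$ we also have $\max(1,\beta_3)=\beta_3 > \min(\beta_1,\beta_2,\beta_4)$, so $\beta = \min(\beta_1,\beta_2,\beta_4)$ is a power of $r$; multiplying prices of goods in $S$ by $\beta$ and dividing MBB-values of agents in $S$ by the same $\beta$ keeps everything a power of $r$, and the loop continues with the invariant intact. If instead $\beta_3 \le \min(\beta_1,\beta_2,\beta_4)$, the algorithm performs the update with $\beta = \max(1,\beta_3)$ and then breaks out of the while-loop. In the subcase $\beta_3 \le 1$ we have $\beta = 1$ and nothing changes, whereas for $\beta_3 > 1$ we use $\beta = \beta_3$, which in general need not be a power of $r$; but this occurs only in the terminating iteration, which is precisely the exception allowed by the claim.

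The only point requiring genuine care is the direction of the inequalities $\beta_1,\beta_2,\beta_4\ge 1$: without these one could not conclude that the "$\max(1,\cdot)$" wrapper around $\beta_3$ is inactive in the non-terminating subcase, and the induction could fail. Since these bounds fall out cleanly from~(\ref{interval for alpha}) and from the choice of $i$ as a least spender, I do not anticipate any real obstacle beyond this bookkeeping.
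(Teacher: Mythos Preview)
Your proof is correct and follows the same approach as the paper's own proof: an induction on iterations, with the base case given by the initialization and the inductive step reducing to the observation that $\beta_1,\beta_2,\beta_4$ are powers of $r$. Your version is in fact more detailed than the paper's: the paper simply asserts that since $\beta_1,\beta_2,\beta_4$ are powers of $r$ the invariant is preserved except possibly at termination, while you spell out explicitly why $\max(1,\beta_3)$ cannot interfere in the non-terminating branch (namely because $\beta_1,\beta_2,\beta_4\ge 1$), which is a point the paper leaves to the reader.
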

\begin{proof} This is true initially, since prices are utility values and utility values are assumed to be powers of $r$ and since MBB-values are equal to one. If prices and MBB-values are powers of $r$ before a price update, $\beta_1$, $\beta_2$, and $\beta_4$ are powers of $r$. Thus prices and MBB-values are after the price update, except maybe when the algorithm terminates. 
\end{proof}

We next show that the algorithm terminates with an allocation that is almost price-envy-free up to one item. 

\begin{lemma}\label{Four-eps-pEF} Assume $\eps \le {1}/{4}$. When the algorithm terminates, $x$ is a $4\eps$-$p$-EF1 allocation. \end{lemma}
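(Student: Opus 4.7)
The plan is to case on how the algorithm terminates. If termination happens inside the \texttt{if}-test (the allocation is already $\eps$-$p$-EF1), then $x$ is $\eps$-$p$-EF1 and hence $4\eps$-$p$-EF1. The substantive case is termination via the break in the else-branch, which fires precisely when $\beta_3 \le \min(\beta_1,\beta_2,\beta_4)$. At this point the BFS from the least-spending uncapped agent $i$ found no improving path, which by the observation already made in the text (``if $i$ \penvies some node reachable from $i$ in the tight graph, BFS discovers an improving path'') means that $i$ does not \epenvy any agent in the reachable set $S$.

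Denote post-update quantities by primes. A quick check shows that at termination $\beta = \max(1,\beta_3)$, so $\beta \ge \beta_3$ and $\beta \le \beta_4$. Since $\alpha_a$ is divided by $\beta$ exactly for $a \in S$, the bundle value $P_a(x_a)$ is multiplied by $\beta$ for $a \in S$ and unchanged for $a \notin S$; the same is true of $P_a(x_a - j)$ for any $j \in x_a$ (the term $u_{a,j,m(j,x_a)}/\alpha_a$ scales with $\alpha_a$). First I would prove the single bound
\[
\text{for every agent } k \text{ there exists } j \in x_k \text{ with } P'_k(x_k - j) \le r^2 P'_i(x_i).
\]
For $k \in S$, the no-envy fact supplies $j$ with $P_k(x_k-j) \le r\, P_i(x_i)$; multiplying both sides by $\beta$ gives $P'_k(x_k-j) \le r P'_i(x_i) \le r^2 P'_i(x_i)$. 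For $k \notin S$, the definition of $\beta_3$ gives $\min_{j \in x_k} P_k(x_k-j) \le r^2 \beta_3 P_i(x_i) \le r^2 \beta P_i(x_i) = r^2 P'_i(x_i)$, and since $P_k$-values for $k \notin S$ are unchanged, this transfers verbatim.

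Next I lift the bound from $i$ to an arbitrary uncapped agent $i'$. If $i' \in S$, then $P'_{i'}(x_{i'}) = \beta P_{i'}(x_{i'}) \ge \beta P_i(x_i) = P'_i(x_i)$, using that $i$ was the least-spending uncapped agent. If $i' \notin S$, let $h$ be the least-spending uncapped agent outside $S$ (which exists because $i'$ itself is a witness). From $\beta_4 = r^s$ with $r^{s-1} \le P_h(x_h)/P_i(x_i) < r^s$ we obtain $\beta \le \beta_4 \le r \cdot P_h(x_h)/P_i(x_i)$, so $P'_i(x_i) = \beta P_i(x_i) \le r P_h(x_h) = r P'_h(x_h) \le r P'_{i'}(x_{i'})$. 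Combining either case with the previous paragraph yields $P'_k(x_k - j) \le r^3 P'_{i'}(x_{i'})$, and the elementary inequality $(1+\eps)^3 \le 1 + 4\eps$ for $\eps \le 1/4$ (since $3\eps + \eps^2 < 1$ there) finishes the argument.

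The main subtlety to get right is the starting observation, that no improving path out of $i$ implies no \epenvy\ of any agent in $S$. One argues by taking the shortest tight path from $i$ to a closest \epenvied\ agent $k = a_h$, noting $g_h \in x_{a_h}$ (since the final edge $g_h \to a_h$ forces $m(g_h,x_{a_h}) \ge 1$), and checking that the intermediate conditions $P_{a_\ell}(x_{a_\ell} - g_\ell) \le (1+\eps)P_i(x_i)$ for $\ell < h$ hold so the path qualifies as improving; this would contradict the termination condition. Everything else is bookkeeping with the four formulas for $\beta$.
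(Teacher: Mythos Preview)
Your proof is correct and follows essentially the same route as the paper. The only organizational difference is that the paper compares every bundle to the \emph{new} least-spending uncapped agent $h$ (after the price update) and implicitly uses that all uncapped agents have value at least $Q_h(x_h)$, whereas you compare first to the old least spender $i$ and then explicitly lift to an arbitrary uncapped agent $i'$ via $\beta\le\beta_4$; both paths arrive at the same $r^3=(1+\eps)^3\le 1+4\eps$ bound.
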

\begin{proof}
Let $q$ be the  price vector after the price increase and let $h$ be the least spending uncapped agent after the increase; $h = i$ is possible. We first show that that $Q_i(x_i) \le r Q_h(x_h)$. This is certainly true if $h = i$. If $h \not\in S$, since the price increase is limited by $\beta_4$, we have
\[ Q_i(x_i) = \beta P_i(x_i) \le \beta_4 P_i(x_i) = r \cdot r^{s-1} \cdot P_i(x_i) \le r P_h(x_h) = r Q_h(x_h).\]
So in either case, we have $Q_i(x_i) \le r Q_h(x_h)$. Moreover, $Q_h(x_h) \le Q_i(x_i)$ because $h$ is a least spending uncapped agent after the price increase. 

If the algorithm terminates, we have $\beta_3 \le \beta_4$. Consider any agent $k$. Then, for $k \in S$, 
\begin{align*}
Q_k(x_k - j_k) &\le (1 + \eps) Q_i(x_i) \le (1 + \eps) \cdot r \cdot Q_h(x_h) \\
\intertext{and, for $k \not\in S$,}
Q_k(x_k - j_k) &= P_k(x_k - j_k) \le \beta_3 (1 + \eps) r P_i(x_i) = (1 + \eps) r Q_i(x_i)  \le (1 + \eps) \cdot r^2 \cdot Q_h(x_h).
\end{align*}
Thus we are returning an allocation that is $((1+\eps)r^2 - 1)$-$q$-EF1.
Finally, note that $(1 + \eps) r^2 = (1 + \eps)^3 \le (1 + 4 \eps)$ for $\eps \le {1}/{4}$. 
\end{proof}

\noindent {\bf Remark:} We want to point out the differences to the algorithm by Barman et al. Our definition of alternating path is more general than theirs since it needs to take into account that the number of items of a particular good assigned to an agent may change. For this reason, we need to maintain the MBB-ratio explicitly. In the algorithm by Barman et al.~the MBB ratio of agent $i$ is equal to the maximum utility to price ratio $\max _j {u_{ij}}/{p_j}$ and only MBB goods can be assigned to an agent. As a consequence, if a good belongs to $S$, the agent owning it also belongs to $S$. In price changes, there is no need for the quantity $\beta_2$. In the definition of $\beta_3$, we added an additional factor $r^2$ in the denominator. We cannot prove polynomial running time without this factor. Finally, we start the search for an improving path from the least uncapped agent and not from the least agent.

\subsection{Analysis of the Approximation Factor}\label{approximation ratio}

The analysis refines the analysis given by Barman et al. Let $(\xalg,p,\alpha)$ denote the allocation and price and MBB vector returned by the algorithm.  Recall that $\xalg$ is $\gamma$-$p$-EF1 with  $\gamma = 4\eps$ with respect to $p$ and (\ref{interval for alpha}) holds for every $i$. We scale all the utilities of agent $i$ and its utility cap by $\alpha_i$, i.e., we replace $u_{i,j,\ell}$ by ${u_{i,j,\ell}}/{\alpha_i}$ and $c_i$ by ${c_i}/{\alpha_i}$ and use $u_{i,j,\ell}$ and $c_i$  also for the scaled utilities and scaled utility cap. The scaling does not change the integral allocation maximizing Nash Social Welfare. Inequality (\ref{interval for alpha}) becomes 
\begin{equation}\label{final interval for alpha}     \frac{u_{i,j,m(j,\xalg_i) +1}}{p_j} \le 1 \le  \frac{u_{i,j,m(j,\xalg_i)}}{p_j}, \end{equation}
i.e., the items allocated to $i$ have a utility to price ratio of one or more and the items that are not allocated to $i$ have a ratio of one or less. Also, the value of bundle $x_i$ for $i$ is now equal to its utility for $i$ and is given by
\begin{equation}\label{final price of a bundle} P_i(\xalg_i) = u_i(\xalg_i) = \sum_j \sum_{1 \le \ell \le m(j,\xalg_i)} u_{i,j,\ell}.\end{equation}
All $u_{i,*,*}$ are at most $c_i$.

Let $A_c$ and $A_u$ be the set of capped and uncapped agents in $\xalg$, let $c = \abs{A_c}$ and $n - c = \abs{A_u}$ be their cardinalities. We number the uncapped agents such that $u_1(\xalg_1) \ge u_2(\xalg_2) \ge \ldots \ge u_{n-c}(\xalg_{n -c})$. Let $\ell = u_{n-c}(\xalg_{n-c})$ be the minimum utility of a bundle assigned to an uncapped agent. The capped agents are numbered $n - c + 1$ to $n$. Let $x^*$ be an integral allocation maximizing Nash social welfare. 

We define an auxiliary problem with $\sum_j k_j$ goods and one copy of each good. The goods are denoted by triples $(i,j,\ell)$, where $1 \le \ell \le m(j,\xalg_i)$. The utility of good $(i,j,\ell)$ is uniform for all agents and is equal to $u_{i,j,\ell}$. Formally,
\begin{equation}\label{uniform problem}     v_{*,(i,j,\ell)} = u_{i,j,\ell},  \end{equation}
where $v$ is the utility function for the auxiliary problem. The cap of agent $i$ is $c_i$. Since $v$ is uniform, we can write $v(x_i)$ instead of $v_i(x_i)$. The capped utility of $x_i$ for agent $i$ is $\barv_i(x_i) = \min(c_i, v(x_i))$. Note that $v$ is uniform, but $\barv$ is not. Let $\xstarstar$ be an optimal allocation for the auxiliary problem. 

\newcommand{\uorig}{u^{\mathrm{orig}}}

\begin{lemma}\label{Auxiliary Problem} Let $u_i$ be the scaled utilities. Then we have:
\begin{enumerate}
\item $\xalg$ maximizes the uncapped social welfare, i.e., $\xalg = \arg \max_x \sum_i u_i(x_i)$.
\item $\sum_i u_i(x^*_i) \le \sum_i u_i(\xalg_i) = \sum_{i,j,1 \le \ell \le m(j,\xalg_i)} v_{*,(i,j,\ell)}$.
\item $\xalg$ is Pareto-optimal for uncapped utilities.
\item $\NSW(x^*) = \left( \prod_{i} \baru_i(x_i^*)\right)^{1/n} \le  \left( \prod_{i} \barv_i(\xstarstar_i)\right)^{1/n} = \NSW(\xstarstar)$.
  \item Let $\uorig$ be the unrounded original utilities. Then for any integral allocation $y$ of the goods, $\sum_i \uorig_i(y_i)/\alpha_i \le r \cdot \sum_i \uorig_i(\xalg_i)/\alpha_i$. 
\end{enumerate}
\end{lemma}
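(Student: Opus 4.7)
The plan is to prove the five parts in order, with most of the technical work in (1) and (4); the rest follow almost formally. The common workhorse is the scaled MBB condition (\ref{final interval for alpha}): for every agent $i$ and good $j$, the copies with position $\le m(j,\xalg_i)$ have utility at least $p_j$, while copy $m(j,\xalg_i)+1$ has utility at most $p_j$.

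For part (1), I would compare $\xalg$ to an arbitrary allocation $y$ via a local exchange argument. By (\ref{final interval for alpha}) together with monotonicity $u_{i,j,1} \ge u_{i,j,2} \ge \ldots$, for each pair $(i,j)$ with $m(j,y_i) > m(j,\xalg_i)$ the extra copies contribute at most $(m(j,y_i) - m(j,\xalg_i))\,p_j$ to $u_i(y_i) - u_i(\xalg_i)$, while if $m(j,y_i) < m(j,\xalg_i)$ the missing copies subtract at least $(m(j,\xalg_i) - m(j,y_i))\,p_j$. Summing over $(i,j)$ and invoking $\sum_i m(j,y_i) = \sum_i m(j,\xalg_i) = k_j$ gives $\sum_i u_i(y_i) \le \sum_i u_i(\xalg_i)$. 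Part (2) is then immediate: the inequality is part (1) at $y = x^*$, and the equation is the definition of $v$. Part (3) follows since a Pareto improvement over $\xalg$ would strictly increase $\sum_i u_i$, contradicting part (1). Part (5) applies the same local exchange to the scaled rounded utilities and then uses the rounding bound $\uorig \le u \le r\,\uorig$ (coordinatewise), with the lower bound used on the left-hand side and the upper bound on the right.

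The main obstacle is part (4). I would construct an auxiliary allocation $y$ satisfying $\barv_i(y_i) \ge \baru_i(x^*_i)$ for every $i$; this yields $\NSW(\xstarstar) \ge \NSW(y) \ge \NSW(x^*)$. Because uncapped utilities are additive and allocations decouple across goods, part (1) specialises, good by good, to the statement that $\xalg$ maximises $\sum_i \sum_{\ell \le m(j,x_i)} u_{i,j,\ell}$ subject to $\sum_i m(j,x_i) = k_j$. By monotonicity of $u_{i,j,\cdot}$, this per-good maximum equals the sum of the top $k_j$ entries of the multiset $\{u_{i',j,\ell'}\}_{i',\ell'}$: the top $k_j$ entries form a valid selection because, whenever $u_{i,j,\ell}$ is among them, each $u_{i,j,\ell'}$ with $\ell' < \ell$ is at least as large and hence is also among them. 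Let $V_1 \ge \ldots \ge V_{k_j}$ be the sorted utilities of the $k_j$ auxiliary items associated with good $j$ (the positions filled by $\xalg$) and $U_1 \ge \ldots \ge U_{k_j}$ the sorted utilities of the $k_j$ positions filled by $x^*$. Since the $V$'s are the top $k_j$ entries of the multiset while the $U$'s are merely some $k_j$ entries, $V_\ell \ge U_\ell$ for every $\ell$. Matching the $\ell$-th largest position in $x^*$ to the $\ell$-th largest auxiliary item and giving agent $i$ in $y$ the auxiliary items matched to the positions that $x^*$ assigned to $i$, one obtains $v(y_i) \ge u_i(x^*_i)$; capping both sides at $c_i$ yields $\barv_i(y_i) \ge \baru_i(x^*_i)$, completing the argument.
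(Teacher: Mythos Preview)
Your proof is correct. Parts (1)--(3) and (5) coincide with the paper's arguments; your direct summation for (1) is simply the non-iterative form of the paper's swap process.

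For part (4), both you and the paper construct an auxiliary allocation $y$ (the paper writes $\hat{x}$) with $v(y_i)\ge u_i(x^*_i)$ for every $i$, which immediately yields $\NSW(\xstarstar)\ge\NSW(x^*)$; but the constructions differ. The paper reruns the swap process of part (1) inside the auxiliary problem: starting from $\xalg$ viewed as an auxiliary allocation, each time a copy of good $j$ is moved from an agent $i$ to an agent $k$ in the original problem, the auxiliary item $(i,j,m(j,x_i))$ is moved in parallel; since the transferred item has value $\ge p_j$ while $k$'s original-problem gain is $\le p_j$, the inequality $v(\hat{x}_k)\ge u_k(x_k)$ is maintained throughout, and losing agents keep equality. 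Your construction instead argues that, good by good, the positions selected by $\xalg$ realise the top $k_j$ values of the multiset $\{u_{i',j,\ell'}\}_{i',\ell'}$, and then builds $y$ by a rank-preserving bijection between the positions used by $x^*$ and the auxiliary items. Both routes are valid; yours is more explicit and avoids tracking a process, while the paper's is terser by reusing part (1) verbatim. One small remark: your claim that ``the $V$'s are the top $k_j$ entries'' is most cleanly justified directly from the threshold $p_j$ in~(\ref{final interval for alpha}) (every selected value is $\ge p_j$ and every unselected value is $\le p_j$), rather than via the optimality-plus-achievability detour you sketch, which needs the extra step that equal sums of sorted $k_j$-tuples force equal multisets.
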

\begin{proof} For part (1) consider $x^{SW}$ as the allocation that maximizes the uncapped social welfare for the scaled utilities. We can obtain $x^*_{SW}$ from $\xalg$ by moving copies of goods as follows:

Set $x \ot \xalg$. Consider any good $j$. As long as the multiplicities of $j$ in the bundles of $x$ and $x^{SW}$ are not the same, identify two agents $i$ and $k$, where $x_i$ contains more copies of $j$ than $x^{SW}_i$ and $x_k$ contains fewer copies of $j$ than $x^{SW}_k$, and move a copy of $j$ from $i$ to $k$. Each copy taken away has a utility of at least $p_j$, each copy assigned additionally has a utility of at most $p_j$. Thus the social welfare cannot go up by reassigning. This proves (a). 

Part (2) is an obvious consequence of part (1).

For part (3), we need to show that if agents have no utility caps, then there is no other allocation $y$ that satisfies $u_i(y_i) \ge u_i(\xalg_i)$ for all agents $i$, with strict inequality for at least one agent. This follows directly from part (1). Note that scaling does not affect the Pareto inequalities, thus part (3) also holds for unscaled uncapped utilities. 

For part (4), we interpret $\xalg$ as an allocation for the auxiliary problem; goods $(i,j,\ell)$ with $1 \le \ell \le m(j,\xalg_i)$ are allocated to agent $i$. We then move goods exactly as in (1). We obtain an allocation $\hat{x}$ for the auxiliary problem with $u_i(x^*_i) \le v(\hat{x}_i)$ for all $i$.

For part (5), we observe that $\uorig_i(y_i) \le u_i(y_i)$ since each original non-zero utility is scaled up to the next power of $r$, $\sum_i u_i(y_i)/\alpha_i \le \sum_i u_i(\xalg_i)/\alpha_i$ by part (c), and $u_i(\xalg_i) \le r \cdot \uorig_i(\xalg_i)$. 
\end{proof}

We stress that Lemma~\ref{Auxiliary Problem} refers to the scaled utilities; $\xalg$ does not maximize social welfare for the unscaled utilities.

For any agent $i$, let $b_i \in \xalg_i$ be such that $u_i(\xalg_i - b_i) \le (1 + \gamma) \ell$. Note that $u_i(\xalg_i - b_i) = u_i(\xalg_i) - u_{i,b_i,m(b_i,\xalg_i)}$. Let $B = \set{(i,b_i,m(b_i,\xalg_i))}{1 \le i \le n}$ be the goods in the auxiliary problem corresponding to the $b_i$'s. We now consider allocations for the auxiliary problem that are allowed to be partially fractional. We require that the goods in $B$ are allocated integrally and allow all other goods to be assigned fractionally. For convenience of notation, let $g_i = (i,b_i,m(b_i,\xalg_i))$. The following lemma is crucial for the analysis.

\newcommand{\xh}{\xstarstarstar}

\begin{lemma} There is an optimal allocation for the relaxed auxiliary problem in which good $g_i$ is allocated to agent $i$. \end{lemma}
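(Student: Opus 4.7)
The plan is to prove the lemma by an exchange argument starting from any optimal allocation $\xh$ of the relaxed auxiliary problem and transforming it step by step into one with $g_i \in \xh_i$ for all $i$, without decreasing the $\NSW$.

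The key structural fact I will exploit is that $v(g_i) = u_{i,b_i,m(b_i,\xalg_i)} \le c_i$, which holds by the pre-capping convention $u_{i,j,\ell} \le c_i$. So under the identity assignment each agent $i$ gets an integral $B$-mass $v(g_i) \le c_i$ that fits under its own cap. Since the auxiliary-problem utility $v$ is uniform across agents, the maximum $\NSW$ achievable given an integral $B$-assignment depends only on the induced $V$-distribution $V_i := \sum_{g \in \xh_i \cap B} v(g)$, so the task reduces to showing that the identity distribution $(v(g_1),\ldots,v(g_n))$ is a maximizer of the function $F(V) := \max_{w \ge 0,\, \sum_i w_i = W} \prod_i \min(c_i, V_i + w_i)$ over all $V$-distributions achievable by an integral $B$-good assignment (where $W$ is the total value of the non-$B$ goods). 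I plan to proceed in two stages: (i) \emph{de-aggregation}---if in $\xh$ the $B$-good assignment is not a bijection, some agent $k$ holds two or more $B$-goods while some agent $j$ holds none; I move a $B$-good toward its rightful owner, then re-optimize the fractional allocation; (ii) \emph{un-permutation}---if the $B$-assignment is a bijection $\sigma$ different from the identity, I unwind each nontrivial cycle of $\sigma$ by reassigning $g_i \to i$ for each $i$ in the cycle, which replaces some $V_i = v(g_{\sigma^{-1}(i)})$ (possibly exceeding $c_i$) by $V_i = v(g_i) \le c_i$, eliminating potential over-cap waste.

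The main obstacle will be formalizing the claim that each such swap keeps $F(V)$ non-decreasing. The underlying reason is a Schur-concavity-style monotonicity of $F$: moving $B$-mass from an over-capped agent (where $V_i > c_i$ and the excess is wasted) to a cap-compatible agent recovers wasted mass and hence can only increase $F$; and among $V$-distributions with $V_i \le c_i$ for all $i$, the fractional water-filling attains essentially the same $F$-value regardless of the specific cap-respecting distribution, since the constraints $U_i \ge V_i$ in the inner maximization only bind when the unconstrained equalizer falls below $V_i$. Combining these two monotonicities yields that the identity---being the assignment that guarantees $V_i \le c_i$ by spreading one $B$-good per agent---achieves the maximum of $F$, proving the lemma.
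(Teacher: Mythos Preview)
Your high-level plan—reduce to maximizing $F(V)$ over integral $B$-assignments and show the identity assignment is optimal—is a valid abstraction and shares the exchange-argument spirit of the paper. The key observation $v(g_i)\le c_i$ is also the linchpin there. However, your justification for why identity maximizes $F$ has a real gap. The claim that ``among $V$-distributions with $V_i\le c_i$ for all $i$, the fractional water-filling attains essentially the same $F$-value'' is false as stated: take $n=2$, $c_1=c_2=100$, $v(g_1)=v(g_2)=5$, $W=2$; then identity gives $V=(5,5)$ and $F=36$, whereas the aggregated (still cap-respecting) assignment $V=(10,0)$ gives $U=(10,2)$ and $F=20$. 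So ``cap-respecting'' alone does not pin down $F$, and the lower constraints $U_i\ge V_i$ \emph{do} change $F$ when they bind—your parenthetical acknowledges they can bind but then draws no conclusion from that. Your Schur-concavity intuition may rescue de-aggregation, but for stage~(ii) you would still need to argue that among cap-respecting \emph{bijections} (same $V$-multiset, different matching to the caps) the value of $F$ is unchanged; you have asserted this, not proved it.

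The paper sidesteps all of this with a concrete local exchange. Rather than abstracting to $F(V)$, it picks an agent $i$ not owning $g_i$ (first any $i$ holding no $B$-good; once the assignment is a bijection, the $i$ with largest $v(g_i)$ among misplaced goods), moves $g_i$ from its current holder $k$ to $i$, and \emph{simultaneously} sends back $\min\bigl(v(g_i),\,v(\xh_i)\bigr)$ of value from $i$ to $k$. This is feasible because everything in $\xh_i$ other than at most one $B$-good of value $\le v(g_i)$ (by the choice of $i$) is divisible. A short case analysis on whether $v(\xh_i)\ge v(g_i)$ then shows $\barv_i(\cdot)\,\barv_k(\cdot)$ does not decrease, while the number of correctly-placed $B$-goods strictly increases. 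This compensation trick is exactly the concrete ingredient your sketch is missing; it removes any need to analyze $F(V)$ globally or to invoke Schur-type monotonicity.
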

\begin{proof} Assume otherwise. Among the allocations maximizing Nash social welfare for the relaxed auxiliary problem, let $\xh$ be the one that maximizes the number of agents $i$ that are allocated their own good $g_i$. 

Assume first that there is an agent $i$ to which no good in $B$ is allocated. Then $g_i$ is allocated to some agent $k$ different from $i$. Since $b_i \in \xalg_i$, $v(g_i) = u_{i,b_i,m(b_i,\xalg_i)} \le c_i$. The inequality holds since utilities $u_{i,*,*}$ are capped at $c_i$ during initialization. We move $g_i$ from $k$ to $i$ and $\min(v(g_i),v(\xh_i))$ value from $i$ to $k$. This is possible since only divisible goods are allocated to $i$. If we move $v(g_i)$ from $i$ to $k$, the NSW does not change. If $v(g_i)> v(\xh_i)$ and hence $c_i \ge v(g_i)> v(\xh_i)$, the product $\barv_i(x_i) \cdot \barv_k(x_k)$ changes from 
\begin{align*}  \min(c_i, v(\xh_i)) &\cdot \min(c_k, v(\xh_k - g_i + g_i)) \\ &= 
  \min(c_k v(\xh_i) , v(\xh_k - g_i)v(\xh_i) + v(g_i)v(\xh_i))
  \end{align*}
to
\begin{align*}
  \min(c_i, v(g_i)) &\cdot \min(c_k, v(\xh_k - g_i + \xh_i)) \\ &
                          \min(c_k v(g_i), v(\xh_k - v(g_i))v(g_i) + v(\xh_i)v(g_i)) .
\end{align*}
The arguments of the min in the lower line are componentwise larger than those of the min in the upper line. 
We have now modified $\xh$ such that the NSW did not decrease and the number of agents owning their own good increased. The above applies as long as there is an agent owning no good in $B$. 

So assume every agent $i$ owns a good in $B$, but not necessarily $g_i$. Let $i$ be such that $v(g_i)$ is largest among all goods $g_i$ that are not allocated to their $i$. Then $g_i$ is allocated to some agent $k$ different from $i$. The value of the good $g_\ell$ allocated to $i$ is at most $v(g_i)$ since $\ell \not= i$ and by the choice of $i$. We move $g_i$ from $k$ to $i$ and $\min(v(g_i),v(\xh_i))$ value from $i$ to $k$. This is possible since $v(g_\ell) \le v(g_i)$ and all other goods assigned to $i$ are divisible. We have now modified $\xh$ such that the NSW did not decrease and the number of agents owning their own good increased. We continue in this way until $g_i$ is allocated to $i$ for every $i$. \end{proof}

Let $\xsss$ be an optimal allocation for the relaxed auxiliary problem in which good $g_i$ is contained in the bundle $\xsss_i$ for every $i$. Let $\alpha$ be such that 
\[                     \alpha \ell = \min\set{v(\xsss_i)}{v(\xsss_i) < c_i} \]
is the minimum value of any agent that is uncapped in $\xsss$. Let $\alpha = \infty$, if every agent is capped in $\xsss$. Let $\Csss$ and $\Usss$ be the set of capped and uncapped agents in $\xsss$. Let $h$ be such that 
$u_h(\xalg_h) > \alpha \ell \ge u_{h+1}(\xalg_{h+1})$.




\begin{lemma}\label{upper bounds} For $i \le h$, $v(\xsss_i) \le u_i(\xalg_i)$. For all $i$, $u_i(\xalg_i) \le v(\xsss_i) +(1 + \gamma) \ell$. For $i \in A_u \cap \Csss$, $c_i \le \alpha \ell$ and $i \not\in [h]$. \end{lemma}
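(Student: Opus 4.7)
I would prove the three claims in the order (2), (3), (1); claim (3) rules out a bad case in the proof of (1), and (2) is essentially a restatement of how $g_i$ was chosen. For (2), apply the $\gamma$-$p$-EF1 guarantee of $\xalg$ with the least-spending uncapped agent $n-c$ (whose scaled spend is $\ell$) on one side and the arbitrary agent $i$ on the other. This produces $b_i \in \xalg_i$ with $u_i(\xalg_i - b_i) \le (1+\gamma)\ell$. Since removing one copy of $b_i$ decreases $u_i$ by exactly $u_{i,b_i,m(b_i,\xalg_i)} = v(g_i)$, we get $v(g_i) \ge u_i(\xalg_i) - (1+\gamma)\ell$; and since $g_i \in \xsss_i$, $v(\xsss_i) \ge v(g_i)$, proving (2). (If $\xalg_i = \emptyset$, the claim is trivial.)

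For (3), suppose $i \in A_u \cap \Csss$, so $u_i(\xalg_i) < c_i \le v(\xsss_i)$. Since utilities are capped at $c_i$ during initialization, $v(g_i) \le u_i(\xalg_i) < c_i$, so $\xsss_i$ contains fractional mass of value at least $v(\xsss_i) - v(g_i) > 0$. If $\Usss = \emptyset$, then $\alpha = \infty$ and $c_i \le \alpha\ell$ is immediate; otherwise fix a minimum-utility uncapped agent $k \in \Usss$, so that $v(\xsss_k) = \alpha\ell < c_k$. I would first show that $v(\xsss_i) = c_i$: if $v(\xsss_i) > c_i$, moving a small amount of fractional mass from $i$ to $k$ keeps $\barv_i = c_i$ while strictly increasing $\barv_k$, contradicting the NSW-optimality of $\xsss$. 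With $v(\xsss_i) = c_i$ established, transferring $\delta \in (0, c_i - v(g_i)]$ of fractional mass from $i$ to $k$ changes the two-agent NSW contribution from $c_i \cdot \alpha\ell$ to $(c_i - \delta)(\alpha\ell + \delta) = c_i\alpha\ell + \delta(c_i - \alpha\ell) - \delta^2$, which strictly exceeds $c_i\alpha\ell$ for sufficiently small $\delta > 0$ whenever $c_i > \alpha\ell$; optimality therefore forces $c_i \le \alpha\ell$. Finally, $i \le h$ would give $u_i(\xalg_i) > \alpha\ell \ge c_i$, contradicting $i \in A_u$, so $i \notin [h]$.

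For (1), let $i \le h$. Since $h \le n - c$, we have $i \in A_u$, and by (3) $i \notin \Csss$, so $i \in \Usss$ and $v(\xsss_i) < c_i$. If $v(\xsss_i) \le \alpha\ell$, then $v(\xsss_i) \le \alpha\ell < u_h(\xalg_h) \le u_i(\xalg_i)$, and we are done. Otherwise $v(\xsss_i) > \alpha\ell$, so in particular $\Usss \neq \emptyset$; I claim $\xsss_i$ contains no fractional mass beyond $g_i$. If it did, transferring $\delta > 0$ from $i$ to a minimum-utility uncapped agent $k$ would change the two-agent contribution from $v(\xsss_i) \cdot \alpha\ell$ to $(v(\xsss_i) - \delta)(\alpha\ell + \delta)$, which exceeds $v(\xsss_i)\alpha\ell$ for small $\delta > 0$ since $v(\xsss_i) > \alpha\ell$, contradicting optimality. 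Hence $\xsss_i = \{g_i\}$ and $v(\xsss_i) = v(g_i) \le u_i(\xalg_i)$.

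\textbf{Main obstacle.} The delicate part is the optimality-based exchange argument for $\xsss$: because $g_i$ is pinned to agent $i$, one must carefully check that enough fractional mass is present in $\xsss_i$ to support the perturbation, and that the recipient $k$ remains uncapped during the small $\delta$-transfer, so that the first-order NSW comparison actually applies. The boundary case $\Usss = \emptyset$ (equivalently $\alpha = \infty$) must be handled separately, but it is trivial.
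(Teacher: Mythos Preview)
Your proposal is correct and uses essentially the same optimality-based exchange arguments as the paper. The only difference is organizational: you prove the claims in the order (2), (3), (1) and invoke (3) inside the proof of (1) to deduce $i\in\Usss$ before running the perturbation, whereas the paper proves the three parts independently (in the order (1), (2), (3)) and handles the capped/uncapped status of $i$ implicitly in the exchange step. Both routes rely on the same two ingredients---$g_i\in\xsss_i$ with $v(g_i)\le u_i(\xalg_i)$, and shifting fractional mass toward a minimum-value agent in $\Usss$ to contradict optimality---so the difference is cosmetic rather than substantive.
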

\begin{proof} Consider any $i \le h$. $v(\xsss_i) \le u_i(\xalg_i)$ is obvious, if $v(\xsss_i) \le \alpha \ell$. If $v(\xsss_i) > \alpha \ell$, then $\alpha < \infty$ and hence $\Usss$ is non-empty. We claim that $\xsss_i = \sset{g_i}$, i.e., $\xsss_i$ is a singleton consisting only of $g_i$. Assume otherwise, then also some divisible goods are assigned to $i$. We can move some of them to an agent that is uncapped in $\xsss$ and has value $\alpha\ell$. This increases the NSW, a contradiction. 

For the upper bound, we observe that $g_i \in \xsss_i$ and $u_i(\xalg_i - b_i) \le (1 + \gamma) \ell$. 

Consider next any $i \in A_u \cap \Csss$.  
Assume $c_i > \alpha \ell$. If $\xsss$ assigns divisible goods to $i$, we can move some of them to an agent that is uncapped in $\xsss$ and has value $\alpha \ell$. This increases the NSW. Thus $\xsss_i$ consists only of $g_i$. But then $v(g_i) \le u_i(\xalg_i) < c_i$ and $i$ does not belong to $\Csss$. This shows $c_i \le \alpha \ell$.
Then also $i \not\in [h]$ because otherwise $c_i < u_i(\xalg_i)$ and hence $i$ would be capped in $\xalg$. \end{proof}

\begin{lemma}
\[ \NSW(x^*) \le \NSW(\xsss) \le \left(  (\alpha \ell)^{n - c - h - \abs{A_u \cap \Csss}} \cdot \prod_{i \in A_c \cup (A_u \cap \Csss)} c_i \cdot \prod_{1 \le i \le h} u_i(\xalg_i) 
    \right)^{\frac{1}{n}}.\]
Moreover, $c_i \le \alpha \ell$ for any $i \in A_u \cap \Csss$. \end{lemma}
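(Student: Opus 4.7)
The plan is to prove the two inequalities of the displayed bound separately and to note that the concluding ``Moreover'' clause is simply a restatement of the corresponding assertion in Lemma~\ref{upper bounds}.

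For the first inequality $\NSW(x^*) \le \NSW(\xsss)$, I would chain Lemma~\ref{Auxiliary Problem}(4), which gives $\NSW(x^*) \le \NSW(\xss)$, with the immediate observation that $\xss$, being an integral allocation for the auxiliary problem, is in particular feasible for the \emph{relaxed} auxiliary problem (where goods outside $B$ may be assigned fractionally); hence $\NSW(\xss) \le \NSW(\xsss)$ by the optimality of $\xsss$.

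For the second inequality, I would partition the agents into four classes whose union is $\{1,\ldots,n\}$: $A_c$, $[h]$, $A_u \cap \Csss$, and $(A_u \cap \Usss) \setminus [h]$. Lemma~\ref{upper bounds} yields $[h] \subseteq A_u$ together with $[h] \cap (A_u \cap \Csss) = \emptyset$, so $[h] \subseteq A_u \cap \Usss$ and the four classes are pairwise disjoint with sizes $c$, $h$, $\abs{A_u \cap \Csss}$, and $n - c - h - \abs{A_u \cap \Csss}$. On each class I would bound $\barv_i(\xsss_i)$: trivially $\barv_i(\xsss_i) \le c_i$ for $i \in A_c$; the equality $\barv_i(\xsss_i) = c_i$ for $i \in A_u \cap \Csss$ since $i$ is capped in $\xsss$; the bound $\barv_i(\xsss_i) \le v(\xsss_i) \le u_i(\xalg_i)$ for $i \le h$ directly from Lemma~\ref{upper bounds}; and finally $\barv_i(\xsss_i) \le \alpha \ell$ on the last class. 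Multiplying the four per-class bounds and taking the $n$-th root yields exactly the stated expression, and the ``Moreover'' claim is just the third assertion of Lemma~\ref{upper bounds}.

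The main obstacle is the fourth class $(A_u \cap \Usss) \setminus [h]$, because \emph{a priori} we only know $v(\xsss_i) \ge \alpha \ell$ (this being how $\alpha \ell$ is defined as the minimum value among uncapped agents of $\xsss$), so strict inequality must be ruled out. To rule it out, I would first note that $b_i \in \xalg_i$ implies $v(g_i) = u_{i,b_i,m(b_i,\xalg_i)} \le u_i(\xalg_i)$, and the condition $i > h$ gives $u_i(\xalg_i) \le \alpha \ell$, so $v(g_i) \le \alpha \ell$. Now suppose for contradiction that $v(\xsss_i) > \alpha \ell$. Since $g_i$ alone contributes at most $v(g_i) \le \alpha \ell$ to $\xsss_i$, the bundle $\xsss_i$ must contain divisible mass beyond $g_i$. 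Picking any uncapped agent $j$ with $v(\xsss_j) = \alpha \ell$ (such a $j$ exists by the definition of $\alpha \ell$) and transferring an infinitesimal amount $\delta > 0$ of divisible utility from $i$ to $j$, the exchange inequality $(a - \delta)(b + \delta) > ab$ for $a > b > 0$, together with the fact that for sufficiently small $\delta$ neither agent crosses its cap, gives a strict increase in $\barv_i(\xsss_i) \cdot \barv_j(\xsss_j)$ and hence in $\NSW(\xsss)$, contradicting the optimality of $\xsss$.
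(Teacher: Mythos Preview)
Your proposal is correct and follows essentially the same approach as the paper: the same four-way partition of the agents, the same per-class bounds, and the same contradiction argument (transfer divisible mass to an uncapped agent achieving $\alpha\ell$) for the class $(A_u \cap \Usss)\setminus[h]$. One tiny remark: the inclusion $[h]\subseteq A_u$ follows from the numbering convention (the uncapped agents are indexed $1,\ldots,n-c$), not from Lemma~\ref{upper bounds}; only the disjointness $[h]\cap(A_u\cap\Csss)=\emptyset$ comes from that lemma.
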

\begin{proof} If $v(\xsss_i) \not= \alpha \ell$ then either $i \in A_c$ or $i \in A_u \cap \Csss$ or $i \in A_u \setminus \Csss$. In the first case, $v(\xsss_i) \le c_i$. In the second case, $v(\xsss_i) = c_i \le \alpha \ell$ and $i \not\in [h]$ by Lemma~\ref{upper bounds}. In the third case, $v(\xsss_i) \le u_i(\xalg_i)$ for $i \le h$. So assume $i > h$. Then $v(g_i) \le  u_i(\xalg_i) \le \alpha \ell$ and hence all value in $v(\xsss_i)$ above $\alpha \ell$ would be by fractional goods. They could be reassigned for an increase in NSW. We conclude that for the agents $i \in A_u \setminus \Csss$ with $i > h$, we have $v(\xsss_i) = \alpha \ell$. \end{proof}

We next bound $\NSW(\xalg)$ from below. We consider assignments $x$ for the auxiliary problem that agree with $\xalg$ for the agents in $A_c \cup [h]$ and reassign the value $\sum_{i \in A_u - [h]} u_i(\xalg_i)$ fractionally. Note that for any $i \in A_u - [h]$, $\ell \le u_i(\xalg_i) \le \min(c_i,\alpha \ell)$. The former inequality follows from $i \in A_u$ and the latter inequality follows from the definition of $h$ and $i \in A_u$. We reallocate value so as to  move $u_i(x_i)$ towards the bounds $\ell$ and $\min(c_i,\alpha \ell)$. As long as there are two agents whose value is not at one of their bounds, we shift value from the smaller to the larger. This decreases NSW. We end when all but one agent have an extreme allocation, either $\ell$ or  $\min(c_i,\alpha \ell)$. One agent ends up with an allocation $\beta \ell$ with 
$\beta \in [1,\alpha]$. 

Let us introduce some more notation. Write $A_u \cap \Csss$ as $S \cup T$, where the agents $i \in T$ end up at $c_i$ and the agents in $S$ end up at $\ell$. Also let $s$ and $t$ be the number of agents in $A_u \setminus \Csss$ that end up at $\ell$ and $\alpha \ell$ respectively. Then
\[ \NSW(\xalg) \ge \left(    \prod_{i \in A_c} c_i \cdot \prod_{1 \le i \le h} u_i(\xalg_i) \cdot \ell^s \cdot (\alpha \ell)^t \cdot (\beta\ell) \cdot  \prod_{i \in T} c_i \cdot \ell^{\abs{S}}                         \right)^{1/n}.\]
Note that $n - c  - h = s + t + 1 + \abs{S} + \abs{T}$. Therefore
\[ \frac{\NSW(x^*)}{\NSW(\xalg)} \le \left(  \alpha^s \cdot \frac{\alpha}{\beta} \cdot \prod_{i \in S} \frac{c_i}{\ell}             \right)^{1/n} \le \left(\left(\frac{ s \alpha + \frac{\alpha}{\beta} + \sum_{i\in S} \frac{c_i}{\ell}}{s + 1 + \abs{S}}\right)^{s + 1 + \abs{S}}\right)^{1/n},\]
where we used the inequality between geometric mean and arithmetic mean for the second inequality. 

The total mass allocated by $\xsss$ to the agents in $A_u - [h]$ is $(s + t + 1)\alpha \ell + \sum_{i \in S \cup T} c_i$. The allocation $\xalg$ wastes up to $(1 + \gamma) \ell$ for each $i \in A_c \cup [h]$ and uses $s\ell + t \alpha \ell + \beta \ell + \sum_{i \in T} c_i + \abs{S} \ell$ on the agents in $A_u - [h]$. Therefore 
\[ (s + t + 1)\alpha \ell + \sum_{i \in S \cup T} c_i \le (\abs{A_c} + h) (1 + \gamma) \ell + s\ell + t \alpha \ell + \beta \ell + \sum_{i \in T} c_i + \abs{S} \ell\]
and hence after rearranging, dividing by $\ell$ and adding ${\alpha}/{\beta}$ on both sides
\begin{align*}
s \alpha + \frac{\alpha}{\beta} + \sum_{i \in S} \frac{c_i}{\ell} &\le (1 + \gamma) (\abs{A_c} + h) + s + \abs{S} + \frac{\alpha}{\beta} + \beta -\alpha
\\ &\le (1 + \gamma)(\abs{A_c} + h) + s + \abs{S} + 1 \le (1 + \gamma) n.
\end{align*}
Note that $\beta + {\alpha}/{\beta} - \alpha \le 1$ for $\beta \in [1,\alpha]$, since the expression is one at $\beta = 1$ and $\beta = \alpha$ and it second derivative as function of $\beta$ is positive. 
Thus
\begin{align*}
\frac{\NSW(\xsss)}{\NSW(\xalg)} \le \left(\left(\frac{(1 + \gamma)(\abs{A_c} + h) + s + \abs{S} + 1 }{s + 1 + \abs{S}}\right)^{s + 1 + \abs{S}}\right)^{1/n}
\le  \left(\frac{(1 + \gamma) n}{s + 1 + \abs{S}}\right)^{{(s + 1 + \abs{S})}/{n}} \le e^{e^{-{1}/{(1 + \gamma)}}},
\end{align*}
since $((1 + \gamma) \delta)^{{1}/{\delta}}$ as a function of $\delta$ attains its maximum for $\delta = \frac{1}{(1 + \gamma)} e^{{1}/{(1 + \gamma)}}$. The value of the maximum is
$\exp(\exp(-{1}/{(1 + \gamma)}))$. Table~\ref{function table} contains concrete values for small non-negative values of $\gamma$. 

\begin{table}[t]
\begin{center}
\begin{tabular}{|r|r|r|r|r|r|} \hline
$1 + \gamma$ & 1.00 & 1.01 & 1.02 & 1.03 & 1.04 \\ \hline
$\exp(\exp(-{1}/{(1 + \gamma)}))$ &
1.44467 & 1.44997 &1.45523 & 1.46046  & 1.46566\\ \hline
\end{tabular}
\end{center}
\caption{The factor $\exp(\exp(-{1}/{(1 + \gamma)}))$ as a function of $1 + \gamma$.}\label{function table}
\end{table}

\begin{theorem} Let $\eps \in (0,1/4]$, let $\gamma = 4 \eps$, let $\xalg$ be the allocation computed by the algorithm for the rounded utilities and caps, and let $x^*$ be an allocation maximizing Nash social welfare for the rounded utilities and caps. Then
  \[    \NSW(x^*)/\NSW(\xalg) \le e^{e^{-1/(1 +\gamma)}}.\]
\end{theorem}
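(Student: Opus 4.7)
The plan is to assemble the bound stated in the theorem from the chain of inequalities that has already been assembled in the preceding pages; what is left is to put the pieces together and verify that the bookkeeping goes through.

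First I would justify $\NSW(x^*) \le \NSW(\xstarstar)$ using Lemma~\ref{Auxiliary Problem}(4), which handed us exactly this inequality for the auxiliary (uniform) problem. Next, I would observe that the relaxed auxiliary problem is an optimization over a larger feasible region than the auxiliary problem (divisible reassignment is allowed for every good outside $B$), so $\NSW(\xstarstar)\le \NSW(\xsss)$. Combining the two gives $\NSW(x^*)\le \NSW(\xsss)$, reducing the theorem to comparing $\NSW(\xsss)$ against $\NSW(\xalg)$.

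Second, I would invoke the upper bound on $\NSW(\xsss)$ proven in the lemma preceding the theorem together with the lower bound on $\NSW(\xalg)$ derived from the "mass-shifting" argument (moving value toward the two extremes $\ell$ and $\min(c_i,\alpha\ell)$, which only decreases NSW). The ratio then collapses to
\[
\frac{\NSW(\xsss)}{\NSW(\xalg)} \;\le\; \left(\alpha^{s}\cdot \frac{\alpha}{\beta}\cdot \prod_{i\in S}\frac{c_i}{\ell}\right)^{1/n},
\]
exactly as displayed earlier, because the factors involving $A_c$, $[h]$, $\abs{T}$, and the $t$ agents landing at $\alpha\ell$ cancel between numerator and denominator.

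Third, I would apply the inequality between geometric and arithmetic means to the $s+1+\abs{S}$ factors inside the product, yielding
\[
\frac{\NSW(\xsss)}{\NSW(\xalg)}\;\le\;\left(\left(\frac{s\alpha+\alpha/\beta+\sum_{i\in S}c_i/\ell}{s+1+\abs{S}}\right)^{s+1+\abs{S}}\right)^{1/n}.
\]
To bound the numerator I would reuse the resource (mass) inequality, already established, which after rearranging and using $\beta+\alpha/\beta-\alpha\le 1$ on $\beta\in[1,\alpha]$ gives $s\alpha+\alpha/\beta+\sum_{i\in S}c_i/\ell\le (1+\gamma)n$. The convexity fact about $\beta+\alpha/\beta-\alpha$ is immediate from checking endpoints and the sign of the second derivative, so this step is routine.

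Finally, setting $\delta:=(s+1+\abs{S})/n\in(0,1]$, the bound becomes $\bigl((1+\gamma)/\delta\bigr)^{\delta}$, a function whose maximum over $\delta>0$ is $\exp(\exp(-1/(1+\gamma)))$, attained at $\delta = e^{-1/(1+\gamma)}/(1+\gamma)$. The main (mild) obstacle is keeping the bookkeeping of capped/uncapped agents, the partition $A_u\cap \Csss = S\cup T$, and the $s,t$ counts consistent with the counts $\abs{A_c}+h+s+t+1+\abs{S}+\abs{T}=n$ used to cancel terms and to derive the resource inequality; once that is correct, the final maximization is a one-variable calculus exercise and the theorem follows.
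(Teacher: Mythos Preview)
Your proposal is correct and follows essentially the same approach as the paper: the theorem is simply the capstone of the derivation in Section~\ref{approximation ratio}, and you have faithfully assembled the chain $\NSW(x^*)\le\NSW(\xstarstar)\le\NSW(\xsss)$, the upper bound on $\NSW(\xsss)$, the mass-shifting lower bound on $\NSW(\xalg)$, the AM--GM step, the resource inequality with the $\beta+\alpha/\beta-\alpha\le 1$ trick, and the one-variable maximization, exactly as the paper does. The paper in fact gives no separate proof block for this theorem, since the displayed inequalities immediately preceding it already establish the bound.
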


\begin{corollary} $\eps \in (0,1/4]$, let $r = 1+ \eps$, let $\gamma = 4 \eps$, let $\xalg$ be the allocation computed by the algorithm for the utilities and caps rounded to powers of $r$, and let $x^*$ be an allocation maximizing Nash social welfare for the original utilities and caps. Then
  \[    \NSW(x^*)/\NSW(\xalg) \le r \cdot e^{e^{-1/(1 +\gamma)}}.\]
  \end{corollary}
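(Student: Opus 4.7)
The plan is to chain the preceding Theorem with the reduction Lemma from Section~2.1. The Theorem establishes that $\xalg$ is an $e^{e^{-1/(1+\gamma)}}$-approximation for the rounded problem, i.e.\ the problem whose utilities $v_{i,j,\ell}$ and caps $d_i$ are the original $u_{i,j,\ell}$ and $c_i$ rounded up to the next power of $r$. The reduction Lemma states that any $\gamma'$-approximation for the rounded problem is a $(\gamma' r)$-approximation for the original problem. Taking $\gamma' = e^{e^{-1/(1+\gamma)}}$ immediately yields the claimed bound $r \cdot e^{e^{-1/(1+\gamma)}}$.

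If one prefers to spell it out in the notation of the reduction Lemma, the argument is the three-step telescope already recorded there. Let $x^*$ be optimal for the original utilities and caps $(u,c)$. Since $u \le v$ and $c \le d$ componentwise, monotonicity of $\NSW$ in utilities and caps gives $\NSW(x^*,u,c) \le \NSW(x^*,v,d)$. The Theorem then yields $\NSW(x^*,v,d) \le e^{e^{-1/(1+\gamma)}} \cdot \NSW(\xalg,v,d)$, using that $x^*$ is a fortiori no better than an optimum for the rounded instance. Finally, $v \le r u$ and $d \le r c$ give $\NSW(\xalg,v,d) \le r \cdot \NSW(\xalg,u,c)$. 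Multiplying the three inequalities produces the corollary.

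There is essentially no obstacle: the statement is a direct composition of the Theorem and the reduction Lemma, and the only point requiring care is to keep track of which utilities and caps each instance of $\NSW$ refers to. In particular, one should notice that the Theorem's ratio is stated between $\xalg$ and an optimum of the rounded instance, whereas the Corollary compares $\xalg$ to an optimum of the original instance; this discrepancy is precisely what the factor $r$ in the Lemma absorbs.
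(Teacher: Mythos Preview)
Your proposal is correct and matches the paper's intended argument: the Corollary is obtained by composing the preceding Theorem (which bounds the ratio for the rounded instance) with the reduction Lemma of Section~2.1 (which converts a $\gamma'$-approximation for the rounded problem into a $\gamma' r$-approximation for the original). The paper does not spell out a separate proof for the Corollary, precisely because this composition is immediate.
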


\subsection{Guarantees for Individual Agents}\label{individual guarantees}

The allocation computed by our algorithm 
maximizes NSW up to a factor 1.45. By Lemma~\ref{Four-eps-pEF}, it also gives any uncapped agent $i$ the guarantee $\min_{j \in x_k} P_k(x_k - j) \le (1 + 4\eps) P_i(x_i)$ for every other agent $k$. This guarantee is not meaningful for agent $i$ as the left hand side is in terms of the utility for agent $k$. We now show that it implies $\min_{j \in x_k} u_i(x_k - j) \le (2 + 4\eps) u_i(x_i)$, i.e., the utility for $i$ of $k$'s bundle minus one item is essentially bounded by twice the utility of $i$'s bundle for $i$. The proof shows that the additional utility for $i$ of the items that $k$ has in excess of $i$ up to one item is bounded by $(1 + \eps) u_i(x_i)$. In the case of one copy per good, $x_k$ and $x_i$ are disjoint and hence any item in $x_k$ is in excess of $i$'s possession of the same good.

\begin{theorem}\label{guarantee for an agent}  The allocation computed by the algorithm satisfies $\min_{j \in x_k} u_i(x_k - j) \le (2 + {4}\eps) u_i(x_i)$ for any agent $k$ and any uncapped agent $i$. 
\end{theorem}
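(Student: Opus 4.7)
The plan is to invoke Lemma~\ref{Four-eps-pEF} to obtain a good $j^* \in x_k$ with $P_k(x_k - j^*) \le (1 + 4\eps) P_i(x_i)$, and then translate this bound, which is phrased in terms of $k$'s utility-to-price ratio $\alpha_k$, into the desired bound on $u_i(x_k - j^*)$. The subtlety relative to the single-copy argument in footnote~\ref{footnoteBarman} is that $x_i$ and $x_k$ may share copies of the same good, so I cannot simply charge each item of $x_k - j^*$ via the MBB inequality $u_{i,j} \le \alpha_i p_j$: copies of a good $j$ at indices $\ell \le m(j, x_i)$ can have utility for $i$ strictly larger than $\alpha_i p_j$.

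First I would split $u_i(x_k - j^*)$ good by good. Setting $a_j = m(j, x_i)$ and $b_j = m(j, x_k - j^*)$, I write
\[
\sum_{\ell=1}^{b_j} u_{i,j,\ell} \;=\; \sum_{\ell=1}^{\min(a_j, b_j)} u_{i,j,\ell} \;+\; \sum_{\ell = a_j + 1}^{b_j} u_{i,j,\ell},
\]
where the second sum is empty when $b_j \le a_j$. Summed over $j$, the first piece is at most $u_i(x_i)$, because it is a prefix (of length $\min(a_j, b_j) \le a_j$) of the same nonincreasing sequence that defines $u_i(x_i)$'s contribution from good $j$. For the second piece, inequality (\ref{interval for alpha}) applied to agent $i$ at multiplicity $a_j$ gives $u_{i,j,\ell} \le u_{i,j,a_j+1} \le \alpha_i p_j$ for every $\ell > a_j$, so the total contribution of the second pieces is at most $\alpha_i \sum_j \max(0, b_j - a_j)\, p_j \le \alpha_i \sum_j b_j p_j$.

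The remaining step will be to bound $\sum_j b_j p_j$ by $P_k(x_k - j^*)$. Every copy counted in $b_j$ is an item that $k$ owns, since $b_j \le m(j, x_k)$, so inequality (\ref{interval for alpha}) applied to agent $k$ gives $u_{k, j, \ell} \ge \alpha_k p_j$ for all $\ell \le b_j$; summing yields $u_k(x_k - j^*) \ge \alpha_k \sum_j b_j p_j$, i.e.\ $\sum_j b_j p_j \le P_k(x_k - j^*)$. Combining this with the split above, the $4\eps$-$p$-EF1 guarantee from Lemma~\ref{Four-eps-pEF}, and the identity $\alpha_i P_i(x_i) = u_i(x_i)$ then gives
\[
u_i(x_k - j^*) \;\le\; u_i(x_i) + \alpha_i P_k(x_k - j^*) \;\le\; u_i(x_i) + (1 + 4\eps)\, u_i(x_i) \;=\; (2 + 4\eps)\, u_i(x_i),
\]
as required. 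The main obstacle is purely the bookkeeping around multiplicities: one has to apply (\ref{interval for alpha}) in its upper-bound form for $i$ (at indices above $m(j, x_i)$) but in its lower-bound form for $k$ (at indices at most $m(j, x_k)$), and keep in mind that removing one item from $x_k$ only decreases multiplicities, so both one-sided bounds remain applicable to every index that actually occurs in $x_k - j^*$.
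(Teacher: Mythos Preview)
Your proof is correct and follows essentially the same approach as the paper's: the paper phrases the split as $u_i(x_k - g) \le u_i(x_i \cup (x_k - g)) = u_i(x_i) + \text{(excess terms)}$, where the union is taken with max multiplicities, but this is exactly your decomposition into the prefix of length $\min(a_j,b_j)$ plus the tail from $a_j+1$ to $b_j$. The subsequent bounds via (\ref{interval for alpha}) for $i$ (upper side) and for $k$ (lower side), and the final chaining through $P_k(x_k - j^*) \le (1+4\eps)P_i(x_i)$, are identical.
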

\begin{proof} Let $g$ be such that $P_k(x_k - g) \le (1+ 4\eps) P_i(x_i)$. Then
  \begingroup
  \allowdisplaybreaks
  \begin{align*}
   u_i(x_k - g) & \le u_i (x_i \cup x_k - g) & \text{more never harms}\\
                                  &= u_i(x_i) + \sum_j \sum_{\ell = m(j,x_i) + 1}^{m(j,x_k \cup x_i - g)} u_{i,j,\ell}\\
                                  &\le u_i(x_i) +  \sum_j \sum_{\ell = m(j,x_i) + 1}^{m(j,x_k \cup x_i - g)} \alpha_i p_j &\text{since $u_{i,j,\ell}/p_j \le \alpha_i$ for $\ell > m(j,x_i)$}\\
                                  &\le u_i(x_i) +  \sum_j \sum_{\ell = 1}^{m(j,x_k - g)} \alpha_i p_j \\
                                  &\le u_i(x_i) + \sum_j \sum_{\ell = 1}^{m(j,x_k - g)} \alpha_i \frac{u_{k,j,\ell}}{\alpha_k} &\text{since $u_{k,j,\ell}/p_j \ge \alpha_k$ for $k \le m(j,x_k)$}\\
                &\le u_i(x_i) + \alpha_i P_k(x_k - g) &\text{definition of $P_k(x_k - g)$}\\
                &\le u_i(x) + \alpha_i {(1+4\eps)} P_i(x_i)   &\text{since $P_k(x_k - g) \le {(1+4\eps)} P_i (x_i)$}\\
    &= (2 + {4}\eps) u_i(x_i) &\text{since $u_i(x_i) = \alpha_i P_i(x_i)$.}
  \end{align*}
  \endgroup
\end{proof}

For the case of only a single copy per good, $\min_{j \in x_k} u_k(x_k - j) \le (1 + \eps) u_i(x_i)$ holds as was shown by Barman et al.~\cite{DBLP:journals/corr/BarmanMV17}; see Footnote~\ref{footnoteBarman} for a proof. We do not know whether the factor 2 in the Theorem above is best possible. We show in Section~\ref{EnvyFreeness} that a factor larger than 1.2 is necessary.

\subsection{Polynomial Running Time}\label{running time}

Recall that $n$ is the number of agents, $m$ is the number of goods, there are $k_j$ copies of good $j$, and $M = \sum_j k_j$ is the total items. We also define $U = \max_{i,j,k} u_{ijk}/\min_{i,j,k: u_{ijk} \neq 0}$ as the ratio of the maximum to minimum non-zero utility.

The analysis follows Barman et al.~with one difference. Lemma~\ref{max price} is new. For its proof, we need the revised definition of $\beta_3$.

\begin{lemma} The price of the least spending uncapped agent is non-decreasing. \end{lemma}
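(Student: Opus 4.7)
The plan is to prove the stronger invariant that at the end of every iteration of the main while-loop, every uncapped agent $k$ in the resulting allocation $x'$ satisfies $P_k(x'_k) \ge L_0$, where $L_0 \assign P_i(x_i)$ denotes the price of the least spending uncapped agent $i$ at the start of that iteration. This immediately yields the claim, since the minimum price over uncapped agents in $x'$ is then $\ge L_0$.

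I would split the argument by the two branches of the main loop. For the \emph{else-branch} (price update), the allocation and all utilities are untouched, so the set of uncapped agents is preserved verbatim; it therefore suffices to show that $P_k$ is non-decreasing for every $k$. First I would verify that the chosen factor $\beta = \min(\beta_1,\beta_2,\max(1,\beta_3),\beta_4)$ is at least $1$: for $k \in S$ and $j \notin S$, the closure properties of $S$ (Lemma~\ref{closure properties of $S$}) forbid $(k,j)$ from being a tight edge, so $u_{k,j,m(j,x_k)+1}/p_j < \alpha_k$ and $\beta_1 > 1$; symmetrically $\beta_2 > 1$; $\beta_4 \ge r > 1$ since any uncapped $h \notin S$ satisfies $P_h(x_h) \ge P_i(x_i)$; and $\max(1,\beta_3) \ge 1$ by construction. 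Multiplying $p_j$ by $\beta$ for $j \in S$ and dividing $\alpha_k$ by $\beta$ for $k \in S$ then scales $P_k(x_k)$ by $\beta \ge 1$ for $k \in S$ and leaves $P_k(x_k)$ unchanged for $k \notin S$, so every uncapped price weakly increases.

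For the \emph{then-branch} (swap along an improving path $P = (i=a_0,g_1,a_1,\ldots,g_h,a_h)$), the vectors $p$ and $\alpha$ are fixed, and Lemma~\ref{effect of a sequence of swaps} describes the new bundles explicitly. I would then run a case analysis over agents: (i) agents off the path keep their bundles, so their price is unchanged and $\ge L_0$ if they are uncapped; (ii) $a_0 = i$ either keeps its bundle (when $h' > 0$) or gains a copy of $g_1$ (when $h' = 0$), so $P_i(x'_i) \ge P_i(x_i) = L_0$; (iii) every intermediate $a_\ell$ with $h' < \ell < h$ and the terminal $a_h$ satisfy $P_{a_\ell}(x'_{a_\ell}) > (1+\eps)L_0 > L_0$ directly from Lemma~\ref{effect of a sequence of swaps}; (iv) if $h' \ge 1$, then $a_{h'}$ only gains $g_{h'+1}$, so $P_{a_{h'}}(x'_{a_{h'}}) \ge P_{a_{h'}}(x_{a_{h'}})$, and since its utility strictly grows, being uncapped after the swap forces it to be uncapped before, and hence $P_{a_{h'}}(x_{a_{h'}}) \ge L_0$ by the inductive hypothesis.

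The hard part is that, unlike in the else-branch, the set of uncapped agents can change during the then-branch: any agent losing net utility (namely $a_h$, and any intermediate $a_\ell$ with $p_{g_{\ell+1}} < p_{g_\ell}$) could transition from capped to uncapped. The case analysis above is robust to this precisely because Lemma~\ref{effect of a sequence of swaps} bounds $P_{a_\ell}(x'_{a_\ell})$ in terms of $L_0$ for exactly the agents whose utility can decrease, independently of their prior capped/uncapped status. Combining both branches, the minimum price over uncapped agents is non-decreasing from iteration to iteration.
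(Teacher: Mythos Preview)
Your proof is correct and follows the same two-branch case split as the paper, but is considerably more thorough: the paper dismisses the price-increase branch with ``This is clear for price increases'' and for the swap branch appeals to Lemma~\ref{effect of a sequence of swaps} without discussing capped/uncapped transitions, whereas you explicitly verify $\beta \ge 1$ via the closure properties of $S$ and argue that agents whose utility may drop are exactly those covered by the $(1+\eps)L_0$ lower bound in Lemma~\ref{effect of a sequence of swaps}. One cosmetic omission: your case (i) refers to ``agents off the path'', but the on-path agents $a_1,\ldots,a_{h'-1}$ (when $h' > 1$) also keep their bundles by Lemma~\ref{effect of a sequence of swaps} and should be folded into that case; the reasoning for them is identical.
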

\begin{proof} This is clear for price increases. Consider a sequence of swaps along an improving  path $P = (i=a_0,g_1,a_1,\ldots,g_h, a_h)$, where the agent $a_h$ loses a good, the agents $a_\ell$, $h' < \ell < h$, lose and gain a good, and the agent $a_{h'}$ gains a good. By Lemma 1, all agents $a_\ell$ with $h' < \ell \le h$ have a price of at least $(1 + \eps) P_i(x_i)$ after the swap. Also the price of agent $a_{h'}$ does not decrease. \end{proof}

\begin{lemma}\label{max price} For any agent $k$, let $j_k$ be a highest price item in $x_k$. Then $\max_k P_k(x_k - j_k)$ does not increase in the course of the algorithm as long as this value is above $(1 + \eps) \min_{\text{uncapped $i$}} P_i(x_i)$. Once $\max_k P_k(x_k - j_k) \le (1 + \eps) \min_{\text{uncapped $i$}}P_i(x_i)$, the algorithm terminates. \end{lemma}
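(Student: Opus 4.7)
The plan is to analyze how the two kinds of operations performed in the main loop---swaps along an improving path and price updates in the else branch---affect $M := \max_k P_k(x_k - j_k)$. The second claim (termination) is essentially by definition: if $M \le (1+\eps)\min_{\text{uncapped $i$}} P_i(x_i)$, then for every agent $k$ and every uncapped agent $i$ there is a good $j \in x_k$ with $P_k(x_k - j) \le (1+\eps) P_i(x_i)$, so the allocation is $\eps$-\pEF and the top-of-loop check triggers termination. For the first claim I therefore assume throughout that $M > (1+\eps)\min_{\text{uncapped $i$}} P_i(x_i)$ and let $i$ denote the least-spending uncapped agent chosen in the current iteration; I treat the two operations separately.

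For a sequence of swaps along an improving path $P = (i = a_0, g_1, a_1, \ldots, g_h, a_h)$, I apply Lemma~\ref{effect of a sequence of swaps} agent by agent. For $\ell < h'$ the bundle $x_{a_\ell}$ is unchanged. For $h' \le \ell < h$, and for $\ell = 0$ when $h' = 0$, the lemma directly exhibits a good $j$ in the new bundle $x'_{a_\ell}$ with $P_{a_\ell}(x'_{a_\ell}-j) \le (1+\eps) P_i(x_i) < M$, so $\min_j P_{a_\ell}(x'_{a_\ell}-j) < M$. For $a_h$, which loses the copy $g_h$, I prove the subclaim $\min_j P_{a_h}(x'_{a_h}-j) \le \min_j P_{a_h}(x_{a_h}-j) \le M$ using diminishing returns $u_{a_h,j,\ell+1} \le u_{a_h,j,\ell}$: after removing $g_h$ the largest available marginal $\max_j u_{a_h,j,m(j,x'_{a_h})}/\alpha_{a_h}$ is no smaller than before, since either the old argmax is a good distinct from $g_h$ (still present with the same marginal) or it is a copy of $g_h$ whose predecessor now becomes the last copy with a marginal at least as large. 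Agents off the path have unchanged bundles. Combining these cases gives $M' \le M$.

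For a price update in the else branch, the key point is that since BFS from $i$ discovered no improving path, $i$ does not \epenvy any agent reachable from $i$, so $M_S := \max_{k \in S} \min_j P_k(x_k-j) \le (1+\eps) P_i(x_i) < M$ and therefore $M = M_{\bar S} := \max_{k \notin S} \min_j P_k(x_k-j)$. The update multiplies every $P_k(x_k)$ and every marginal $u_{k,j,m(j,x_k)}/\alpha_k$ with $k \in S$ by $\beta$ and leaves the quantities outside $S$ unchanged, so $M' = \max(\beta M_S, M_{\bar S})$. Since $\beta_1, \beta_2, \beta_4 > 1$, we have $\beta \le \max(1,\beta_3)$. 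If $\beta_3 \ge 1$, then $\beta \le \beta_3 = M_{\bar S}/(r^2 P_i(x_i))$, hence $\beta M_S \le \beta_3(1+\eps)P_i(x_i) = M_{\bar S}(1+\eps)/r^2 \le M_{\bar S}/r \le M_{\bar S}$ and $M' \le M$. If $\beta_3 < 1$, then $\beta = 1$, so nothing moves, and the else-branch break condition $\beta_3 \le \min(\beta_1,\beta_2,\beta_4)$ is satisfied, so the algorithm terminates.

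I expect the main obstacle to be the subclaim for $a_h$: with multi-copy goods and diminishing returns one has to argue carefully that losing one copy of $g_h$ cannot decrease the maximum marginal available for a second removal, which is precisely what makes the ``min after one removal'' monotone under item losses. The revised denominator $r^2$ in the definition of $\beta_3$ is what gives the slack $(1+\eps)/r^2 \le 1/r$ needed in the price-update estimate.
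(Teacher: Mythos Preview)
Your proposal is correct and follows essentially the same approach as the paper: decompose into swap steps (handled via Lemma~\ref{effect of a sequence of swaps} plus the monotonicity of $\min_j P_{a_h}(x_{a_h}-j)$ under removal of one copy, which the paper states tersely as ``since $k$ loses a good'') and price-update steps (where agents in $S$ already satisfy $\min_j P_k(x_k-j)\le(1+\eps)P_i(x_i)$, and $\beta\le\beta_3$ gives the $r^2$ slack that keeps $\beta M_S$ below the unchanged outside maximum $M_{\bar S}$). Your price-update computation is in fact a bit more direct than the paper's, which routes the same comparison through the new least spender; one cosmetic remark is that $\beta\le\max(1,\beta_3)$ follows immediately from the definition of $\beta$ as a minimum, while the premise $\beta_1,\beta_2,\beta_4>1$ is only needed to conclude $\beta=1$ in the $\beta_3<1$ case.
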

\begin{proof} We first consider price increases and then a sequence of swaps. 

Consider any price increase which is not the last. Then $\beta_4 \le \beta_3$. Let $h$ be the least uncapped spender after the price increase and $q$ be the price vector after the increase. Then $Q_h(x_h) \le Q_i(x_i) \le r Q_h(x_h)$. For $k \in S$, we have $min_j Q_k(x_k - j) \le (1 + \eps) Q_i(x_i) \le (1 + \eps) r Q_h(x_h)$, i.e., agents in $S$ can become violators but we can bound how bad they can become. For the agent $k \not\in S$ defining $\beta_3$, we have 
\[ \min_j P_k(x_\ell - j) = \beta_3 (1 + \eps) r P_i(x_i) \ge (1 + \eps) r Q_i(x_i) \ge (1 + \eps)r Q_h(x_h) \]
and hence the worst violator stays outside $S$. We used the equality $r = 1 + \eps$ and the inequality 
$Q_i(x_i) = \beta P_i(x_i) \le \beta_3 P_i(x_i)$ in this derivation. 

Consider next a sequence of swaps. We have an improving path from $i$ to $k$, say $P = (i=a_0,g_1,a_1,\ldots,g_h, a_h = k)$. Let $x'$ be the allocation after the sequence of swaps. Then $\min_j P_k(x'_k - j) \le \min_j P_k(x_k - j)$ since $k$ loses a good and $\min_j P_\ell(x'_\ell - j) \le (1 + \eps)P_i(x_i)$ for all $\ell \in [0,h-1]$ by Lemma~\ref{effect of a sequence of swaps}. 
\end{proof}

\begin{lemma} The number of subsequent iterations with no change of the least spending agent and no price increase is bounded by $n^2 M$. \end{lemma}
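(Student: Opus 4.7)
The plan is to classify each iteration in the considered range by a \emph{signature} $(h, a_h)$, where $h$ is the length of the shortest improving path from $i$ returned by the BFS and $a_h$ is its terminal agent. Since $h \in \{1, \ldots, n-1\}$ and $a_h$ is one of the at most $n-1$ agents distinct from $i$, there are fewer than $n^2$ signatures. Showing that each signature contributes at most $M$ iterations will yield the bound $n^2 M$.

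The first step is a monotonicity lemma: for every agent $a$, the BFS distance $d(a)$ from $i$ in the tight graph is non-decreasing throughout the considered range. This is analogous to the layered-graph monotonicity underlying Hopcroft--Karp and Dinic's algorithm. The key observation is that a sequence of swaps along a shortest improving path $P=(i=a_0,g_1,a_1,\ldots,g_h,a_h)$ changes only the multiplicities $m(g_\ell,x_{a_\ell})$ and $m(g_\ell,x_{a_{\ell-1}})$, so the only modifications to the tight graph are the (possible) deletion and reversal of arcs along $P$ and the creation of new arcs incident to the involved vertices. A case analysis of~(\ref{interval for alpha}) as multiplicities shift shows that no such change can create a shortcut from $i$ to any agent, since any such shortcut would contradict the choice of $P$ as a shortest improving path.

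Next, I would fix a signature $(h, a)$ and bound the number of iterations with this signature by $M$. In every such iteration, agent $a$ loses exactly one copy of the good $g_h$ chosen as the last edge on the path. By the monotonicity lemma, $d(a)=h$ must hold in every iteration of signature $(h, a)$: once $d(a)\geq h$ it cannot decrease, and the signature forces $d(a)\leq h$. Between two iterations with the same signature, $a$ may only regain copies by acting as an intermediate vertex of some other improving path. A careful amortization, charging each such regained copy to an iteration whose endpoint sits at distance strictly greater than $h$, bounds the total regain and hence shows that at most $M$ iterations can have signature $(h, a)$. Summing over fewer than $n^2$ signatures yields the desired $n^2 M$ bound.

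The main obstacle is the careful bookkeeping required for the monotonicity lemma: one must verify, arc by arc in the tight graph, that after a sequence of swaps no shortcut from $i$ to any vertex has been opened, using the fact that every new tight arc created by a change of multiplicity lies in a direction consistent with the layers of $P$. Once monotonicity is in place, the signature-counting argument in the second step follows by a standard amortization combined with the fact that $a$'s bundle contains at most $M$ items at any moment; the final product $\#\text{signatures}\cdot M < n^2 M$ concludes the proof.
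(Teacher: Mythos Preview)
Your high-level decomposition matches the paper's sketch exactly: the paper also counts, for each terminal agent $k$ and each fixed path length $h$, how many iterations can have that signature, and asserts that each such pair contributes at most $M$ iterations (deferring details to Barman et al.\ and the Edmonds--Karp analogy). Your monotonicity lemma for BFS distances is the right ingredient and your description of how arcs of the tight graph change after a swap sequence is on the right track.

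The gap is in your argument for the $M$-per-signature bound. Charging each regained copy to ``an iteration whose endpoint sits at distance strictly greater than $h$'' does not yield $N_{h,a}\le M$: it only gives a recursive inequality $N_{h,a}\le |x_a|+\sum_{h'>h,\,b}N_{h',b}$, and the right-hand side can be much larger than $M$. The actual reason the regains are harmless is not an amortization but a \emph{timing} observation that your monotonicity lemma already makes available. Fix $(h,a)$ and a good $g$. While $d(a)=2h$ in graph distance, every shortest improving path through $a$ places $a$ at position $h$; hence $a$ can lose only the good $g_h$ sitting at graph distance $2h-1$, and can gain only the good $g_{h+1}$ sitting at graph distance $2h+1$. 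So $a$ can regain a copy of $g$ only when $d(g)=2h+1$, which by monotonicity happens strictly after the entire interval during which $d(g)=2h-1$. Consequently, throughout the window in which $(g,a)$ can serve as the last arc of a signature-$(h,a)$ path, $m(g,x_a)$ is non-increasing, and the number of such iterations with last good $g$ is at most $m(g,x_a)$ at the moment $d(a)$ first becomes $2h$. Summing over $g$ gives at most $|x_a|\le M$. The regains therefore never feed back into the count; they need not be charged anywhere.
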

\begin{proof} Let $i$ be the least spending agent. We count for any other agent $k$, how often the improving path can end in $k$. For each fixed length of the improving path, this can happen at most $M$ times (for details see~\cite{DBLP:journals/corr/BarmanMV17}). The argument is similar to the argument used in the strongly polynomial algorithms for weighted matchings~\cite{Edmonds-Karp}. 
\end{proof}

\begin{lemma} If the least spending uncapped agent changes after a price increase, the value of the old least spending uncapped agent increases by a factor of at least $r$. \end{lemma}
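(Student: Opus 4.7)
The plan is to exploit the invariant, established earlier, that just before a non-terminating price update the quantities $\beta_1$, $\beta_2$, and $\beta_4$ are all integer powers of $r$, together with the definition of $\beta_4 = r^s$ as the smallest power of $r$ that, when used as the multiplier, makes the least-spending uncapped agent $h$ outside $S$ strictly cheaper than the current least spender $i$. From this I aim to deduce that whenever the least spender actually changes, $\beta$ must equal $\beta_4$, and hence $\beta \ge r$.

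First I would show that a new least-spending uncapped agent $h' \ne i$ must lie outside the pre-update set $S$: for any uncapped $k \in S$,
\[ P_k^{\mathrm{new}}(x_k) \;=\; \beta \, P_k(x_k) \;\ge\; \beta \, P_i(x_i) \;=\; P_i^{\mathrm{new}}(x_i), \]
since $i$ was a least spender, so no $k \in S$ can become strictly cheaper than $i$. Hence $h' \notin S$, its value is unchanged, and $P_{h'}^{\mathrm{new}}(x_{h'}) = P_{h'}(x_{h'}) \ge P_h(x_h) \ge r^{s-1} P_i(x_i)$, where the last inequality uses the defining bracket $r^{s-1} \le P_h(x_h)/P_i(x_i) < r^s$ of $\beta_4$. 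Since $h'$ is the new least spender, $P_{h'}^{\mathrm{new}}(x_{h'}) \le \beta P_i(x_i)$, and chaining the two inequalities gives $\beta \ge r^{s-1}$.

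Next I would restrict to price updates that do not terminate the algorithm, which is exactly what the running-time argument needs. Then the termination test $\beta_3 \le \min(\beta_1,\beta_2,\beta_4)$ fails, so $\beta = \min(\beta_1,\beta_2,\beta_4)$ is a power of $r$; combined with $\beta \le \beta_4 = r^s$ and $\beta \ge r^{s-1}$, this forces $\beta \in \{r^{s-1}, r^s\}$. If $\beta = r^{s-1}$, then $\beta P_i(x_i) = r^{s-1} P_i(x_i) \le P_h(x_h) \le P_{h'}(x_{h'})$, so $h'$ would at best tie $i$ and the least spender would not genuinely change. Therefore $\beta = r^s \ge r$, using $s \ge 1$, which follows from $P_h(x_h) \ge P_i(x_i)$ together with $r > 1$.

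The main obstacle I anticipate is the degenerate case $\beta = \max(1,\beta_3)$, where $\beta$ need not be a power of $r$ and could be arbitrarily close to $1$; the argument above does not force $\beta \ge r$ there. I would handle this by observing that such an update triggers the algorithm's termination clause, so it is the final iteration and does not need to be counted by the running-time analysis for which this lemma is invoked.
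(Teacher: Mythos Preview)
Your proposal is correct and follows the same route as the paper: both argue that a change of least spender after a price update forces $\beta = \beta_4 = r^s \ge r$. The paper's proof simply asserts this implication in one line, whereas you supply the supporting details (the new least spender must lie outside $S$, the bracket $r^{s-1} \le \beta \le r^s$ with $\beta$ a power of $r$, and the tie argument ruling out $\beta = r^{s-1}$), and you explicitly flag the terminating case, so your version is a fleshed-out form of the paper's sketch rather than a different approach.
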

\begin{proof} The least uncapped spender changes if $\beta = \beta_4$ and $\beta_4$ is at least $r$. So $P_i(x_i)$ increases by at least $r$. \end{proof}


\begin{theorem} The number of iterations is bounded by $n^3 M^2 \log_r MU$. \end{theorem}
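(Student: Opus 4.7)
The plan is to combine the four preceding lemmas in a counting argument paralleling Barman et al., with the multi-copy, capped adaptation handled by Lemma~\ref{max price}. I would partition the iterations into segments delimited by ``events'', where an event is either a price-update iteration or an iteration at which the least spending uncapped agent changes. The preceding lemma guarantees at most $n^2 M$ swap iterations between consecutive events, so the total iteration count is bounded by roughly $n^2 M$ times the number of events.

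To count events, call a maximal block of iterations with a fixed least spending uncapped agent a \emph{phase}. By the previous lemma, each phase-ending price update grows the departing least spender's value by at least a factor~$r$. Combined with the monotonicity of the least spender's price (the first lemma of this section) and Lemma~\ref{max price} (which, together with the initial bound $\max_k P_k(x_k - j_k) \le MU$, keeps the least spender's value in the range $[1,\,O(MU)]$), each fixed agent can depart the least-spender role at most $O(\log_r MU)$ times, giving $O(n \log_r MU)$ phases overall. Within a single phase, each non-terminal price update either ends the phase (triggered by $\beta = \beta_4$) or strictly enlarges the reachable set $S$ in the tight graph (triggered by $\beta_1$ adding a good or $\beta_2$ adding an agent); since $|S| \le n + m \le 2M$, there are at most $O(M)$ price updates per phase. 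Hence the total number of events is $O(M) \cdot O(n \log_r MU) = O(nM \log_r MU)$, and the total iteration count is $(n^2 M) \cdot O(nM \log_r MU) = O(n^3 M^2 \log_r MU)$, matching the claim.

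The main obstacle is the monotone-growth claim for $S$ across price updates within one phase, because $S$ is recomputed via a fresh BFS at every iteration. I would need to show that swap sequences occurring between consecutive price updates (and themselves possibly altering multiplicities, hence the tight graph) cannot disconnect any good or agent already reachable from the fixed least spender $i$, and that the $\beta_1$- and $\beta_2$-driven additions persist in the next BFS. This is where the argument closely follows Barman et al., with extra care taken for the multi-copy case in which the endpoint intervals in (\ref{interval for alpha}) shift whenever a swap changes $m(j,x_k)$.
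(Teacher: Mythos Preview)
Your phase count $O(n\log_r MU)$ is not justified. A phase (a maximal block with a fixed least spending uncapped agent $i$) can end in two ways: by a price update with $\beta=\beta_4$, or by a swap sequence in which $i$ receives the good $g_1$ and thereby ceases to be the least spender. The lemma you invoke (``if the least spending uncapped agent changes after a price increase, the value of the old least spender increases by a factor at least $r$'') covers only the first case; in the second case $P_i(x_i)$ need not grow by any fixed factor. Hence ``each fixed agent can depart the least-spender role at most $O(\log_r MU)$ times'' does not follow, and your $O(n\log_r MU)$ phase bound collapses. This is the missing idea, not the $S$-monotonicity issue you flag as the main obstacle.

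The paper does not use the $S$-growth argument at all. It observes that \emph{every} non-terminal price increase multiplies $P_i(x_i)$ by at least $r$ (since $\beta_1,\beta_2,\beta_4$ are all powers of $r$ strictly exceeding $1$), so segments ending in a price increase already contribute a factor-$r$ growth of $P_i$; no separate bound on their number is needed. For the remaining segments---those ending with $i$ receiving an item---the paper uses an item-counting argument: after $M$ such endings agent $i$ has gained $M$ items and so must have lost one in between; by Lemma~\ref{effect of a sequence of swaps}, at the moment $i$ loses an item its value exceeds $(1+\eps)$ times the current least-spender value, which by monotonicity is at least $P_i$'s value when $i$ was last the least spender. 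That is where the extra factor $M$ in the paper's $M\cdot n^2 M$ comes from, not from $|S|\le n+m$. Your route via $S$-growth is therefore a detour that, even if the monotone-$S$ claim could be established, still leaves the swap-ending phases uncounted.
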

\begin{proof} Divide the execution into two parts. In the first part, there are agents that own no good, and in the second part every agent owns at least one good and hence all the $P_i(x_i)$ are non-zero.
  
  In any iteration of the first part $P_i(x_i) = 0$, where $i$ is a least spending agent. An shortest improving path $P =  (i=a_0,g_1,a_1,\ldots,g_h,a_{h})$ starting in $i$ visits agents $a_1$ to $a_{h-1}$ owning exactly one good and ends in agent $a_h$ owning more than one good. The sequence of swaps will take away $g_h$ from $a_h$ and assign $g_{i+1}$ to $a_i$ for $0 \le i < h$. Since every price increase will grow $S$ by either a good or an agent, an improving path will exist after at most $n + m$ iterations. Thus there are only $O(n^2)$ iterations in the first part. 

We come to the second part. Divide its execution into maximum subsequences with the same least spender. Consider any fixed agent $i$ and the subsequences where $i$ is the least spender. At the end of each subsequence, $i$ receives an additional item, or we have a price increase. In the latter case, $P_i(x_i)$ is multiplied by at least $r$. Consider the subsequences between price increases. At the end of a subsequence $i$ receives an additional item. It may or may not keep this item until the beginning of the next subsequence. If there are more than $M$ subsequences with $i$ being the least spender, there must be two subsequences such that $i$ loses an item between these subsequences. According to Lemma~\ref{effect of a sequence of swaps}, the value of $i$ after the swap is at least $r$ times the minimum price of any bundle and hence at least $r$ times the price of bundle $i$ when $i$ was least spender for the last time. Thus $P_i(x_i)$ increases by a factor of at least $r$. 

  We have now shown: After at most $M \cdot n^2 M$ iterations with $i$ being the least spender, $P_i(x_i)$ is multiplied by a factor $r$. Thus there can be at most $n^2 M^2 \log_r MU$ such iterations. Multiplication by $n$ yields the bound on the number of iterations. \end{proof}

\section{A Lower Bound on the Approximation Ratio of the Algorithm}\label{analysis is tight}

We show that the performance of the algorithm is no better than $1.44$. Let $k$, $s$ and $K$ be positive integers with $K \ge k$ which we fix later. Consider the following instance. We have $h = s(k-1)$ goods of value $K$ and $n = h + s$ goods of value $1$. There is one copy of each good. The number of agents is $n$ and all agents value the goods in the same way. 

The algorithm may construct the following allocation. There are $h$ agents that are allocated a good of value $1$ and a good of value $K$ and there are $s$ agents that are allocated a good of value $1$. This allocation can be constructed during initialization. The prices are set to the values and the algorithm terminates. 

The optimal allocation will allocate a good of value $K$ to $h$ players and spread the $h + s = sk$ goods of value $1$ across the remaining $s$ agents. So $s$ agents get value $k$ each. 
Thus
\[ \frac{\NSW(OPT)}{\NSW(ALG)} = \left(\frac{ K^h k^s}{(K + 1)^h}\right)^{{1}/{(h  + s)}} = \left(\left(\frac{K}{K+1}\right)^{(k-1)s} k^s \right)^{{1}/{ks}} =\left( \frac{K}{K+1}\right)^{{(k-1)}/{k}} k^{{1}/{k}}.\]

\noindent 
The term involving $K$ is always less than one. It approaches $1$ as $K$ goes to infinity. The second term $k^{{1}/{k}}$ has it maximal value at $k = e$. However, we are restricted to integral values. We have $2^{{1}/{2}} = 1.41$ and $3^{{1}/{3}} = 1.442$. 
For $k = 3$, $({K}/{(K+1)})^{{2}/{3}} = \exp(\frac{2}{3} \ln (1 - 1/{(K+1)})) \approx \exp(- \frac{2}{3(K+1)}) \approx 1 - \frac{2}{3(K+1)}$. So for $K = 666$, the factor is less than $1 - 1/{1000}$ and therefore ${\NSW(OPT)}/{\NSW(ALG)} \ge 1.440$. 

\section{Certification of the Approximation Ratio}\label{Certification}

How can a user of an implementation of the algorithm be convinced that the solution returned has a $\NSW$ no more than $1.445$ times the optimum? She may read this paper and convince herself that the program indeed implements the algorithm described in this article. This is unsatisfactory~\cite{CertifyingAlgs}. In this section, we describe an alternative certificate. 

The algorithm returns an allocation $\xalg$, prices $p_j$ for the goods, and MBB-ratios $\alpha_i$ for the agents. After scaling all utilities and the utility gap of agent $i$ by $\alpha_i$, we have (\ref{final interval for alpha}). The user needs to understand that this scaling has no effect on the optimal allocation. As in Section~\ref{approximation ratio}, we introduce the auxiliary problem with $M = \sum_j k_j$ goods and one copy of each good. The goods have uniform utilities. The user needs to understand that the $\NSW$ of the auxiliary problem is an upper bound (Lemma~\ref{Auxiliary Problem}). We are left with the task of convincing the user of an upper bound on the $\NSW$ of the auxiliary problem. 

 \begin{figure}[th]
\begin{center}\includegraphics[width=0.7\textwidth]{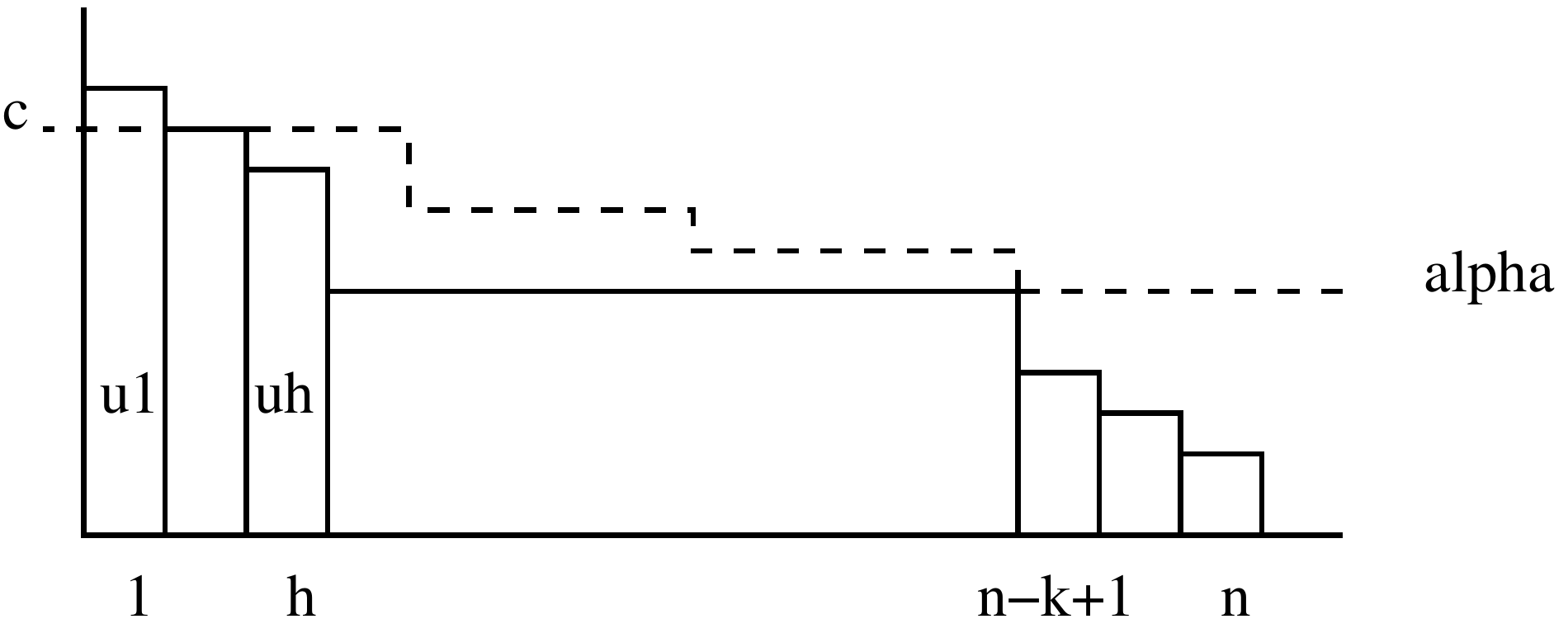}\end{center}
\caption{The allocation constructed in the proof of Theorem~\ref{optimal allocation}. The dashed line above agents $1$ to $n - k$ indicates the utility caps. The solid rectangles visualize the values of the bundles. }\label{auxiliary problem}
\end{figure}

\begin{theorem}\label{optimal allocation} Let $c_1 \ge c_2 \ge \ldots \ge c_n$ be the utility caps of the agents, let $u_1 \ge u_2 \ge \ldots \ge u_M$ be the utilities of the $M$ goods of the auxiliary problem, and let $\xss$ be an optimal allocation for the auxiliary problem. Then 
\[     \NSW(\xss) \le \left( \prod_{1 \le i \le h} \min(c_i,u_i) \cdot \delta^{n - h - k} \cdot \prod_{n - k + 1 \le i \le n} c_i \right)^{1/n}, \]
where $\delta = {\left( \sum_{h + 1 \le j \le M} u_j - \sum_{n - k + 1 \le i \le n} c_i \right)}/{(n - h - k)}$ and $h$ and $k$ are such that $h < n - k$ and 
$c_{n - k +1 } \le \delta < c_{n-k}$ and $\delta < u_h$. The right hand side is illustrated in Figure~\ref{auxiliary problem}. 
\end{theorem}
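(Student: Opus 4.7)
The plan is to bound $\NSW(\xss)$ by first reducing the optimal allocation to a canonical ``water-filling'' form via exchange arguments, and then applying the rearrangement inequality to the top block, AM-GM to the middle block, and a trivial cap bound to the bottom block.

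First, I would show that, without loss of generality, the optimal allocation $\xss$ has the following structure: (i) every agent $i>n-k$ is capped, i.e.\ $v(\xss_i)\ge c_i$ so $\barv_i(\xss_i)=c_i$; and (ii) each of the top $h$ agents receives exactly one good, and the set of goods received is precisely $\{u_1,\ldots,u_h\}$---there is a bijection $\sigma$ with $\xss_i=\{u_{\sigma(i)}\}$ for $i\le h$. Property (i) follows from $c_i\le\delta$ for $i>n-k$: if some bottom agent were uncapped, the average value over the uncapped bundles would be at least $\delta\ge c_i$, so one could transfer a suitable good (or swap a pair of goods) between a ``generous'' middle agent and the uncapped bottom agent; since $x\mapsto\log\min(c,x)$ is concave and nondecreasing, such a move never decreases $\NSW$. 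Property (ii) uses $u_h>\delta$ and $c_h>\delta$: a top good exceeds the water level and every top agent has enough cap room to absorb it, so a two-sided exchange that moves a top good from a middle agent to a top agent (against a lighter good) balances the two affected bundles toward the water level and, by the same concavity, cannot decrease $\NSW$. Iterating, $\xss$ is brought into the canonical form.

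Second, with the canonical form in place, I would apply the rearrangement inequality to the cosorted sequences $(c_i)_{i\le h}$ and $(u_i)_{i\le h}$ to conclude
\[
\prod_{i\le h}\min(c_{\sigma(i)},u_i)\;\le\;\prod_{i\le h}\min(c_i,u_i),
\]
so the top agents contribute at most $\prod_{i\le h}\min(c_i,u_i)$ to $\NSW(\xss)^n$. By (i) the bottom agents contribute exactly $\prod_{i>n-k}c_i$. For the middle agents, combining (i) and (ii) with value conservation gives
\[
\sum_{i=h+1}^{n-k}v(\xss_i)\;=\;\sum_{j}u_j-\sum_{j\le h}u_j-\sum_{i>n-k}v(\xss_i)\;\le\;\sum_{j>h}u_j-\sum_{i>n-k}c_i\;=\;(n-h-k)\delta,
\]
and because $\delta<c_{n-k}$ no middle agent is capped, so AM-GM yields $\prod_{i=h+1}^{n-k}\barv_i(\xss_i)\le\delta^{n-h-k}$. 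Multiplying the three partial products and taking the $n$-th root gives the claimed bound.

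I expect the main obstacle to be the rigorous justification of property (ii). A single-good move can overshoot a cap or leave a partner at zero, so the exchange has to be two-sided, with the outgoing good chosen so that the pair of factors $\min(c_a,\cdot)\cdot\min(c_b,\cdot)$ does not decrease. A clean way to organize the iteration is via a lexicographic potential on the multiset of bundle values: as long as the allocation fails property (ii), the hypotheses $u_h>\delta$ and $c_h>\delta$ together with the concavity of $x\mapsto\log\min(c,x)$ should guarantee a beneficial swap that strictly reduces the potential, and when no such swap remains the canonical structure has been attained.
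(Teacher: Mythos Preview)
Your plan differs from the paper's in one crucial respect, and that difference creates a real gap. The paper does \emph{not} massage the integral optimum $\xss$ into a canonical form. Instead it passes to a \emph{relaxation}: goods $u_1,\dots,u_h$ must still be allocated integrally, but goods $u_{h+1},\dots,u_M$ may be split fractionally. In this relaxed problem the exchange arguments become trivial (one can always move an infinitesimal amount of value), and the optimal relaxed allocation is exactly the water-filling picture: goods $1,\dots,h$ to agents $1,\dots,h$, value $\delta$ to each middle agent, and $c_i$ to each bottom agent. The upper bound then follows because the relaxed optimum dominates $\NSW(\xss)$.

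Your properties (i) and (ii), by contrast, are asserted for an optimal \emph{integral} allocation, and they need not hold. Take $n=3$, caps $(c_1,c_2,c_3)=(100,100,2)$, and three goods of values $(u_1,u_2,u_3)=(10,10,1)$. Then $k=1$, $h=1$, and $\delta=(10+1-2)/1=9$, which satisfies $c_3\le\delta<c_2$ and $\delta<u_1$. The unique optimal integral allocation (up to symmetry between agents $1$ and $2$) gives one good to each agent with values $(10,10,1)$, so $\barv_3=1<c_3=2$: the bottom agent is \emph{not} capped, and no integral swap can cap her without strictly lowering $\NSW$ (any swap sends $\NSW^n$ from $100$ down to $20$). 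Property (i) therefore fails, and with it your middle-block estimate: agent $2$ has $v(\xss_2)=10>(n-h-k)\delta=9$, so the inequality $\sum_{h+1\le i\le n-k}v(\xss_i)\le(n-h-k)\delta$ is false. The overall bound $100\le 10\cdot 9\cdot 2$ still holds, but only because the slack in the bottom block ($1$ versus $2$) compensates the overshoot in the middle block --- your block-by-block argument cannot see this compensation.

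The fix is exactly the paper's move: declare goods $h+1,\dots,M$ divisible. Then your exchange for (i) (``transfer a suitable good'') becomes ``transfer a suitable \emph{amount}'', which always works; your exchange for (ii) becomes the paper's ``move $u_k$ to this agent and value $\min(u_k,v)$ back'', which again needs divisibility of the returned value. With the relaxation in place, the remainder of your outline (rearrangement for the top block, AM--GM for the middle block, caps for the bottom block) is correct and matches the paper.
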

\begin{proof} We insist that the goods $1$ to $h$ are allocated integrally and allow the remaining goods to be allocated fractionally. Clearly, we cannot allocate more than $c_i$ to any agent, in particular, not to agents $n - k +1$ to $n$ and to agents $1$ to $h$. The optimal way to distribute value $\sum_{h + 1 \le j \le M} u_j$ to agents $h+1$ to $n$ is clearly to allocate $\delta$ each to agents $h+1$ to $n - k$ which all have a cap of more than $\delta$ and to the assign their cap to agents $n - k +1$ to $n$. The items $u_1$ to $u_h$ of value more than $\delta$ are best assigned to the agents with the largest utility caps. Assume that two such items, say $u_\ell$ and $u_k$, are allocated to the same agent. Then one of the first $h$ agents is allocated no such item; let $v$ be the value allocated to this agent. Moving $u_k$ to this agent and value $\min(u_k,v)$ from this agent in return, does not decrease the $\NSW$. Also, if any fractional items are assigned in addition to the first $h$ agents, we move them to agents $h+1$ to $n-k$
and increase the $\NSW$. This establishes the upper bound. \end{proof}

The upper bound can be computed in time $O(n^2 + M)$. We conjecture that it can be computed in linear time $O(n + M)$. We also conjecture that the bound is never worse than the bound used in the analysis of the algorithm. It can be better as the following example shows. We have two uncapped agents and three goods of value $u_1 = 3$, $u_2 = 1$ and $u_3 = 1$, respectively. The algorithm may assign the first two goods to the first agent and the third good to the second agent. The set $B$ in the analysis of the algorithm consists of the first good and the last good. 
Then $\ell = 1$. The optimal allocation allocates $3$ to the first agent and $2$ to the second agent. Thus $\alpha \ell = 2$. The analysis uses the upper bound $\sqrt{4 \cdot 2}$ for the $\NSW$ of the optimal allocation. The theorem above gives the upper bound $\sqrt{3 \cdot 2}$; note that $h=1$, $k = 0$, and $\delta = 2$. 

\section{Envy-Freeness up to one Copy}\label{EnvyFreeness}

For the case of additive valuations and one copy of each good, the optimal allocation is envy-free up to one good as shown in~\cite{CaragiannisKMP016}. Also the allocation constructed by the algorithm by Barman et al.~\cite{DBLP:journals/corr/BarmanMV17} is envy-free up to one good. In this section, we show that these properties hold neither for the multi-copy case nor for the capped case. 

Let $\eps$ be a small positive real, say $\epsilon = 0.01$, let $r = 1 + \eps$, and let $s$ be the smallest power of $r$ greater or equal to $2r^2$. Then $2 < s < 2.04$.
We first give an example for the multi-copy uncapped case. There are two agents and two goods. 
Good $1$ has 5 copies, and good $2$ has 2 copies.
For the first agent, the utility vector for good $1$ is $(s,s,0,0,0)$ and for good $2$ is $(1,0)$. For the second agent, the utility vector for good $1$ is $(s,s,s,0,0)$ and for good $2$ is $(s,s)$.
Then at the optimal NSW allocation, the first agent is allocated two copies of good $1$ and none of good $2$,
while the second agent is allocated three copies of good $1$ and two copies of good $2$. For this allocation, $\NSW = (2s \cdot 5s)^{1/2} = 10^{1/2} s$. Note that allocating one copy each of the second good to each agent gives a $\NSW$ of $((2s + 1) \cdot 4s)^{1/2} = (8s^2 + 4s)^{1/2} < (10s^2)^{1/2}$ since $4 < 2s$. The 
Clearly, the first agent envies the second agent even after removing one copy (of either good) from the allocation of the second agent because $u_1(x_2 - g) \ge (2s + 1) > 2s = u_1(x_1)$ for any choice of $g$.

\begin{lemma} In the case of several copies per good, the allocation maximizing $\NSW$ is not necessarily envy-free up to one copy. \end{lemma}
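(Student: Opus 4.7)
The plan is to verify the counterexample constructed immediately before the lemma statement by establishing two claims: (i) the described allocation $\xs$ with $\xs_1 = \{2 \text{ copies of good }1\}$ and $\xs_2 = \{3 \text{ copies of good }1, 2 \text{ copies of good }2\}$ does maximize $\NSW$ on this instance, and (ii) $\xs$ is not envy-free up to one copy.

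For claim (i), I would argue by enumeration over the finitely many Pareto-efficient allocations. Write $(a,b)$ for the pair in which agent $1$ receives $a$ copies of good $1$ and $b$ copies of good $2$, so that agent $2$ receives $5-a$ copies of good $1$ and $2-b$ copies of good $2$. Because agent $1$'s utility vectors saturate after two copies of good $1$ and one copy of good $2$, and agent $2$'s saturate after three copies of good $1$ and two copies of good $2$, it suffices to check the allocations $(a,b)$ with $1\le a\le 3$ and $0\le b\le 2$ (those giving agent $1$ zero value are clearly not optimal). A direct computation of $u_1(x_1)\cdot u_2(x_2)$ in each case yields
\[
\begin{array}{l|ccccccc}
(a,b) & (1,1) & (1,2) & (2,0) & (2,1) & (2,2) & (3,0) & (3,1) \\\hline
u_1 u_2 & 4s(s+1) & 3s(s+1) & 10s^2 & 4s(2s+1) & 3s(2s+1) & 8s^2 & 3s(2s+1)
\end{array}
\]
The inequality $s>2$ makes $10s^2 > 8s^2+4s \ge 4s(2s+1)$ and dominates all the other candidates as well, so $(2,0)$ is the unique maximizer.

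For claim (ii), I would directly compute $u_1(\xs_2-j)$ for both choices of $j$. Agent $1$ values $\xs_2$ at $u_1(\xs_2) = s\cdot\min(3,2) + 1\cdot\min(2,1) = 2s+1$ because only the first two copies of good $1$ and only the first copy of good $2$ contribute positive utility. The key saturation observation is that removing one copy of good $1$ leaves agent $2$ with two copies of good $1$, still worth $2s$ to agent $1$; removing one copy of good $2$ leaves agent $2$ with one copy, still worth $1$ to agent $1$. Hence $u_1(\xs_2 - j) = 2s+1$ for $j\in\{1,2\}$, while $u_1(\xs_1)=2s$. Since $2s+1>2s$, agent $1$ envies agent $2$ after the removal of any single copy, proving that $\xs$ is not envy-free up to one copy.

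The only mildly delicate step is the case analysis in (i); everything else is a direct calculation. The conceptual content of the example is exactly that agent $1$'s utility saturates after two copies of good $1$, which simultaneously forces the optimum to pile the excess copies onto agent $2$ and blocks envy-elimination by removing a single copy — so the same feature drives both halves of the proof.
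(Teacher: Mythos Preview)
Your proposal is correct and follows essentially the same approach as the paper: you verify the very counterexample described in the text preceding the lemma, checking both that the allocation $(2,0)$ maximizes $\NSW$ and that agent~1 still envies agent~2 after removing any single copy. Your treatment of claim~(i) is more explicit than the paper's (which only compares $(2,0)$ against $(2,1)$ and leaves the remaining cases to the reader); note, though, that your stated range $1\le a\le 3$, $0\le b\le 2$ covers nine pairs while your table lists seven --- the missing cases $(1,0)$ and $(3,2)$ are easily seen to be dominated, and in fact a cleaner Pareto argument forces $a=2$ outright, reducing the check to just $(2,0)$ versus $(2,1)$.
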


What does the algorithm do? The initial assignment is equal to the optimal assignment and sets $p_1 = p_2 = s$ and $ \alpha_1 = \alpha_2 = 1$. Agent 1 is the least spending uncapped agent. The allocation is not $\eps$-$p$-EF1, since $P_1(x_1) = 2s$ and $P_2(x_2) = 5s$ and $\min_{g \in x_2} P_2(x_2 - g) = 4s$. The constraints on $\alpha_1$ are $[0,1]$ by the first good and $[1/s,1]$ by the second good. The tight graph consists only of agent 1. We enter the else-case of the main loop with $S = {1}$. Then $\beta_1 = s >2$, $\beta_2 = \infty$, $\beta_3 = 4s/(2s\cdot r^2) = 2/r^2 < 2$ and $\beta_4 = r^{1 + \floor{\log_r 5/2}} \ge 5/2 \ge \beta_3$. Thus $\beta = \beta_3$. We decrease $\alpha_1$to $r^2/2 \approx 1/2$ and terminate. Now $P_1(x_1) = (2/r^2) \cdot 2s = 4s/r^2$ and hence $(1 + 4\eps) P_1(x_1) \ge 4s = P_2(x_2 - g) = 4s$. The optimal allocation is now $4\eps$-$p$-envy free up to one copy.

We turn to possible improvements of Theorem~\ref{guarantee for an agent}. Since $\min_{g \in x_2} u_1(x_2 - g) = 2s + 1$ and $u_1(x_1) = 2s$, in order to have $\min_{g \in x_2} u_1(x_2 - g) \le (c + \eps) u_1(x_1)$, we need $c \ge 1 + (1 - 2 \eps s)/(2 s) \ge 1.2$. 

\begin{lemma} With $\alpha_1 = r^2/2$, $\alpha_2 = s$, $p_1 = p_2$, the optimal allocation in the example above is $4\eps$-$p$-envy free up to one copy.\end{lemma}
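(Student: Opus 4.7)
The plan is a direct arithmetic verification, since the example has only two agents (both uncapped) and two goods. I would proceed in three steps: confirm that the proposed prices $p_1=p_2=s$ and MBB-ratios $\alpha_1=r^2/2$, $\alpha_2=1$ satisfy the feasibility constraints (\ref{interval for alpha}) for each agent-good pair, compute the bundle prices $P_i(x_i) = u_i(x_i)/\alpha_i$, and then test the $4\eps$-$p$-EF1 inequality in both directions.

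For the feasibility step, the only non-trivial cases involve agent $1$. At the pair $(1,1)$, since $m(1,x_1)=2$ and $u_{1,1,2}=s$, $u_{1,1,3}=0$, the condition reads $0 \le r^2/2 \le 1$, which follows from $r=1+\eps \le 5/4$. At $(1,2)$, since $m(2,x_1)=0$, the binding inequality is $u_{1,2,1}/p_2 = 1/s \le r^2/2$, equivalent to $s r^2 \ge 2$, which is immediate from $s \ge 2r^2$. For agent $2$, the upper bound at both goods evaluates to $s/s = 1 = \alpha_2$, and the lower bounds are $0$ (agent $2$ holds every copy of good $2$, while the fourth copy of good $1$ has utility $0$).

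The bundle prices then work out to $P_1(x_1) = 2s/(r^2/2) = 4s/r^2$ and $P_2(x_2) = 5s$; since $u_{2,1,3}=u_{2,2,2}=s$, removing any one copy from $x_2$ costs $s/\alpha_2 = s$, so $\min_{j\in x_2} P_2(x_2-j) = 4s$, and removing the one accessible copy of good $1$ from $x_1$ costs $s/\alpha_1 = 2s/r^2$, giving $P_1(x_1-g_1) = 2s/r^2$. The $4\eps$-$p$-EF1 inequality from agent $1$ to agent $2$ becomes $4s \le (1+4\eps)\cdot 4s/r^2$, i.e., $r^2 \le 1+4\eps$; since $r^2 = 1+2\eps+\eps^2$ and $\eps^2 \le 2\eps$ for $\eps \le 1/4$, this holds. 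The inequality in the reverse direction, $2s/r^2 \le (1+4\eps)\cdot 5s$, is trivial.

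No step presents any real obstacle; the lemma is in effect the statement that the final $\beta_3$-update performed by the algorithm leaves exactly the slack $r^2 \le 1+4\eps$ that Lemma~\ref{Four-eps-pEF} absorbs into the factor $4\eps$. The only care needed is to track the asymmetric MBB-scaling ($\alpha_1 \neq \alpha_2$), which makes the effective per-copy price $s/\alpha_i$ differ between the two agents even though the posted prices $p_1,p_2$ are equal.
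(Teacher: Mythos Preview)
Your verification is correct and follows the same line as the paper, whose ``proof'' is the narrative paragraph immediately preceding the lemma: compute $P_1(x_1)=4s/r^2$, observe $\min_{g\in x_2}P_2(x_2-g)=4s$, and check $r^2\le 1+4\eps$. You are in fact more thorough than the paper, since you also verify the feasibility constraints~(\ref{interval for alpha}) and the (trivial) reverse direction of the envy test.

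One remark: the lemma as printed has $\alpha_2=s$, whereas you use $\alpha_2=1$. Your choice is the right one---it is what the algorithm actually produces (the else-branch only updates $\alpha_1$), and with $\alpha_2=s$ and $p_1=p_2=s$ the upper bound in~(\ref{interval for alpha}) for the pair $(2,1)$ would read $s\le u_{2,1,3}/p_1 = s/s = 1$, which is false. The ``$\alpha_2=s$'' in the statement is evidently a typo; the intended data are $\alpha_1=r^2/2$, $\alpha_2=1$, $p_1=p_2=s$, exactly as you use.
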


\begin{lemma} Theorem~\ref{guarantee for an agent} does not hold when the constant 2 is replaced by 1.2. \end{lemma}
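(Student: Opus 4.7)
The plan is to exhibit the concrete two-agent, two-good instance from the preceding paragraphs and verify one numerical inequality. Fix $\eps = 0.01$, $r = 1 + \eps$, and let $s$ be the smallest power of $r$ with $s \ge 2r^2$; then $2 < s < 2.04$. Recall the allocation produced by the algorithm on this instance: agent $1$ receives two copies of good $1$, and agent $2$ receives three copies of good $1$ together with both copies of good $2$. The preceding lemma asserts that, with $\alpha_1 = r^2/2$, $\alpha_2 = s$, and $p_1 = p_2 = s$, this is a legitimate $4\eps$-$p$-EF1 output of the algorithm.

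Next I would set $i = 1$, $k = 2$, and compute both sides of the envy bound directly from the utility vectors. We have $u_1(x_1) = 2s$. For either possible choice of removed item $g \in x_2$ (a copy of good $1$ or of good $2$), agent $1$'s valuation of the remainder is $2s + 1$: agent $1$ values at most two copies of good $1$ (at $s$ each) and at most one copy of good $2$ (at $1$), and both candidates for $x_2 - g$ contain exactly that. Hence $\min_{g \in x_2} u_1(x_2 - g) = 2s + 1$.

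It remains to check that $2s + 1 > (1.2 + 4\eps) \cdot 2s$, i.e.\ $1 > (0.2 + 4\eps)\cdot 2s = (0.4 + 8\eps)\, s$, equivalently $s < 1/(0.4 + 8\eps)$. For $\eps = 0.01$ the right-hand side is $1/0.48 \approx 2.083$, and $s < 2.04 < 2.083$, so the strict inequality holds. Thus the algorithm's output satisfies $\min_{j \in x_2} u_1(x_2 - j) > (1.2 + 4\eps)\, u_1(x_1)$, contradicting the hypothetical strengthening of Theorem~\ref{guarantee for an agent} with constant $1.2$ in place of $2$.

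There is no real obstacle; the only thing to be mindful of is choosing $\eps$ (and hence $s$) small enough that the single arithmetic inequality $(0.4 + 8\eps)\, s < 1$ holds, which $\eps = 0.01$ comfortably achieves.
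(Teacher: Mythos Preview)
Your proposal is correct and follows essentially the same route as the paper: the paper's proof is the short paragraph immediately preceding the lemma, which uses the identical two-agent example, the identical identities $\min_{g\in x_2} u_1(x_2-g)=2s+1$ and $u_1(x_1)=2s$, and the resulting arithmetic bound on the admissible constant. The only cosmetic difference is that the paper phrases the required inequality as $c \ge 1 + (1-2\eps s)/(2s) \ge 1.2$ (using the $(c+\eps)$ form), while you check the equivalent $(1.2+4\eps)\cdot 2s < 2s+1$ directly to match the exact statement of Theorem~\ref{guarantee for an agent}.
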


For the linear capped case, again we have two agents, and this time we have four goods with one copy each.
The utility vectors of both agents are $(s,s,s,s)$, but the first agent is capped at $3$, while the second agent is uncapped. Then the optimal NSW allocation allocates one good to the first agent and three goods to the second agent for $\NSW = (s \cdot 3s)^{1/2}$. Note that allocating 2 copies each give a $\NSW = (3 \cdot 2s)^{1/2} < (3 s^2)^{1/2}$ since $s \ge 2$. In the optimal assignment, the first agent envies the second agent, even after removing one good from the allocation of the second agent.

What does the algorithm do? It may construct the optimal assignment during initialization; the prices of all four goods are set to $s$ and both $\alpha$-values are set to one. Agent 1 is the least spending uncapped agent. The tight graph consists of the edges from agent 1 to the goods owned by agent 2 and from these goods to agent 1. An improving path exists and one of these goods is reassigned to agent 1. The algorithm terminates with an allocation in which both agents own two goods.

\begin{lemma} In the case of single copies per good but with utility caps, the allocation maximizing $\NSW$ is not necessarily envy-free up to one good. \end{lemma}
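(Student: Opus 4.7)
The plan is to take the two-agent, four-good instance already set up in the paragraphs preceding the lemma and verify that it exhibits the claimed behavior. Since both agents value all four goods identically at $s$ (where $s$ is the smallest power of $r = 1+\eps$ that is at least $2r^2$, so in particular $2 < s < 2.04$) and since the second agent is uncapped while the first is capped at $3$, every allocation is characterized (up to renaming goods) by the pair $(|x_1|, |x_2|)$ with $|x_1| + |x_2| = 4$.

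First I would enumerate the three essentially distinct splits and compute their Nash social welfare using $\baru_1 = \min(3,|x_1|\,s)$ and $\baru_2 = |x_2|\,s$:
\begin{itemize}
\item $(1,3)$: $\NSW = \sqrt{s \cdot 3s} = s\sqrt{3}$.
\item $(2,2)$: $\NSW = \sqrt{\min(3,2s) \cdot 2s} = \sqrt{6s}$, since $2s > 3$.
\item $(3,1)$: $\NSW = \sqrt{3 \cdot s} = \sqrt{3s}$.
\end{itemize}
Splits of the form $(0,4)$ and $(4,0)$ give $\NSW = 0$ and can be discarded. A direct comparison shows $s\sqrt{3} > \sqrt{6s}$ iff $s > 2$, which holds by construction, so the unique NSW-maximizer allocates one good to agent $1$ and three goods to agent $2$.

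Next I would exhibit the envy. In this allocation, $u_1(x_1) = s$, and for any good $g \in x_2$ we have $u_1(x_2 - g) = 2s$. Since $2s > s$, agent $1$ envies agent $2$ even after removing one good from agent $2$'s bundle; if one instead insists on the capped utility, $\baru_1(x_2 - g) = \min(3,2s) = 3 > s = \baru_1(x_1)$, so the conclusion holds under either convention for envy-freeness in the capped setting. This completes the counterexample and hence the lemma.

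There is no real obstacle here since the numerical calculations are routine; the only thing one must be careful about is the choice of $s$: it must exceed $2$ so that splitting $(1,3)$ beats $(2,2)$, while the specific choice $s = r^{\lceil \log_r 2r^2 \rceil}$ from the preceding multi-copy example is convenient because it lets one reuse the same parameter throughout the section without further adjustment.
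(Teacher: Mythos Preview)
Your proposal is correct and follows essentially the same route as the paper: the paper uses the identical two-agent, four-good instance (all utilities $s$, agent~1 capped at $3$, agent~2 uncapped), argues that the $(1,3)$ split beats the $(2,2)$ split because $s>2$, and observes that agent~1 still envies agent~2 after removing one good. Your write-up is in fact slightly more thorough, since you also rule out the $(3,1)$ split and check the envy inequality under both the uncapped and capped utility conventions.
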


\section{Large Markets}\label{Large Markets}

Let $\delta > 0$ be a constant. We call a market $\delta$-large if $u_{i,j,\ell} \le \delta \cdot u_i(G)/n$, where $G$ is the set of all goods. Note that $u_i(G) = \sum_j \sum_{1 \le \ell \le k_j} u_{i,j,\ell}$. For simplicity, we restrict to instances without utility caps. With utility caps, the treatment becomes more clumsy, but does not give additional insights.

\begin{theorem} For a $\delta$-large market in which all non-zero utilities are powers of $r = 1 + \eps$
  \[      \NSW(x^*)/\NSW(\xalg)  \le (1 + 4 \eps)/(1 - \delta).   \]
\end{theorem}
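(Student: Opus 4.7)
The plan is to show that the large-market hypothesis forces every bundle value $u_i(\xalg_i)$ to sit within a $(1+4\eps)/(1-\delta)$ factor of the average $T/n$, where $T = \sum_i u_i(\xalg_i)$. Since $\NSW(x^*) \le T/n$ follows from AM-GM plus social-welfare dominance (Lemma~\ref{Auxiliary Problem}(2)), a matching lower bound $\NSW(\xalg) \ge (1-\delta) T / (n(1+4\eps))$ then gives the theorem. Throughout I would first pass to the scaled utilities of Section~\ref{approximation ratio}, where $P_i(x_i) = u_i(x_i)$ by (\ref{final price of a bundle}), where the $\delta$-large condition survives intact (both sides of $u_{i,j,\ell} \le \delta u_i(G)/n$ scale by $1/\alpha_i$), and where the ratio $\NSW(x^*)/\NSW(\xalg)$ is invariant.

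The key step is an $(i,k)$-uniform lower bound on $u_i(\xalg_i)$. By Lemma~\ref{Four-eps-pEF}, every pair of (uncapped) agents $i,k$ admits a copy $g \in \xalg_k$, say of good $j_g$, with $P_k(\xalg_k - g) \le (1+4\eps) P_i(\xalg_i)$. Using $P_k(\xalg_k) - P_k(\xalg_k - g) = u_{k,j_g,m(j_g,\xalg_k)}$ together with the scaled large-market bound $u_{k,j_g,m(j_g,\xalg_k)} \le \delta u_k(G)/n$ gives
\[ u_k(\xalg_k) \;\le\; (1+4\eps)\, u_i(\xalg_i) \;+\; \delta\, u_k(G)/n. \]
To make the last term uniform in $k$, I would establish $u_k(G) \le T$ for every $k$ by comparing $\xalg$ to the (feasible, because there are no caps) allocation that gives every copy to $k$, which has uncapped social welfare $u_k(G)$; by Lemma~\ref{Auxiliary Problem}(1) this cannot exceed $T$. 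Substituting and summing over $k$ with $i$ fixed yields $T \le n(1+4\eps) u_i(\xalg_i) + \delta T$, hence
\[ u_i(\xalg_i) \;\ge\; \frac{(1-\delta)\, T}{n(1+4\eps)} \qquad \text{for every } i, \]
and taking the geometric mean gives $\NSW(\xalg) \ge (1-\delta) T / (n(1+4\eps))$.

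For the numerator, Lemma~\ref{Auxiliary Problem}(2) gives $\sum_i u_i(x^*_i) \le T$ in scaled utilities (there are no caps, so $\bar u = u$), and AM-GM yields $\NSW(x^*) \le T/n$. Dividing the two bounds produces the claimed ratio $(1+4\eps)/(1-\delta)$. I do not anticipate a real obstacle; the two things that need checking are the scaling-invariance of the $\delta$-large condition and the feasibility of the ``all items to $k$'' comparison allocation, and both are immediate in the uncapped setting. The only mildly subtle point is that $u_k(G) \le T$ must be obtained from social-welfare optimality of $\xalg$ rather than from any per-agent argument, since a large single bundle $\xalg_k$ need not directly dominate $u_k(G)$.
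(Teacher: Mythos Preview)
Your proposal is correct and follows essentially the same route as the paper: scale utilities so that $P_i=u_i$, use social-welfare optimality of $\xalg$ (Lemma~\ref{Auxiliary Problem}) to get both $\NSW(x^*)\le T/n$ and $u_k(G)\le T$, and combine the $4\eps$-$p$-EF1 guarantee with the $\delta$-largeness bound on the removed item to obtain $u_i(\xalg_i)\ge (1-\delta)T/(n(1+4\eps))$ for every $i$. The only cosmetic difference is that you first bound $u_k(g_k)\le \delta u_k(G)/n$ and then invoke $u_k(G)\le T$, whereas the paper first deduces $u_{k,j,\ell}\le \delta T/n$ and then sums; the arithmetic is identical.
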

\begin{proof} Let $(\xalg, p, \alpha)$ be the allocation, price vector, and scaling factors returned by the algorithm. For simplicity we use $x = \xalg$. We scale all utilities $u_{i,*,*}$ by $\alpha_i$. Then $\alpha_i$ becomes one and we have
  \[        u_{i,j,m(j,x_i) + 1} \le p_j \le u_{i,j,m(j,x_i)} \]
  for all $i$ and $j$. Let $U = \sum_i u_i(x_i) = \sum_i \sum_j \sum_{1 \le \ell \le m(j,x_i)} u_{i,j,\ell}$.

  Let $x^*$ be the allocation maximizing $\NSW$. Then $\sum_i u_i(x^*_i) \le U$ by Lemma~\ref{Auxiliary Problem}(a) and hence $\NSW(x^*) \le \left((U/n)^n\right)^{1/n} = U/n$.

  We next prove a lower bound on $\NSW(\xalg)$. Note that $u_i(G) \le U$ due to Lemma~\ref{Auxiliary Problem}(a). Thus, $u_{i,j,\ell} \le \delta U/n$.

  For any $i$, we have $P_i(x_i) = u_i(x_i)$. Since the allocation returned by the algorithm is $4\eps$-$p$-envy-free up to one copy, we have $\min_{g \in x_k} u_k(x_k - g) \le (1 + 4 \eps) u_i(x_i)$ for every agent $k$. Let $g_k$ be the good that minimizes the left hand side. Summing over all $k$ yields
  \[    \sum_k u_k(x_k) - \sum_k u_k(g_k) \le (1 + 4 \eps) n \cdot u_i(x_i),\]
  and hence
  \[   u_i(x_i) \ge \frac{U - n (\delta/n) U}{(1 + 4 \eps) n} = \frac{1 - \delta}{1 + 4 \eps} \cdot \frac{U}{n} .\]
  Thus
  \[
  \frac{\NSW(x^*)}{\NSW(\xalg)} \le \frac{U/n}{\left(\prod_i u_i(x_i)\right)^{1/n}} \le \frac{U/n}{\left(\left(\frac{1 - \delta}{1 + 4 \eps} \cdot \frac{U}{n}\right)^n\right)^{1/n}} = \frac{1 + 4 \eps}{1 - \delta}\enspace.
  \]
  \end{proof}

\section{The CG- and BMV-bound are Equal}\label{Equivalence}
\newcommand{\bu}{\overline{u}}
\newcommand{\lS}{S_\ell}
\newcommand{\sS}{S_s}
\newcommand{\CGbound}{\mathrm{CG\raisebox{1pt}{\hbox{-}}UB}}
\newcommand{\BMVbound}{\mathrm{BMV\raisebox{1pt}{\hbox{-}}UB}}

In this section, we restrict the discussion to the case of a single copy per good and no utility constraints. We have $n$ agents and $m$ goods. Cole and Gkatzelis~\cite{Cole-Gkatzelis} and Barman, Murthy, and Vaish~\cite{DBLP:journals/corr/BarmanMV17} defined upper bounds on the Nash social welfare of any integral allocation of the goods. We show that the bounds are equivalent. 

Cole and Gkatzelis~\cite{Cole-Gkatzelis} defined their upper bound via spending restricted Fisher markets. 
Each agent has one unit of money and each good has one unit of supply. Goods can be allocated fractionally and $x_{ij}$ is the fraction of good $j$ allocated to agent $i$. A solution to the market is an allocation $x$ and a price $p_j$ for each good $j$ such that 
\begin{enumerate}
\item Each agent spends all his money i.e., $\sum_j x_{ij}p_j = 1$.
\item An agent $i$ spends money only on goods with maximum bang-per-buck i.e., $x_{ij} > 0$ implies $u_{ij}/p_j = \alpha_i$ where $\alpha_i = \max_\ell u_{i\ell}/p_\ell$.
\item Goods with price less than 1 (small goods) are sold completely. Let $\sS=\{j|p_j< 1\}$. Then for all $j\in\sS$, $\sum_i x_{ij}=1$.
\item Exactly one unit of money is spent on each good with price at least 1 (large good). Let $\lS=\{j|p_j\ge 1\}$. Then for all $j\in\lS$, $\sum_i x_{ij}p_j=1$.
\end{enumerate}
The last constraint is the spending constraint and gives the market its name. Cole and Gkatzelis~\cite{Cole-Gkatzelis} show that the Nash social welfare of any integral allocation of goods to agents is at most

\[         \CGbound \assign  \left( \prod_{j \in \lS} p_j  \prod_i \alpha_i \right)^{1/n}. \]

\newcommand{\calS}{{\mathcal S}}

The following bound is implicit in the work of Barman, Murthy, and Vaish~\cite{DBLP:journals/corr/BarmanMV17}. For any scaling vector $\alpha = (\alpha_1,\ldots,\alpha_n)$, define uniform utilities $u_j$ by $u_j = \max_i u_{ij}/\alpha_i$. For a set $S$ of more than $m - n$ goods, let
\[                     a(S) = \frac{\sum_{j \in S} u_j}{\abs{S} - (m - n)} = \frac{\sum_{j \in S} u_j}{n - \abs{\overline{S}}};\]
note that $\abs{S} - (m - n) = n - (m - \abs{S}) = n - \abs{\overline{S}}$. So $a(S)$ is the amount per agent if the total utility of the goods in $S$ is distributed uniformly over $n -\abs{\overline{S}}$ agents. Finally, let
\[     \calS_\alpha = \set{S}{\abs{S} > m - n \text{ and } u_j > a(S) \text{ for $j \not\in S$}}.\]
Then the BMV-bound is defined as follows:
\[ \BMVbound \assign \min_{\alpha > 0} \min_{S \in \calS_\alpha}  \left( \prod_{j \not\in S} u_j \cdot a(S)^{n - \abs{\overline{S}}} \cdot \prod_i \alpha_i\right)^{1/n}. \]

\begin{lemma}$ \BMVbound$ is an upper bound on the Nash social welfare of any integral allocation. \end{lemma}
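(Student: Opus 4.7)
My plan is to show, for every admissible pair $(\alpha, S)$ in the definition of $\BMVbound$, that $\NSW(x^*)^n$ is at most the expression being minimized; taking the minimum over $(\alpha, S)$ then yields the theorem. Fix any integer allocation $x^*$, any $\alpha > 0$, and any $S \in \calS_\alpha$. Since $u_j = \max_i u_{ij}/\alpha_i$ satisfies $u_{ij} \le \alpha_i u_j$ for every $i, j$, setting $v_i \assign \sum_{j \in x^*_i} u_j$ gives $u_i(x^*_i) \le \alpha_i v_i$, hence
\[
\prod_i u_i(x^*_i) \;\le\; \prod_i \alpha_i \cdot \prod_i v_i.
\]
It then suffices to prove the purely combinatorial inequality $\prod_i v_i \le \prod_{j \notin S} u_j \cdot a(S)^{n - \abs{\overline{S}}}$ for the $v_i$'s arising from any partition of the $m$ goods (now viewed with uniform values $u_j$) into $n$ bundles.

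For this combinatorial step I would use a majorization argument. Let $k = \abs{\overline{S}}$, $a = a(S)$, and define $w = (u_{b_1}, \ldots, u_{b_k}, a, a, \ldots, a) \in \mathbb{R}^n$, where $\overline{S} = \{b_1, \ldots, b_k\}$. Both $v$ and $w$ have coordinate sum $\sum_j u_j$, so it is enough to show that the sorted vector $v^{\downarrow}$ majorizes $w^{\downarrow}$: since $(x_1, \ldots, x_n) \mapsto \prod_i x_i$ is Schur-concave on the positive orthant with fixed sum, majorization immediately gives $\prod_i v_i \le \prod_i w_i$, the desired inequality.

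To verify majorization I would check $\sum_{i \le t} v^{\downarrow}_i \ge \sum_{i \le t} w^{\downarrow}_i$ for every $t \in [n]$. For $t \le k$ this is easy: the $t$ largest big goods lie in at most $t$ distinct bundles whose combined $v$-values sum to at least $u_{b_1} + \cdots + u_{b_t}$, so the top $t$ entries of $v^{\downarrow}$ sum to at least this. For $t > k$ the inequality is equivalent to $\sum_{i > t} v^{\downarrow}_i \le (n-t) a$. Let $V$ be the set of bundles that contain at least one big good, so $\abs{V} \le k$. If $v^{\downarrow}_t < a$, every bundle in the bottom has value below $a$ and the bound is immediate. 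Otherwise $v^{\downarrow}_t \ge a$, and combining the estimate $\sum_{b_l \in V\cap\mathrm{bot}} u_{b_l} \le \sum_{i \in V \cap \mathrm{bot}} v_i \le \abs{V \cap \mathrm{bot}} \cdot v^{\downarrow}_t$ with $t - \abs{V} \ge t - k \ge 0$ and $v^{\downarrow}_t \ge a$ shows that the top $t$ bundles of $v$ sum to at least $\sum_l u_{b_l} + (t-k) a$, exactly as required.

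The main obstacle is the case $t > k$ of the majorization. One must carefully track which big-good bundles end up above the cutoff and which end up below: big goods falling into the bottom force $v^{\downarrow}_t$ up (since each has $u_j > a$), but they also inflate the bottom sum; the defining inequality $u_j > a(S)$ for $j \in \overline{S}$ is used precisely to balance these two effects, and $\abs{V} \le k$ controls how many extra small-good bundles must appear in the top to compensate. Everything else — the scaling reduction of the first step and the $t \le k$ case — is routine.
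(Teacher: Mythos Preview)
Your proof is correct and takes a genuinely different route from the paper. The paper argues in three sentences: scale by $\alpha_i$, pass to uniform utilities $u_j$, then \emph{relax} by allowing the goods in $S$ to be split fractionally, and simply asserts that the optimum of this partially-fractional problem is to give each good in $\overline{S}$ to a separate agent and $a(S)$ to everyone else. Your argument keeps everything integral and instead proves directly, via majorization, that any partition vector $v$ majorizes $w=(u_{b_1},\ldots,u_{b_k},a,\ldots,a)$, then invokes Schur-concavity of the product.

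Your $t>k$ case is terse but checks out: with $q=\abs{V}$ and $r=\abs{V\cap\mathrm{bot}}$, one gets
\[
\sum_{i\in\mathrm{top}} v_i \;\ge\; u(B_{\mathrm{top}}) + (t-q+r)\,v^\downarrow_t \;\ge\; u(\overline{S}) - r\,v^\downarrow_t + (t-q+r)\,v^\downarrow_t \;=\; u(\overline{S}) + (t-q)\,v^\downarrow_t \;\ge\; u(\overline{S}) + (t-k)\,a,
\]
using $u(B_{\mathrm{bot}})\le r\,v^\downarrow_t$, $q\le k$, and $v^\downarrow_t\ge a$. (An even cleaner way to see the $t>k$ case: the function $f(t)=\sum_{i\le t}v^\downarrow_i - u(\overline{S}) - (t-k)a$ is concave in $t$ since its increments $v^\downarrow_t-a$ are nonincreasing, and $f(k)\ge 0$ by your $t=k$ case while $f(n)=0$; concavity then forces $f(t)\ge 0$ on $[k,n]$.)

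The trade-off: the paper's relaxation argument is shorter and more conceptual, but it leaves the structure of the partially-fractional optimum as an exercise. Your majorization argument is longer but fully self-contained and avoids any appeal to fractional optima; it also makes explicit exactly where the defining condition $u_j>a(S)$ for $j\notin S$ is used (it guarantees $w^\downarrow$ puts the big goods first, which is what the $t\le k$ case needs).
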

\begin{proof} Scaling the utilities of agent $i$ by $\alpha_i$ does not change the optimal allocation and changes the Nash social welfare of any allocation by $(\prod_i \alpha_i)^{1/n}$. Replacing $u_{ij}$ by $u_j = \max_h u_{hj}$ for every agent $i$ can only increase Nash social welfare. Allowing to allocate the goods in $S$ fractionally can only increase social welfare. Since $S$ is such that $u_j > a(S)$ for $j \not\in S$, the optimal partially fractional allocation is to allocate each $u_j$, $j \not\in S$, to a distinct agent and to allocate $a(S)$ to each one of the remaining agents. \end{proof}

\begin{lemma}\label{characterization of Salpha} For fixed $\alpha$, the BMV-bound is minimized for $S \in \calS_\alpha$ satisfying $u_j > a(S)$ for $j \not\in S$ and $u_j \le a(S)$ for $j \in S$. This $S$ is unique.\end{lemma}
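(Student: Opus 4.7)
The plan is to combine a local-improvement argument with a fixed-point uniqueness argument. Writing $B(S):=\prod_{j\not\in S}u_j\cdot a(S)^{n-|\overline{S}|}$ for the quantity to be minimized over $\calS_\alpha$ (the $\prod_i\alpha_i$ factor being fixed), I will (i) show that if $S\in\calS_\alpha$ contains some $j^*\in S$ with $u_{j^*}>a(S)$, then $S':=S\setminus\{j^*\}$ is still in $\calS_\alpha$ and satisfies $B(S')<B(S)$; and (ii) show that any $S\in\calS_\alpha$ with $u_j\le a(S)$ for every $j\in S$ is unique. Together these imply that starting from any set in $\calS_\alpha$ (e.g.\ $S_0=\{1,\dots,m\}$, which lies in $\calS_\alpha$ vacuously) iteration of (i) terminates at the unique characterized $S$, which is therefore the unique minimizer of the BMV-bound.

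For (i), let $k=|S|-(m-n)$. When $k=1$, $a(S)=\sum_{j\in S}u_j\ge u_j$ for every $j\in S$, so no such $j^*$ exists and there is nothing to do. When $k\ge 2$, the identity $k\,a(S)=u_{j^*}+(k-1)a(S')$ gives $a(S')<a(S)$, which keeps $u_j>a(S')$ for every $j\not\in S'$ and hence $S'\in\calS_\alpha$. Rewriting that identity as the convex combination $a(S)=\tfrac{1}{k}u_{j^*}+\tfrac{k-1}{k}a(S')$ and applying concavity of $\log$ yields
\[ k\log a(S) \ge \log u_{j^*}+(k-1)\log a(S'), \]
which on exponentiating and multiplying both sides by $\prod_{j\not\in S}u_j$ becomes $B(S)\ge B(S')$; strictness follows because $u_{j^*}>a(S)>a(S')$ rules out equality in Jensen.

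For (ii), suppose $S$ and $\widetilde S$ both satisfy the characterization and WLOG $a(S)\le a(\widetilde S)$. The characterization forces $S=\{j:u_j\le a(S)\}$ and $\widetilde S=\{j:u_j\le a(\widetilde S)\}$, so $S\subseteq\widetilde S$. Letting $D:=\widetilde S\setminus S$ and $k'=|D|+k$, every $j\in D$ satisfies $a(S)<u_j\le a(\widetilde S)$; combining $k'a(\widetilde S)=k\,a(S)+\sum_{j\in D}u_j$ with $\sum_{j\in D}u_j\le|D|a(\widetilde S)$ yields $k\,a(\widetilde S)\le k\,a(S)$, forcing $a(\widetilde S)=a(S)$ and therefore $S=\widetilde S$. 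I expect the main obstacle to be the improvement step: one must simultaneously spot the weighted-average identity, apply log-concavity, verify that $S'$ stays in $\calS_\alpha$, and argue strictness; the boundary case $k=1$ is easy to overlook but is essential for the termination argument.
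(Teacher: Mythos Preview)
Your proof is correct. The improvement step (i) is exactly the paper's argument: the paper phrases it as the arithmetic--geometric mean inequality applied to $u_{j^*}$ and $k-1$ copies of $a(S')$, which is the same computation as your Jensen inequality for $\log$ on the convex combination $a(S)=\tfrac{1}{k}u_{j^*}+\tfrac{k-1}{k}a(S')$. You also make explicit two points the paper leaves implicit, namely that $S'$ remains in $\calS_\alpha$ (via $a(S')<a(S)$) and that the boundary case $k=1$ cannot occur.

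Where you genuinely differ is in the uniqueness argument. The paper argues confluence of a greedy removal process: starting from the full good set and repeatedly deleting any $j$ with $u_j>a(S)$, one observes that $a$ is non-increasing, so candidates for removal persist, and hence every removal order reaches the same endpoint. You instead give a direct fixed-point comparison: from the characterization, $S=\{j:u_j\le a(S)\}$ and likewise for $\widetilde S$; assuming $a(S)\le a(\widetilde S)$ forces $S\subseteq\widetilde S$, and the averaging identity $k'a(\widetilde S)=k\,a(S)+\sum_{j\in D}u_j$ together with $u_j\le a(\widetilde S)$ on $D$ collapses to $a(S)=a(\widetilde S)$. Your route is slightly more self-contained: it shows directly that any two sets satisfying the characterization coincide, whereas the paper's confluence argument, as written, literally only shows that the greedy process from the full set has a well-defined endpoint and leaves the reader to connect this to uniqueness among all characterized sets. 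Either argument is short; yours has the advantage of not needing a process at all.
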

\begin{proof} Assume $u_h > a(S)$ for some $h \in S$. Let $T = S - h$. Then
  \[   \left(u_h a(T)^{n - \abs{\overline{T}}}\right)^{1/(n - \abs{\overline{S}})} < a(S), \]
  since the LHS is the geometric mean of $u_h$ and $n - \abs{\overline{T}}$ copies of $a(T)$ and the RHS is
  equal to their arithmetic mean; note that $(u_h + (n - \abs{\overline{T}})a(T))/(n - \abs{\overline{S}}) = a(S)$.

  We can determine $S$ greedily. Start with $S$ equal to the set of all goods. As long as there is a $j \in S$ such that $u_j > a(S)$, remove $j$ from $S$. For\footnote{Let $k = n - \abs{\overline{S}}$. Then $a(S \setminus j) = (u(S) - u_j)/(k - 1) \le  u(S)/k = a(S)$ iff $u(S) \le k u_j$ iff $a(S) \le u_j$.} such a $j$, $a(S \setminus j) \le a(S)$ and hence any candidate for removal stays a candidate for removal. So the removal process always ends up with the same $S$. Also note that for $\abs{\overline{S}} = n - 1$, $a(S) = \sum_{j \in S} u_j \ge u_j$ for all $j \in S$ and hence the process stops before all goods are removed from $S$.
  \end{proof}

\begin{lemma} $\BMVbound \le \CGbound$. \end{lemma}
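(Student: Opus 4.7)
The plan is to exhibit a specific pair $(\alpha, S)$ achieving the BMV-expression value $\CGbound^n$; since $\BMVbound$ is defined as a minimum, this proves the inequality. The natural candidate is to take $\alpha = (\alpha_1,\ldots,\alpha_n)$ to be the MBB-ratios appearing in the definition of $\CGbound$ (coming from the SR Fisher equilibrium with allocation $x$ and prices $p$).

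First I would argue that with this scaling the uniform utilities are $u_j = p_j$ for every good $j$. The inequality $u_j = \max_i u_{ij}/\alpha_i \le p_j$ is immediate from $\alpha_i = \max_\ell u_{i\ell}/p_\ell$. For the reverse inequality I use that in the SR-equilibrium every good is allocated to at least one agent (small goods are fully sold; for $j \in \lS$ a fraction $1/p_j > 0$ is sold), so there is some buyer $i$ with $x_{ij} > 0$, which by MBB satisfies $u_{ij}/p_j = \alpha_i$ and therefore contributes $p_j$ to the maximum defining $u_j$.

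Next I would choose $S = \{j : p_j \le 1\}$, so that $\overline{S} = \{j : p_j > 1\}$. Summing the per-agent budget constraints $\sum_j x_{ij} p_j = 1$ over $i$ gives $\sum_j p_j \sum_i x_{ij} = n$; applying the SR conditions (small goods and goods with $p_j = 1$ are fully sold, goods with $p_j > 1$ receive exactly one unit of money) this collapses to
\[
\sum_{j \in S} p_j + |\overline{S}| = n,
\]
so $a(S) = 1$. For $j \in \overline{S}$ we have $u_j = p_j > 1 = a(S)$, so $S \in \calS_\alpha$ provided $|S| > m-n$. Assuming WLOG that no good has price $0$, the only way this fails is that every good is strictly large; in that edge case I would take $S$ to be the set of minimum-price goods, for which $a(S)$ equals that common minimum, all strict inequalities $u_j > a(S)$ for $j \notin S$ hold, and the computation below goes through identically.

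Finally, plugging in and using $u_j = p_j$,
\[
\prod_{j \notin S} u_j \cdot a(S)^{n - |\overline{S}|} \prod_i \alpha_i \;=\; \prod_{j : p_j > 1} p_j \cdot 1^{|S|} \prod_i \alpha_i \;=\; \prod_{j \in \lS} p_j \prod_i \alpha_i,
\]
where the last equality is because factors of $p_j = 1$ contribute trivially. Taking $n$-th roots, this equals $\CGbound$, so $\BMVbound \le \CGbound$. The main substantive step is the identity $a(S) = 1$; this is precisely where the spending-restricted structure of the Cole--Gkatzelis market (no good earns more than one unit of money) is indispensable, since without it one could not pin down $\sum_{j \in S} p_j$ in this clean way.
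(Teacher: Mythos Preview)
Your proof is correct and follows essentially the same approach as the paper: choose $\alpha$ to be the MBB-ratios from the spending-restricted equilibrium, observe that then $u_j = p_j$, take $S$ to be the small goods, and use the total-money identity to get $a(S)=1$, which makes the BMV expression collapse to $\CGbound$. The only minor differences are cosmetic: you place goods with $p_j=1$ in $S$ rather than in $\lS$ (harmless, as you note), and you explicitly treat the degenerate case $S=\emptyset$, which the paper glosses over.
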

\begin{proof} Consider a solution $(x_{ij}, p_j)$ to the spending restricted Fisher market. The scaling vector for the $\BMVbound$ is now defined as $\alpha_i=\max_j u_{ij}/p_j$.
Let $\bu_{ij} = u_{ij}/\alpha_i$ be the scaled utilities. Then $\bu_{ij} \le p_j$ and $\bu_{ij} = p_j$ whenever $x_{ij} > 0$. Let $\bu_j = \max_i \bu_{ij}$; then $\bu_j = p_j$ since for every $j$, $x_{ij} > 0$ for at least one $i$. Since the total money spent is $n$, one unit is spent on each good in $\lS$, and $p_j$ is spent on good $j \in \sS$, we have 
\[      n = \sum_{j \in \sS} p_j + \abs{\lS}\]
and hence 
\[  a(\sS) = \frac{\bu(\sS)}{n - \abs{\overline{\sS}}} = \frac{\sum_{j \in \sS} p_j}{n - \abs{\lS}} = 1.\]
Since $p_j > 1$ for $j \in \lS$ and $p_j \le 1$ for $j \in \sS$, the $\BMVbound$ is minimized for the set $\sS\in S_{\alpha}$ and hence 
\[   \BMVbound \le  \left( \prod_{j \in \lS} p_j \cdot a(\sS)^{n - \abs{\lS}} \cdot \prod_i \alpha_i \right)^{1/n} =   \left( \prod_{j \in \lS} p_j  \prod_i \alpha_i \right)^{1/n} = \CGbound.\]
\end{proof}

\newcommand{\alphaBMV}{\alpha^{\mathrm{BMV}}}
\newcommand{\SBMV}{S^{\mathrm{BMV}}}
\newcommand{\uBMV}{u^{\mathrm{BMV}}}

\newcommand{\alphaS}{\alpha^S}
\newcommand{\uS}{u^S}

For a set $S$ of more than $m - n$ goods, let $P_S$ be the following minimization problem in variables $\alpha_i$ and $u_j$.
\begin{align*}      \text{minimize}&& f_S(\alpha,u) \assign \sum_{j \not\in S} \ln u_j + &(n - \abs{\overline{S}}) \ln a(S) + \sum_i \ln \alpha_i\\
  \text{subject to}&&    u_j &\ge u_{ij}/\alpha_i &&\text{for all $i$ and $j$}\\
                   && u_j &\ge a(S) &&\text{for $j \not\in S$}
\end{align*}

If $P_S$ is feasible, let $b_S$ be the optimum objective value and let $(\alphaS,\uS)$ be an optimum solution. If $S$ is the set of all goods, $\alpha_i = 1$ for all $i$ and $u_j = \max_i u_{ij}$ is feasible solution. Let $S^*$ be such that (1) $P_{S^*}$ is feasible, (2) $b_{S^*}$ is minimum, and (3) among the $S$ satisfying (1) and (2),  $S$ has largest cardinality. 

\begin{lemma}\label{Sstar is Salphastar} For $S = S^*$, $\uS_j > a(S) $ for $j \not\in S$, $\uS_j \le a(S)$ for $j \in S$, and $\uS_j = \max_i u_{ij}/\alpha_i$.  \end{lemma}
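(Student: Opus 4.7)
The plan is to combine first-order optimality for the convex program $P_{S^*}$ with two perturbations $S^* \to S^* \cup \{j\}$ and $S^* \to S^* \setminus \{j\}$ that play the minimality of $b_{S^*}$ against the maximal-cardinality tie-breaking in the definition of $S^*$. As a preparatory step I would record how each $u_j$ enters $f_{S^*}$: for $j \in S^*$ only through $a(S^*)$, with strictly positive partial derivative $1/a(S^*)$; for $j \notin S^*$ additionally as the term $\ln u_j$ with derivative $1/u_j$. Since all constraints on the $u_j$'s are lower bounds, every minimizer drives each $u_j$ down to its binding lower bound, so at $(\alphaS, \uS)$ we must have $\uS_j = \max_i u_{ij}/\alphaS_i$ for $j \in S^*$ and $\uS_j = \max\bigl(a(S^*),\, \max_i u_{ij}/\alphaS_i\bigr)$ for $j \notin S^*$. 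This already yields claim~(3) for $j \in S^*$ and reduces claims~(1) and~(3) for $j \notin S^*$ to the strict inequality $\max_i u_{ij}/\alphaS_i > a(S^*)$.

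For claim~(1) I would argue by contradiction. Assume $\uS_{j_0} = a(S^*)$ for some $j_0 \notin S^*$, and set $S' \assign S^* \cup \{j_0\}$ and $k \assign n - \abs{\overline{S^*}}$. The key identity is that appending to a multiset a value equal to its current mean leaves the mean unchanged, giving $a(S') = (k\,a(S^*) + a(S^*))/(k+1) = a(S^*)$. Hence $(\alphaS, \uS)$ stays feasible for $P_{S'}$, and $f_{S'}(\alphaS,\uS) = f_{S^*}(\alphaS,\uS)$: removing the term $\ln \uS_{j_0} = \ln a(S^*)$ from $\sum_{j \notin S^*} \ln u_j$ is exactly cancelled by bumping the coefficient of $\ln a(\cdot)$ from $k$ to $k+1$. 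Thus $b_{S'} \le b_{S^*}$; by minimality of $b_{S^*}$ equality must hold, contradicting the maximal-cardinality choice of $S^*$.

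For claim~(2) I would again argue by contradiction. Assume $\uS_{j_0} > a(S^*)$ for some $j_0 \in S^*$ and set $S' \assign S^* \setminus \{j_0\}$. If $|S^*| = m-n+1$ then $k=1$ and $a(S^*) = \sum_{j \in S^*} \uS_j \ge \uS_{j_0}$, which is already the desired contradiction; otherwise $|S'| > m-n$ and $P_{S'}$ is a legitimate program. By definition $(k-1)\,a(S') = k\,a(S^*) - \uS_{j_0}$, so the arithmetic mean of $\uS_{j_0}$ together with $k-1$ copies of $a(S')$ equals $a(S^*)$. Because $\uS_{j_0} > a(S^*) > a(S')$, these numbers are not all equal, and AM--GM yields the strict inequality $\uS_{j_0} \cdot a(S')^{k-1} < a(S^*)^k$, i.e.\ $\ln \uS_{j_0} + (k-1)\ln a(S') < k\ln a(S^*)$. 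Adding $\sum_{j \notin S^*} \ln u_j + \sum_i \ln \alphaS_i$ to both sides gives $f_{S'}(\alphaS,\uS) < f_{S^*}(\alphaS,\uS) = b_{S^*}$, and a quick feasibility check (the threshold dropped to $a(S') < a(S^*)$, and $\uS_{j_0} > a(S')$ covers the newly excluded good) produces a contradiction with the minimality of $b_{S^*}$.

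Claim~(3) for $j \notin S^*$ then follows by combining the first-order characterization from paragraph~1 with the strict inequality $\uS_j > a(S^*)$ just proved in claim~(1). The main obstacle I anticipate is the weighted AM--GM step in the proof of claim~(2) together with the boundary case $|S^*| = m-n+1$; once these are in hand the remaining ingredients are routine bookkeeping and tie-breaking.
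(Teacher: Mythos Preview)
Your proof is correct and follows essentially the same route as the paper: monotonicity of $f_{S^*}$ in each $u_j$ forces $u_j$ to its binding lower bound; an equality $\uS_{j_0}=a(S^*)$ for $j_0\notin S^*$ is ruled out by enlarging $S^*$ and invoking the maximal-cardinality tie-break; and $\uS_{j_0}>a(S^*)$ for $j_0\in S^*$ is ruled out by shrinking $S^*$ and applying strict AM--GM. The paper merely outsources the AM--GM computation to the proof of Lemma~\ref{characterization of Salpha} and omits the explicit boundary case $|S^*|=m-n+1$, but the argument is the same.
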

\begin{proof} Assume first that $\uS_h > a(S)$ for some $h \in S$. Consider $T = S - h$. Then $(\alphaS,\uS)$ is a feasible solution of $P_T$ and $b_T < b_S$ by the proof of Lemma~\ref{characterization of Salpha}. 

  Assume next that $\uS_h = a(S)$ for some $h \not\in S$. Let $T = S \cup h$. Then $(\alphaS,\uS)$ is a feasible solution for $P_T$ and $b_T = b_S$, a contradiction to the choice of $S$.

  Assume $\uS_j > \max_i u_{ij}/\alpha_i$ for some $j$. Since $\uS_j > a(S)$ if $j \not\in S$, we may decrease $\uS_j$, staying feasible and decreasing the objective. 
\end{proof}

\begin{lemma} Let $S = S^*$. Then $(\alphaS,S)$ defines the BMV-bound. \end{lemma}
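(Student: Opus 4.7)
The plan is to establish the equality $\BMVbound = \bigl(\exp(b_{S^*})\bigr)^{1/n}$, with the outer minimum in the definition of $\BMVbound$ attained at $(\alphaS, S)$ for $S = S^*$. The proof splits into two inequalities, each built around Lemma~\ref{Sstar is Salphastar}, which pins down the optimum of $P_{S^*}$ in the convenient form $\uS_j = \max_i u_{ij}/\alphaS_i$ with $\uS_j > a(S^*)$ for $j \not\in S^*$.

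A preliminary observation is that $f_S(\alpha, u)$ is strictly increasing in every $u_j$: directly through $\ln u_j$ for $j \not\in S$, and through $\ln a(S)$ (which grows with $u_j$ for $j \in S$) otherwise. Consequently, at the optimum of $P_S$ each $u_j$ is driven down to $\max_i u_{ij}/\alpha_i$ whenever this exceeds the lower bound $a(S)$ that applies for $j \not\in S$. This observation is the bridge between $P_S$ and the BMV formulation, in which the $u_j$'s are not variables but are fixed to $\max_i u_{ij}/\alpha_i$.

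For the direction $\BMVbound \le \bigl(\exp(b_{S^*})\bigr)^{1/n}$: by Lemma~\ref{Sstar is Salphastar} we have $\uS_j > a(S^*)$ for $j \not\in S^*$, which is precisely the condition $S^* \in \calS_{\alphaS}$ (combined with $|S^*| > m-n$, built into the definition of $P_{S^*}$). Hence $(\alphaS, S^*)$ is a feasible BMV pair. Moreover, the same lemma gives $\uS_j = \max_i u_{ij}/\alphaS_i$, so the $u_j$'s implicit in the BMV expression at $\alpha = \alphaS$ coincide with the $\uS_j$. Raising the BMV expression at $(\alphaS, S^*)$ to the $n$-th power therefore yields $\exp(f_{S^*}(\alphaS, \uS)) = \exp(b_{S^*})$.

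For the reverse direction, take any feasible BMV pair $(\alpha, T)$, so $T \in \calS_\alpha$, and set $u_j := \max_i u_{ij}/\alpha_i$. Then $u_j \ge u_{ij}/\alpha_i$ for all $i,j$ by construction, and $u_j > a(T)$ for $j \not\in T$ since $T \in \calS_\alpha$. Thus $(\alpha, u)$ is feasible for $P_T$, which in particular means $P_T$ is feasible and $b_T \le f_T(\alpha, u) = n \cdot \ln\bigl(\text{BMV expression at }(\alpha, T)\bigr)$. By the minimality of $b_{S^*}$ among all feasible $b_T$, we have $b_{S^*} \le b_T$, whence the BMV expression at every feasible $(\alpha, T)$ is at least $\bigl(\exp(b_{S^*})\bigr)^{1/n}$. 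Taking the infimum gives $\BMVbound \ge \bigl(\exp(b_{S^*})\bigr)^{1/n}$, which together with the previous paragraph establishes equality and the attainment at $(\alphaS, S^*)$.

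The main obstacle is aligning the two different ways of treating $u$: as a free variable subject to inequalities in $P_S$, versus as a deterministic function $u_j = \max_i u_{ij}/\alpha_i$ of $\alpha$ in the BMV formulation. The monotonicity of $f_S$ in $u$ together with Lemma~\ref{Sstar is Salphastar} is exactly what reconciles these two viewpoints, so once those are in hand the argument is essentially bookkeeping.
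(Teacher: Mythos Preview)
Your proof is correct and follows essentially the same two-inequality structure as the paper's: one direction feeds the optimal BMV pair into a $P_T$ to get $b_{S^*} \le n\ln\BMVbound$, and the other uses Lemma~\ref{Sstar is Salphastar} to certify that $(\alphaS,S^*)$ is a feasible BMV pair realizing $\exp(b_{S^*})^{1/n}$. The only minor difference is that the paper additionally appeals to the uniqueness in Lemma~\ref{characterization of Salpha} to conclude $S_{\alphaS}=S^*$, whereas you sidestep this by noting that mere feasibility of $(\alphaS,S^*)$ already yields the needed inequality.
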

\begin{proof} Let $(\alphaBMV,\SBMV)$ define the BMV-bound and let $\uBMV_j = \max_i u_{ij}/\alphaBMV_i$ for all $j$.  Then $(\alphaBMV,\uBMV)$ is a feasible solution of $P_{\SBMV}$ and
  \[ f_{\SBMV}(\alphaBMV,\uBMV) = \frac{1}{n} \ln \BMVbound.\]
  Therefore
  \[ b_{S^*} \le b_{\SBMV} \le \frac{1}{n} \ln \BMVbound. \]

  Conversely, let $S = S^*$, let $(\alphaS,\uS)$ be an optimal solution to $P_S$, and let $S_{\alphaS}$ be the set minimizing the BMV-bound for $\alphaS$  Then 
$S_{\alphaS} = S^*$ by Lemma~\ref{Sstar is Salphastar} and hence
  \[       \frac{1}{n} \ln \BMVbound \le b_{S^*}.\]
  \end{proof}

\begin{lemma} $\CGbound \le \BMVbound$. \end{lemma}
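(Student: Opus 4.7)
The plan is to reverse-engineer a spending-restricted Fisher-market equilibrium from the BMV-optimal configuration $(\alphaS,\uS)$ at $S = S^*$, and then read off the CG-bound and check that it matches $\BMVbound$. Set $a = a(S^*)$ and define prices $p_j = \uS_j/a$ and MBB-ratios $\tilde\alpha_i = a \cdot \alphaS_i$. By Lemma~\ref{Sstar is Salphastar}, $p_j > 1$ for $j \notin S^*$ and $p_j \le 1$ for $j \in S^*$, so $\lS = \overline{S^*}$ and $\sS = S^*$ (any tie $p_j = 1$ contributes a factor of $1$ and can be placed on either side without changing the CG-bound).

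To produce the allocation I would appeal to the KKT conditions of $P_{S^*}$. After the change of variables $\alpha_i = e^{a_i}$, $u_j = e^{w_j}$, the objective becomes a sum of linear terms and a log-sum-exp term, and the constraints become affine or log-sum-exp; hence $P_{S^*}$ is a convex program and Slater's condition is easy to verify. Let $\lambda_{ij} \ge 0$ be the multipliers for $u_j \ge u_{ij}/\alpha_i$; the multipliers for the constraints $u_j \ge a(S)$, $j \notin S^*$, vanish since those are strictly inactive by Lemma~\ref{Sstar is Salphastar}. Stationarity, together with $\partial a(S)/\partial u_j = 1/(n - \abs{\overline{S}})$ for $j \in S$, yields $\sum_j \lambda_{ij} u_{ij}/\alphaS_i = 1$ for every $i$, $\sum_i \lambda_{ij} = 1/\uS_j$ for $j \notin S^*$, and $\sum_i \lambda_{ij} = 1/a$ for $j \in S^*$; complementary slackness gives $\lambda_{ij} > 0 \Rightarrow \uS_j = u_{ij}/\alphaS_i$.

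Setting $x_{ij} = a \lambda_{ij}$, I would then verify the four equilibrium conditions, each flowing directly from one of the identities above: budgets sum to $1$ via $\sum_j x_{ij} p_j = \sum_j \lambda_{ij} \uS_j = \sum_j \lambda_{ij} u_{ij}/\alphaS_i = 1$ (using complementary slackness); $x_{ij} > 0$ forces $u_{ij}/p_j = a\alphaS_i = \tilde\alpha_i$, and $\tilde\alpha_i$ is an upper bound on all ratios by feasibility of $(\alphaS,\uS)$; small goods clear, $\sum_i x_{ij} = a \sum_i \lambda_{ij} = 1$ for $j \in S^*$; large goods receive one unit of money, $\sum_i x_{ij} p_j = \uS_j \sum_i \lambda_{ij} = 1$ for $j \notin S^*$. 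The CG-bound read off this equilibrium is then
\[ \left( \prod_{j \in \lS} p_j \prod_i \tilde\alpha_i \right)^{1/n} = \left( \prod_{j \notin S^*} \frac{\uS_j}{a} \cdot a^n \prod_i \alphaS_i \right)^{1/n} = \left( \prod_{j \notin S^*} \uS_j \cdot a^{n - \abs{\overline{S^*}}} \prod_i \alphaS_i \right)^{1/n} = \BMVbound. \]

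The main obstacle is the KKT step. One has to (i) argue convexity via the exponential substitution to guarantee that KKT is both necessary and sufficient, and (ii) book-keep the two different roles of $a(S)$---as part of the objective through the $u_j$ with $j \in S$, and as the right-hand side of the strictly inactive constraints $u_j \ge a(S)$ for $j \notin S$---to extract the stationarity equations correctly. Once the multipliers are in hand, the factor $a$ in the definition $x_{ij} = a\lambda_{ij}$ is forced by the small-goods market-clearing equation, and the remaining verifications are essentially one-line computations; a minor side-point is noting that the CG-bound is invariant across the equilibria we construct, so matching it to $\BMVbound$ on any one of them is enough.
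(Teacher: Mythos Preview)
Your proposal is correct and follows essentially the same route as the paper: take the optimizer $(\alphaS,\uS)$ of $P_{S^*}$, write down the KKT conditions, and use the multipliers to manufacture a spending-restricted Fisher equilibrium with prices $p_j=\uS_j/a(S^*)$ whose CG-value equals $\BMVbound$. Your multipliers $\lambda_{ij}$ are the paper's $z_{ij}\alphaS_i$, so $x_{ij}=a\lambda_{ij}$ coincides with the paper's $x_{ij}=a(S)z_{ij}\alphaS_i$; the only notable addition is that you explicitly justify the applicability of KKT via the convexifying substitution $\alpha_i=e^{a_i}$, $u_j=e^{w_j}$, which the paper leaves implicit.
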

\begin{proof}
Let $S = S^*$ and let $(\alphaS,\uS)$ be an optimal solution of problem $P_S$. We have shown above that $(\alphaS,S)$ defines the BMV-bound, and that $\uS_j > a(S) $ for $j \not\in S$, $\uS_j \le a(S)$ for $j \in S$, and $\uS_j = \max_i u_{ij}/\alpha_i$. Let $k = n - \abs{\overline{S}}$. 

The KKT conditions are necessary conditions for the optimum. Let $z_{ij}\ge 0$ for all $i$ and $j$, and $y_j \ge 0$ for $j \not\in S$ be the multipliers. Then we need to have (write $u_j \alpha_i \ge u_{ij}$ for the inequalities) 
\begin{align*}      1/u^S_j &= \sum_i  z_{ij}\alpha^S_i  + y_j &&\text{for $j \not\in S$}\\
  \frac{1}{a(S)} &= \sum_i  z_{ij}\alpha^S_i &&\text{for $j \in S$}\\
  1/\alpha^S_i &= \sum_j z_{ij} u^S_j   &&\text{for all $i$}\\
  z_{ij} > 0 &\Rightarrow \alpha^S_i u^S_j = u_{ij} &&\text{for all $i$ and $j$}\\
  y_j > 0 &\Rightarrow u^S_j = a(S)    &&\text{for all $j \not\in S$}.
\end{align*}

\noindent
Define $p_j = u^S_j/a(S)$ and $x_{ij} = a(S) z_{ij} \alpha^S_i $ and call $p_j$ the price of good $j$ and $x_{ij}$ the fraction of good $j$ allocated to agent $i$. The bang-per-buck ratio of agent $i$ is
\[ \alpha_i = \max_j \frac{u_{ij}}{p_j} = \frac{\alphaS_i \uS_j}{\uS_j/a(S)} = a(S) \alphaS_i.\]
We now rewrite and interpret the optimality conditions. 
\begin{itemize}
\item Since $u^S_j > a(S)$ for $j \not\in S$, $y_j = 0$ for $j \not\in S$.
\item The first condition becomes $1 = \sum_i x_{ij} p_j$, i.e., exactly one unit of money is spent on each good $j \not\in S$. Note that $p_j > 1$ for such goods. \remove{, $\sum_i x_{ij} \le 1$, i.e., goods $j \not\in S$ are not overallocated.}
\item The third condition becomes $1 = \sum_j x_{ij} p_j$, i.e., every agent spends exactly one unit of money.
\item The second condition becomes $\sum_i x_{ij} = 1$ for all $j \in S$, i.e. goods in $S$ are completely allocated, but not overallocated.
\item $x_{ij}>0$ implies $z_{ij}>0$ which in turn implies that $\alpha^S_i u^S_j = u_{ij}$. Hence $u_{ij}/p_j=a(S)\alpha^S_i=\alpha_i$ which means that good $j$ is allocated to $i$ only if it has the maximum bang-per-buck ratio.
\end{itemize}
This shows that the pair $(x,p)$ is a solution to the spending restricted Fisher market and the CG-bound for this solution is
\remove{satisfies the constraints of the restricted Fisher-market. We next show that it is the optimal solution. We have 
\[  u_i(x_i) = \sum_j u_{ij} x_{ij} = \sum_j a(S) u_{ij} z_{ij} \alpha^S_i = (\alpha^S_i)^2 a(S) \sum_j \uS_j z_{ij} = a(S) \alpha^S_i.\]
The next to last equality holds because of the fourth condition. Let $y$ be any other allocation satisfying the constraints of the restricted Fisher market. Then
\[ u_i(y_i) = \sum_j u_{ij} y_{ij} \le \sum_j \uS_j \alpha^S_i y_{ij} = \alpha^S_i \sum_j \uS_j y_{ij} = \alpha^S_i \sum_j p_j y_{ij} \le \uS \alpha^S_i.\]
and hence $(x,p)$ is the optimal solution to the restricted Fisher market.

We are now ready to compute the CG-bound. A good $j \not\in S$ has a price $p_j = \uS_j/a(S) > 1$ and a good  $j \in S$ has a price $p_j = \uS_j/a(S) \le 1$. 
Thus}
\[ \CGbound \le  \prod_{j \not\in S} \frac{\uS_j}{a(S)} \cdot \prod_i a(S) \alphaS_i = \prod_{j \not\in S} \uS_j \cdot a(S)^{n - \abs{\overline{S}}} \cdot \prod_i \alphaS_i = \BMVbound.\]
\end{proof}

We have now shown the main theorem of this section.

\begin{theorem} The CG-bound and the BMV-bound have the same value. \end{theorem}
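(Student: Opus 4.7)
The plan is to prove the theorem by combining the two inequalities $\BMVbound \le \CGbound$ and $\CGbound \le \BMVbound$ established in the preceding lemmas. Since both directions are already available, the theorem reduces to a one-line argument; the interesting work is in explaining why each direction holds, which my sketch reflects below.

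For the direction $\BMVbound \le \CGbound$, I would start from any solution $(x,p)$ of the spending-restricted Fisher market and use it to exhibit a feasible choice of scaling vector and set in the definition of $\BMVbound$. The natural choice is $\alpha_i = \max_j u_{ij}/p_j$, which makes the scaled utilities $\bu_{ij} = u_{ij}/\alpha_i$ satisfy $\bu_{ij} \le p_j$ with equality whenever $x_{ij}>0$; hence $\bu_j = \max_i \bu_{ij} = p_j$ for every $j$. Then I would take $S = \sS$, the set of small goods. The key computation is that the money-conservation condition $n = \sum_{j \in \sS} p_j + \abs{\lS}$ yields $a(\sS) = 1$, and since $p_j \le 1$ on $\sS$ and $p_j > 1$ on $\lS$, the set $\sS$ lies in $\calS_\alpha$. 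Plugging these values into the $\BMVbound$ formula collapses the factor $a(\sS)^{n - \abs{\lS}}$ to $1$, leaving exactly the expression defining $\CGbound$.

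For the reverse direction $\CGbound \le \BMVbound$, the strategy is to take an optimizer $(\alphaS, \uS)$ of the nonlinear program $P_S$ for $S = S^*$ (the set realizing the BMV bound) and to reconstruct a spending-restricted equilibrium from its KKT multipliers. I would set prices $p_j = \uS_j/a(S)$ and fractional allocations $x_{ij} = a(S)\, z_{ij}\, \alphaS_i$, where the $z_{ij}$ are the KKT multipliers of the constraints $u_j \alpha_i \ge u_{ij}$. The main obstacle is reading off from the KKT stationarity and complementary slackness conditions that $(x,p)$ actually satisfies all four Fisher-market axioms: exactly one unit is spent on each large good (from the stationarity equation for $u_j$, $j \not\in S$), every agent spends exactly one unit (from the stationarity equation for $\alpha_i$), each small good is fully sold (from the stationarity for $j \in S$ combined with $a(S)\, z_{ij}\,\alpha^S_i$ summing to $1$), and allocations respect MBB (from $z_{ij}>0 \Rightarrow \alphaS_i \uS_j = u_{ij}$). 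Once this verification is done, the product defining $\CGbound$ for $(x,p)$ unwinds to exactly $\BMVbound$ because $\prod_{j \not\in S} p_j = \prod_{j \not\in S} \uS_j / a(S)^{n - \abs{\overline{S}}}$ and $\prod_i \alpha_i = a(S)^n \prod_i \alphaS_i$, with the $a(S)$ factors telescoping correctly.

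Thus the theorem follows by chaining the two inequalities. I expect the only real technical subtlety is the careful bookkeeping in the KKT-to-equilibrium translation, in particular checking that $y_j = 0$ for $j \not\in S$ (which requires $\uS_j > a(S)$ there, guaranteed by Lemma~\ref{Sstar is Salphastar}) and that the selection $S = S^*$ maximizing cardinality among BMV-optimizers is precisely what is needed for the KKT multipliers to produce a valid fractional allocation rather than an over- or under-allocated one.
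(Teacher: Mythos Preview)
Your proposal is correct and follows essentially the same approach as the paper: it combines the two lemmas $\BMVbound \le \CGbound$ (via the scaling $\alpha_i = \max_j u_{ij}/p_j$ and the choice $S = \sS$ with $a(\sS)=1$) and $\CGbound \le \BMVbound$ (via the KKT conditions for $P_{S^*}$, setting $p_j = \uS_j/a(S)$ and $x_{ij} = a(S)\,z_{ij}\,\alphaS_i$). One minor slip: in your telescoping computation the exponent on $a(S)$ in $\prod_{j \not\in S} p_j$ should be $\abs{\overline{S}}$ rather than $n - \abs{\overline{S}}$, but the cancellation against $\prod_i \alpha_i = a(S)^n \prod_i \alphaS_i$ still yields the correct $a(S)^{n - \abs{\overline{S}}}$ in the end.
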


In~\cite[Lemma 8]{CDGJMVY17} yet another mathematical program is given for the same bound. We include it for completeness.
\begin{align*}      \text{maximize }&& &\hspace{-2em}\frac{\prod_i \prod_j u_{ij}^{b_{ij}}}{\prod_j q_j^{q_j}} \\
  \text{subject to}&&   \sum_i b_{ij} &= q_j &&\text{for all $j$}\\
                   && \sum_{j} b_{ij}& = 1 &&\text{for $i$}\\
&& q_j &\le 1&&\text{for all $j$}\\
&&   b_{ij} & \ge 0 && \text{for all $i$ and $j$}.
\end{align*}
In the optimal solution to this program $b_{ij}$ is the amount of money spent by agent $i$ on good $j$, and $q_{j}$ is the total spending on good $j$ in the spending restricted market.

\paragraph*{Acknowledgement:} We want to thank Hannaneh Akrami for a careful reading of the paper.

\renewcommand{\htmladdnormallink}[2]{#1}

\begin{thebibliography}{10}

\bibitem{AnariGSS17}
Nima Anari, Shayan~Oveis Gharan, Amin Saberi, and Mohit Singh.
\newblock Nash social welfare, matrix permanent, and stable polynomials.
\newblock In {\em {ITCS}}, pages 36:1--36:12, 2017.

\bibitem{Anari:SODA2018}
Nima Anari, Tung Mai, Shayan~Oveis Gharan, and Vijay~V. Vazirani.
\newblock Nash social welfare for indivisible items under separable,
  piecewise-linear concave utilities.
\newblock In {\em {SODA}}, pages 2274--2290, 2018.

\bibitem{DBLP:journals/corr/BarmanMV17}
Siddharth Barman, Sanath Kumar~Krishna Murthy, and Rohit Vaish.
\newblock Finding fair and efficient allocations.
\newblock {\em CoRR}, abs/1707.04731, 2017.
\newblock to appear in EC 2018.

\bibitem{CaragiannisKMP016}
Ioannis Caragiannis, David Kurokawa, Herv{\'{e}} Moulin, Ariel~D. Procaccia,
  Nisarg Shah, and Junxing Wang.
\newblock The unreasonable fairness of maximum {N}ash welfare.
\newblock In {\em {EC}}, pages 305--322, 2016.

\bibitem{CDGJMVY17}
Richard Cole, Nikhil~R. Devanur, Vasilis Gkatzelis, Kamal Jain, Tung Mai,
  Vijay~V. Vazirani, and Sadra Yazdanbod.
\newblock Convex program duality, {F}isher markets, and {N}ash social welfare.
\newblock In {\em {EC}}, pages 459--460, 2017.

\bibitem{Cole-Gkatzelis}
Richard Cole and Vasilis Gkatzelis.
\newblock Approximating the {N}ash social welfare with indivisible items.
\newblock In {\em {STOC}}, pages 371--380, 2015.

\bibitem{Edmonds-Karp}
J.~Edmonds and R.M. Karp.
\newblock Theoretical improvements in algorithmic efficiency for network flow
  problems.
\newblock {\em Journal of the ACM}, 19:248--264, 1972.

\bibitem{EG59}
E.~Eisenberg and D.~Gale.
\newblock Consensus of subjective probabilities: The pari-mutuel method.
\newblock {\em The Annals Math. Statist.}, 30:165--168, 1959.

\bibitem{GargHoeferMehlhornNashWelfare}
Jugal Garg, Martin Hoefer, and Kurt Mehlhorn.
\newblock \htmladdnormallink{Approximating the Nash Social Welfare with
  Budget-Additive Valuations}{http://arxiv.org/abs/1707.04428}.
\newblock In {\em SODA 2018}, pages 2326--2340, 2018.

\bibitem{DBLP:journals/ipl/Lee17}
Euiwoong Lee.
\newblock {APX}-hardness of maximizing {N}ash social welfare with indivisible
  items.
\newblock {\em Inf. Process. Lett.}, 122:17--20, 2017.

\bibitem{CertifyingAlgs}
R.M. McConnell, K.~Mehlhorn, S.~N{\"a}her, and P.~Schweitzer.
\newblock \htmladdnormallink{Certifying
  algorithms}{http://www.mpi-sb.mpg.de/\~{}mehlhorn/ftp/CertifyingAlgorithms.pdf}.
\newblock {\em Computer Science Review}, 5(2):119--161, 2011.

\bibitem{Moulin03}
Herv{\'{e}} Moulin.
\newblock {\em Fair division and collective welfare}.
\newblock {MIT} Press, 2003.

\bibitem{Nash}
J.~Nash.
\newblock The bargaining problem.
\newblock {\em Econometrica}, 18:155--162, 1950.

\bibitem{DBLP:journals/aamas/NguyenNRR14}
Nhan{-}Tam Nguyen, Trung~Thanh Nguyen, Magnus Roos, and J{\"{o}}rg Rothe.
\newblock Computational complexity and approximability of social welfare
  optimization in multiagent resource allocation.
\newblock {\em Autonomous Agents and Multi-Agent Systems}, 28(2):256--289,
  2014.

\end{thebibliography}

\end{document}